\tikzset{
    state/.style={
           rectangle,
           rounded corners,
           draw=black, very thick,
           minimum height=2em,
           inner sep=2pt,
           text centered,
           },
}
\newtheorem{theorem}{Theorem}[section]
\newtheorem{proposition}[theorem]{Proposition}
\newtheorem{corollary}[theorem]{Corollary}
\newtheorem{definition}[theorem]{Definition}
\newtheorem{remark}[theorem]{Remark}
\newcommand{\cal}{\mathcal}
\newcommand{\mc}[1]{{\mathcal #1}}
\newcommand\bbR{{\mathbb R}}
\newcommand\bbZ{{\mathbb Z}}
\newcommand\al{\alpha}
\newcommand\lang{\langle\langle}
\newcommand\rang{\rangle\rangle}
\newcommand\bbE{{\mathbb E}}
\newcommand\R{{\mathbb R}}
\newcommand \ga{\gamma}
\newcommand \om{\omega}
\newcommand \la{\lambda}
\newcommand\Om{\Omega}
\renewcommand{\ge}{\geqslant}
\renewcommand{\le}{\leqslant}
\newcommand{\dd  }{\mathrm{d}}
\renewcommand{\hat}{\widehat}
\renewcommand{\tilde}{\widetilde}
\renewcommand{\bar}{\overline}
\numberwithin{equation}{section}
\newcommand{\pv}{{\bf p}}
\newcommand{\qv}{{\bf q}}
\newcommand{\cF}{{\mathcal F}} 
\begin{document}

\title[Heat equation with boundaries]{Heat flow in a
  periodically forced, thermostatted chain II}

 \author{Tomasz Komorowski}
 \address{Tomasz Komorowski\\Institute of Mathematics,
   Polish Academy
   Of Sciences\\Warsaw, Poland \\
  \emph{and }  Institute of Mathematics, Maria Curie-Sk\l odowska University\\
   Lublin, Poland.} 
\email{{\tt tkomorowski@impan.pl}}

\author{Joel L. Lebowitz}
\address{Joel Lebowitz, Departments of Mathematics and Physics,  Rutgers University}
\email{lebowitz@math.rutgers.edu}
 
 \author{Stefano Olla}
 \address{Stefano Olla, CEREMADE,
   Universit\'e Paris Dauphine - PSL Research University \\
\emph{and}  Institut Universitaire de France\\
\emph{and} GSSI, L'Aquila}
  \email{olla@ceremade.dauphine.fr}


\begin{abstract}
  We derive a macroscopic heat equation for the temperature of
  a pinned harmonic chain subject
  to a periodic force at its right side and in contact with a heat bath at its left side.
  The microscopic dynamics in the bulk is given by
  the Hamiltonian equation of motion plus a reversal of the velocity of a particle
occurring independently for each particle at exponential times, with rate $\gamma$.
The latter produces a finite heat conductivity. Starting with an initial
probability distribution for a chain of $n$ particles we compute 
the current and the  local temperature given by the
expected value of the local energy. Scaling space and time diffusively yields, in the $n\to+\infty$ limit,
the heat equation for the macroscopic temperature profile $T(t,u),$
$t>0$, $u \in [0,1]$. It is to be solved for initial conditions
$T(0,u)$ and specified $T(t,0)=T_-$, the temperature
of the left heat reservoir and a fixed heat flux $J$,
entering the system at $u=1$. $|J|$ equals the work done by the
periodic force which is computed explicitly for each $n$.
\end{abstract}



\maketitle


\section{Introduction}
\label{intro}

The emergence of the heat equation from a microscopic dynamics after a diffusive
rescaling of space and time is a challenging mathematical problem
in non-equilibrium statistical mechanics \cite{bll00}.
Here we study this problem in the context of conversion of work into heat
in a simple model: a pinned harmonic chain.
The system is in contact at its left end with a thermal reservoir
at temperature $T_-$ which acts
on the leftmost particle via a Langevin force (Ornstein-Uhlenbeck process).
The rightmost particle is acted on by a deterministic periodic force
which does work on the system.
The work pumps energy into the system with the energy then
flowing into the reservoir in the form of heat.

To describe this flow we need to know the heat conductivity of the system.
As {it} is well known, the harmonic crystal has an infinite heat conductivity \cite{RLL67}.
To model realistic systems with finite heat conductivity we add to the harmonic dynamics
a random velocity reversal.
{It} models in a simple way the various dissipative mechanisms in real systems
and produces a finite conductivity (cf. \cite{bo1}, \cite{bbjko}).

In paper \cite{klo22}, which is the first part of the present work, we studied this system in the limit
$t\to\infty$, {see Section \ref{sec-cmp} for rigorous statements
  of the main
results obtained there.}
In this limit the probability distribution
{ of the phase space configurations} is periodic with the
period of the external force, see Theorem \ref{periodic} below. We also showed that with a proper scaling
of the force and period the averaged temperature
profile satisfies the stationary heat
equation with an  explicitly given heat current. 
In the present paper we study the time dependent evolution of the system,
on the diffusive
time scale, starting with some specified initial distribution.
We derive a heat equation for the temperature profile of the system.

{  The periodic forcing generates a Neumann type  boundary condition for the
macroscopic heat equation, so that the  gradient of the temperature
at the boundary must satisfy Fourier law with the boundary energy current
generated by the work of the periodic forcing (see \eqref{eq:5} below).
On the left side the boundary condition is given by the assigned value
$T_-$, the temperature of the heat bath. 
As $t\to\infty$ the profile   converges to the  macroscopic profile  obtained in  \cite{klo22}. 

The energy diffusion in the harmonic chain on a finite lattice,
  with energy conserving noise and
  Langevin heat bath at different temperatures at the boundaries, have been previously
  considered \cite{bo05,bkll11,bkll12,kos3,lukk}.
But complete mathematical results, describing the time evolution of
the macroscopic  temperature  profile, have been obtained only for unpinned chains
\cite{bo05,kos3}.

This article gives the first proof of the heat equation for the pinned chain in a finite interval,
  and the method can be applied with different boundary conditions
  (see Remark \ref{rem-mixed}).
  {Investigation about energy transport in anharmonic chain
    under periodic forcing can be found in \cite{GL02}, \cite{JKLA09},
    and very recently in \cite{PBS22}}.
  In the review article \cite{klos22} we considered various extensions
  of the present results to
  unpinned, multidimensional and anharmonic dynamics.

\subsection{Structure of the article}
\label{sec:structure-article}

{We start Section \ref{sec:description} with the  precise
  description of  the  
dynamics of the oscillator chain. 
Then, as already mentioned, in  Section \ref{sec-cmp} } we give an account
of results obtained in \cite{klo22}. In Section
\ref{sec:state-results} we formulate our two main theorems:
\autoref{th4} about the limit current generated at the boundary by a periodic force, and
\autoref{th1} about the convergence of the {energy  profile} to the solution of
the heat equation with mixed boundary conditions.}

In Section \ref{sec:bound-energ}
we obtain a uniform bound on the total energy at any macroscopic time by
an entropy argument. As a corollary (cf. Corollary \ref{lem:bound2})
we obtain a uniform bound on
the time integrated energy current, with respect to the size of the system. 

Section \ref{sec:equip-energy-fluct} contains the proof of the equipartition of energy:
Proposition \ref{cor012912-21} shows that the limit profiles
of the kinetic and potential energy
are equal. {Furthermore, we show there}
the \emph{fluctuation-dissipation relation} (\eqref{eq:33}). It gives an exact
decomposition of the energy currents into a dissipative term
(given by a gradient of a local function) and a fluctuation term (given by
the generator of the dynamics applied to a local function).

The fluctuation-dissipation relation \eqref{eq:33} and equipartition of energy
\eqref{011312-21} are two of the ingredients for the proof of the main
\autoref{th1}. The third {component} is a local equilibrium result
for the limit covariance of the positions integrated in time. {It is
formulated} in Proposition \ref{prop010803-22},
for the covariances in the bulk, and in Proposition \ref{prop-boundaryeq},
for the boundaries. The local equilibrium {property} allows to identify correctly
the thermal diffusivity in the proof of \autoref{th1}, see \autoref{sec:proofth1}.

{The technical part of the argument is presented in the appendices:
  the proof of the local equilibrium  is given  \autoref{sec-proofs-le},
after the analysis of the time evolution of the {matrix for
the time integrated covariances of  positions and momenta}, carried out 
in \autoref{ev-covariance}. Both in  \autoref{ev-covariance} and  \autoref{sec-proofs-le}
we use results proven in \cite{klo22}, when possible.
\autoref{app-means} contains the proof of the current
asymptotics (\autoref{th4}),
that involves only the dynamics of the averages of the
configurations. Appendix \ref{A:unique} contains the proof of the uniqueness of
measured  valued
solutions of the Dirichlet-Neumann initial-boundary problem for the
heat equation, satisfied by the  limiting energy profile. Finally, in
Appendix \ref{secF} we present an argument for the relative entropy
inequality stated in Proposition \ref{cor021211-19}.
}

\section{Description of the model}
\label{sec:description}

We consider a pinned chain of $n+1$-harmonic oscillators in contact on the left with
a Langevin heat bath at temperature $T_-$, and with a periodic force acting on
the last particle on the right.
The configuration of particle positions and momenta are specified by
\begin{equation}
  \label{eq:1}
  (\mathbf q, \mathbf p) =
  (q_0, \dots, q_n, p_0, \dots, p_n) \in \Om_n:=\R^{n+1}\times\R^{n+1}. 
\end{equation}
We should think of the positions $q_x$ as relative displacement from a point,
say $x$ in a finite lattice $\{0,1,\ldots,n\}$.
The total energy of the chain is given  by the Hamiltonian:
$\mathcal{H}_n (\mathbf q, \mathbf p):=
\sum_{x=0}^n {\cal E}_x (\mathbf q, \mathbf p),$
where the energy of particle $x$ is defined  by
\begin{equation}
\label{Ex}
{\cal E}_x (\mathbf q, \mathbf p):=  \frac{p_x^2}2 +
\frac12 (q_{x}-q_{x-1})^2 +\frac{\om_0^2 q_x^2}{2} ,\
\quad x = 0, \dots, n,
\end{equation}
where $\om_0>0$ is   the pinning strenght.
We adopt the convention that $q_{-1}:=q_0$.

\begin{figure}[h!]

\begin{center}
\begin{tikzpicture}[scale = 1.0]
\node[circle,fill=black,inner sep=1.2mm] (e) at (0,0) {};
\node[circle,fill=black,inner sep=1.2mm] (f) at (2,0) {};
\node[circle,fill=black,inner sep=1.2mm] (g) at (3.5,0) {};
\node[circle,fill=black,inner sep=1.2mm] (h) at (4.8,0) {};
\node[circle,fill=black,inner sep=1.2mm] (i) at (6.8,0) {};
\node[circle,fill=black,inner sep=1.2mm] (j) at (8.5,0) {};
\node[circle,fill=black,inner sep=1.2mm] (k) at (10,0) {};
\node[circle,fill=black,inner sep=1.2mm] (l) at (11.3,0) {};

\draw[dashed] (2,0) -- (3.5,0);
\draw[dashed] (8.5,0) -- (10,0);
\draw[ultra thick, blue, ->] (11.3,0) -- (12.5,0);


\draw (0, -0.6) node[] {\color{magenta} $q_{0}$};
\draw (2, -0.6) node[] {\color{magenta} $q_{1}$};
\draw (11.3, -0.6) node[] {\color{magenta} $q_{n}$};
\draw (3.5, -0.6) node[] {\color{magenta} $q_{x-1}$};
\draw (4.8, -0.6) node[] {\color{magenta} $q_{x}$};
\draw (6.8, -0.6) node[] {\color{magenta} $q_{x+1}$};
\draw[dashed] (3.5,-1.5) -- (3.5,-1);
\draw[dashed] (4.8,-1.5) -- (4.8,-1);

\draw (-0.6,1.8) node[] {\large\color{red}$T_-$};
\draw (12.4,-0.4) node[] {\color{blue}${\mathcal F}_n(t)$};

\draw[decoration={aspect=0.3, segment length=3mm, amplitude=3mm,coil},decorate] (0,0) -- (2,0);
\draw[decoration={aspect=0.3, segment length=3mm, amplitude=1mm,coil},decorate] (0,0) -- (0,-1.5);
\draw[decoration={aspect=0.3, segment length=1.8mm, amplitude=3mm,coil},decorate] (3.5,0) -- (4.9,0);
\draw[decoration={aspect=0.3, segment length=3mm, amplitude=1mm,coil},decorate] (3.5,0) -- (3.5,-1.5);
\draw[decoration={aspect=0.3, segment length=3mm, amplitude=3mm,coil},decorate] (4.8,0) -- (6.9,0);
\draw[decoration={aspect=0.3, segment length=3mm, amplitude=1mm,coil},decorate] (4.8,0) -- (4.8,-1.5);
\draw[decoration={aspect=0.3, segment length=2.5mm, amplitude=3mm,coil},decorate] (6.8,0) -- (8.6,0);
\draw[decoration={aspect=0.3, segment length=3mm, amplitude=1mm,coil},decorate] (6.8,0) -- (6.8,-1.5);
\draw[decoration={aspect=0.3, segment length=1.8mm, amplitude=3mm,coil},decorate] (10,0) -- (11.4,0);
\draw[decoration={aspect=0.3, segment length=3mm, amplitude=1mm,coil},decorate] (10,0) -- (10,-1.5);
\draw[decoration={aspect=0.3, segment length=3mm, amplitude=1mm,coil},decorate] (11.4,0) -- (11.4,-1.5);

\fill [pattern = north east lines, pattern color=red] (-0.3,0.8) rectangle (0.3,2);
\node (c) at (-0.3,1.5) {};
\node (d) at (-0.1,0.1) {};
 \node (a) at (11.6,1.5) {};
 \node (b) at (11.4,0.1) {};

\draw (c) edge[dashed, ultra thick, red, ->, >=latex, bend right=60] (d);

\end{tikzpicture}
\end{center}
\end{figure}

The  microscopic dynamics of
the process $\{(\mathbf q(t), \mathbf p(t))\}_{t\ge0}$
describing the total chain is  given in the bulk by
\begin{equation} 
\label{eq:flip}
\begin{aligned}
  \dot   q_x(t) &= p_x(t) ,
  \qquad x\in \{0, \dots, n\},\\
  \dd   p_x(t) &=  \left(\Delta q_x(t)-\om_0^2 q_x(t)\right) \dd t-   2
  p_x(t-) \dd N_x(\gamma t),
  \quad x\in \{1, \dots, n-1\}
  \end{aligned} \end{equation}
and at the boundaries by 
\begin{align}
     \dd   p_0(t) &=   \; \Big(q_1(t)-q_0(t) - \om_0^2 q_0(t) \Big) \dd   t
                     -
                    2  \gamma p_0(t) \dd t
                    +\sqrt{4  \gamma T_-} \dd \tilde w_-(t),
                    \vphantom{\Big(} \label{eq:pbdf} \\
  \dd   p_n(t) &=  \; \Big(q_{n-1}(t) -q_n(t) -\om_0^2 q_n(t) \Big)  \dd   t  +\cF_n(t)
                 \dd t - 2  p_n(t-) \dd N_n(\gamma t).
                     \vphantom{\Big(}  
\notag
\end{align}
Here $\Delta$ is the Neumann discrete laplacian, corresponding to the choice
$q_{n+1}:=q_n$ and {$q_{-1}= q_0$}, {see \eqref{012901-23} below.} Processes $\{N_x(t), x=1,\ldots,n\}$
are independent Poisson of intensity $1$, while
$\tilde w_-(t)$ is a standard one dimensional Wiener process,
independent of the Poisson processes.
Parameter $\gamma>0$ 
regulates the intensity of the random perturbations
and the Langevin thermostat.
{We have choosen the same parameter in order to simplify notations,
  it does not affect the results {concerning} the macroscopic properties of the dynamics.}

We assume that the forcing $\cF_n(t)$ is given by
\begin{equation}
\label{Fnt}
 \cF_n(t)= \;\frac 1{\sqrt n} \cF\left(\frac{t}{ \theta}\right).
\end{equation} 
 where $ \cF(t)$ is a $1$-periodic function
such that
{\begin{equation}
  \label{eq:2}
  \int_0^1  \cF(t) \dd t = 0, \qquad  \int_0^1  \cF(t)^2 \dd t >
  0\quad\mbox{and}\quad \sum_{\ell\in\bbZ} |\widehat     \cF(\ell)|<+\infty.
\end{equation}
Here
\begin{equation}
\label{cF}
\hat    \cF(\ell)=\int_0^1 e^{-2\pi i\ell t }\cF(t)\dd t,\quad {\ell\in\bbZ},
\end{equation}
are the Fourier coefficients of the force.
Note that by \eqref{eq:2} we have $\hat   \cF(0)=0$.}



{For a given function $f:\{0,\ldots,n\}\to\bbR$ define the Neumann laplacian 
 \begin{equation}
   \label{NL}
   \Delta f_x:=f_{x+1}+f_{x-1}-2f_x,\,x=0,\ldots,n,
 \end{equation}
 with the convention $f_{-1}:=f_0$ and $f_{n+1}:=f_n$.}
The generator of the dynamics can be then written as
\begin{equation}
  \label{eq:7}
  \mathcal G_t =  \mathcal A_t +  \gamma S_{\text{flip}}
  + 2   \gamma S_-,
\end{equation}
where
\begin{equation}
  \label{eq:8}
  \mathcal A_t = \sum_{x=0}^n p_x \partial_{q_x}
  + \sum_{x=0}^n  (\Delta q_{x}-\om^2_0q_x) \partial_{p_x}
  +  \cF_n(t)  \partial_{p_n}
\end{equation}
and
\begin{equation}
  \label{eq:21}
   S_{\text{flip}} F (\qv,\pv) = \sum_{x=1}^{n}   \Big( F(\qv,\pv^x) - F(\qv,\pv)\Big),
 \end{equation}
 Here $F:\bbR^{2(n+1)}\to\bbR$ is a bounded and measurable function,
 $\pv^x$ is the velocity configuration with sign flipped at the
 $x$ component, i.e. $\pv^x=(p_0^x,\ldots,p_n^x)$, with $p_y^x=p_y$,
 $y\not =x$ and  $p_x^x=-p_x$. Furthermore,
 \begin{equation}
   \label{eq:10}
   S_- = T_- \partial_{p_0}^2 - p_0 \partial_{p_0}.
 \end{equation}
 
The {microscopic} energy currents are given by
\begin{equation}
\label{eq:current}
  \mathcal G_t \mathcal E_x(t)  = j_{x-1,x}(t) - j_{x,x+1}(t) ,
\end{equation} 
{with $\mathcal E_x(t):={\cal E}_x \big(\mathbf q(t), \mathbf
p(t)\big)$ and }
$$
j_{x,x+1}(t):=- p_x(t) \big(q_{x+1}(t) - q_x(t)\big) , \qquad \mbox{if }\quad x \in
\{0,...,n-1\}
$$  
and at the boundaries 
  \begin{equation} \label{eq:current-bound}
     j_{-1,0} (t):= 2 { \gamma} \left(T_- - p_0^2(t) \right),
      \qquad
  j_{n,n+1} (t):=    -  \cF_n(t)    p_n(t).
\end{equation}

\subsection{Summary of  results concerning periodic stationary state}

\label{sec-cmp}

{The present section is devoted to presentation of the results of
\cite{klo22} (some additional facts are contained in   \cite{klo22a}). They concern the case when the
chain is in its (periodic) stationary state. More precisely,
we say that the family of probability measures
$\{\mu_t^P, t\in[0,+\infty)\}$ constitutes a \emph{periodic stationary state}
  for  the chain described by \eqref{eq:flip} and \eqref{eq:pbdf} if
  it is
  a solution of the forward equation: { for any function $F$ in the domain of
    $\mathcal G_t$:
    \begin{equation}
      \partial_t \int F(\qv,\pv) \mu_t^P(\dd\qv,\dd\pv)
      = \int (\mathcal G_t F(\qv,\pv)) \mu_t^P(\dd\qv,\dd\pv),\label{eq:fwe}
\end{equation}
  }
such that $\mu_{t + \theta}^P=\mu_{t}^P$. }

{Given a measurable function $F:\bbR^{2(n+1)}\to\bbR$ we denote
\begin{equation}
\label{bar}
\lang F\rang:= \frac{1}{\theta}\int_0^{\theta}\dd t\int_{\bbR^{2(n+1)}}F(\qv,\pv)\mu_t^P(\dd\qv,\dd\pv),
\end{equation}
provided that  $|F(\qv,\pv)|$ is integrable
w.r.t. the respective  product measure.}

{It has been shown, see \cite[Theorem 1.1, Proposition A.1]{klo22}  and
also \cite[Theorem A.2]{klo22a}, that there exists a
unique periodic, stationary state. 
\begin{theorem}
\label{periodic}
For a fixed $n\ge1$ there exists a unique periodic stationary state
$\{\mu_s^P,s\in[0,+\infty)\}$ for the system \eqref{eq:flip}-\eqref{eq:pbdf}.
 The measures $\mu_s^P$ are absolutely continuous with respect to the Lebesgue
 measure $\dd\qv\dd\pv$ and the respective densities
 $\mu_s^P(\dd\qv,\dd\pv)=f_s^P(\qv,\pv) \dd\qv\dd\pv$ are
 strictly positive.
 {The time averages of all the second moments $\lang
   p_xp_y\rang$, $\lang
   p_xq_y\rang$  and $\lang
   q_xq_y\rang$ are finite and $\min_x \lang p_x^2\rang \ge T_-$.
 Furthermore, given an arbitrary initial probability distribution
 $\mu$ on $\bbR^{2(n+1)}$ and $(\mu_t)$
{the solution of \eqref{eq:fwe}}
such that $\mu_{0}=\mu$,  we have
\begin{equation}
    \label{012301-23a}
    \lim_{t\to+\infty}\|\mu_{t}-\mu_t^P\|_{\rm TV}=0.
 \end{equation}
 Here $\|\cdot\|_{\rm TV}$ denotes the total
  variation norm.}
\end{theorem}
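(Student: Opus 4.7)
Since the forcing $\cF_n(t)$ is $\theta$-periodic, my plan is to reduce the theorem to ergodicity of the time-$\theta$ transition kernel $P_\theta$ that sends $(\qv(0),\pv(0))$ to the law of $(\qv(\theta),\pv(\theta))$: a family $\{\mu_s^P\}$ is a $\theta$-periodic stationary state iff $\mu_0^P$ is invariant for $P_\theta^*$, and the remaining $\mu_s^P$ are obtained by propagating $\mu_0^P$ up to time $s\in[0,\theta)$. It then suffices to show that $P_\theta$ has a unique invariant measure $\mu_0^P$ with geometric TV convergence $(P_\theta^k)^*\mu\to\mu_0^P$, that the associated transition density is smooth and strictly positive, and that $\EE\mathcal H_n$ is controlled uniformly in time; the convergence \eqref{012301-23a} for general $t$ then follows from the integer-$\theta$ case by Feller continuity of $P_{0,t}$.

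For existence and the moment bounds I would use $V=\mathcal H_n$ as Lyapunov function. The identity
\begin{equation*}
\mathcal G_t\,\mathcal H_n = -2\gamma(p_0^2-T_-)+\cF_n(t)\,p_n
\end{equation*}
shows dissipation only at the left end, but since the full dynamics is linear in $(\qv,\pv)$, the mean and the covariance matrix of $(\qv,\pv)$ satisfy closed affine linear ODEs with constant (flip-renormalised) drift matrix and $\theta$-periodic forcing. Stability of this drift matrix, guaranteed by Langevin friction at $x=0$, pinning $\om_0>0$ and the flip-induced decay of off-diagonal momentum correlations, yields unique $\theta$-periodic solutions $m^P(t)$, $C^P(t)$ together with a contraction $\EE V(\qv(\theta),\pv(\theta))\le\rho\,\EE V(\qv(0),\pv(0))+K$ with $\rho<1$. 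Krylov--Bogolyubov applied to $P_\theta$ then gives an invariant measure, and the uniform bound on $\EE\mathcal H_n$ produces finiteness of all second moments $\lang p_xp_y\rang$, $\lang q_xp_y\rang$, $\lang q_xq_y\rang$. The bound $\min_x\lang p_x^2\rang\ge T_-$ follows from the stationary Lyapunov equation by writing $C^P=C^{eq}+\Delta C$, where $C^{eq}$ is the Gibbs covariance at temperature $T_-$ (recovered in the unforced case and satisfying $(C^{eq})_{p_xp_x}=T_-$) and $\Delta C$ solves a homogeneous Lyapunov equation sourced only by the forcing; the resulting Gramian representation shows $\Delta C\succeq 0$, so its diagonal momentum entries are nonnegative.

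For uniqueness and for the absolute continuity and strict positivity of the density, I would verify H\"ormander's bracket condition for the diffusive part $\mathcal A_t+2\gamma S_-$: from the single Brownian vector field $\partial_{p_0}$ one obtains $[\partial_{p_0},\mathcal A_t]=\partial_{q_0}$, then $[\partial_{q_0},\mathcal A_t]=-(1+\om_0^2)\partial_{p_0}+\partial_{p_1}$, then $[\partial_{p_1},\mathcal A_t]=\partial_{q_1}$ and by the nearest-neighbour coupling an inductive step produces $\partial_{p_x},\partial_{q_x}$ for every $x$, spanning the tangent space at every point. Hence $P_{0,t}$ has a smooth density for $t>0$. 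A Stroock--Varadhan support argument, applied on the positive-probability event $\{N_x(\gamma\theta)=0,\,\forall x\}$ (probability $e^{-\gamma n\theta}$) on which the dynamics reduces to a controllable linear SDE driven by $\tilde w_-$, then gives strict positivity of the density and a uniform minorisation $P_\theta(\qv,\pv;\cdot)\ge\alpha_K\,\nu_K(\cdot)$ on compact sets $K\subset\Omega_n$. Combined with the Lyapunov estimate of the previous step, Harris' theorem yields geometric ergodicity of $P_\theta$ in total variation, completing the proof.

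I expect the main obstacle to be this combined H\"ormander/controllability step: one has to handle the jump component $\gamma S_{\text{flip}}$ by restricting to the no-flip event and then verify that the single boundary Brownian input, iterated against the pinned harmonic drift over all $2(n+1)$ coordinates, retains enough controllability to yield a \emph{uniform} minorisation on compact sets. The bracket computation is routine but its combinatorial bookkeeping along the full chain, together with the derivation of the drift contraction for $P_\theta$ from the coupled mean/covariance ODEs with periodic forcing, is where the genuine technical work lies.
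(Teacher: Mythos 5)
This theorem is not proved in the present paper at all: it is quoted from the companion work \cite{klo22} (Theorem 1.1, Proposition A.1) and \cite{klo22a} (Theorem A.2), so there is no in-paper argument to measure you against. Your route --- reduce the periodic stationary state to an invariant measure of the time-$\theta$ kernel $P_\theta$, use $\mathcal H_n$ as Lyapunov function together with the closed linear evolution of means and second moments, get a minorisation from hypoellipticity/controllability of the boundary Langevin noise, and conclude by Harris --- is the natural one and is broadly consistent with how such results are obtained for harmonic chains with bulk noise; the identity $\mathcal G_t\mathcal H_n=2\gamma(T_--p_0^2)+\cF_n(t)p_n$ and the bracket computation starting from $\partial_{p_0}$ are correct.

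However, as written there are concrete gaps. First, you establish a smooth, everywhere positive density only for the Gaussian component obtained on the no-flip event; that suffices for a minorisation, but absolute continuity and strict positivity of $\mu_s^P$ itself require that the \emph{full} kernel $P_{0,t}(z,\cdot)$ be absolutely continuous, i.e.\ that conditionally on \emph{every} realisation of the flip times the time-inhomogeneous linear SDE still has a nondegenerate covariance (this does hold, via the Kalman condition applied on the last inter-jump interval, but it is not in your argument). Second, the theorem asserts TV convergence for an \emph{arbitrary} initial law $\mu$, possibly with infinite energy; Harris' theorem in a $V$-weighted norm only covers $\int V\,\dd\mu<\infty$, so you need an extra step (e.g.\ pointwise convergence for deterministic initial data plus dominated convergence over $\mu$, and an identification of general solutions of the forward equation \eqref{eq:fwe} with the process law). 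Third, the two places you flag as "where the work lies" are indeed the substance of the cited proof and are only asserted here: stability of the flip-renormalised second-moment drift (for fixed $n$ this needs a unique-continuation/controllability argument showing no nonzero trajectory keeps $p_0\equiv0$), and the positivity $\Delta C\succeq0$ behind $\min_x\lang p_x^2\rang\ge T_-$, which requires the observation that the second-moment propagator of the zero-temperature (purely dissipative plus flip) dynamics preserves the cone of positive semidefinite matrices --- the phrase "Gramian representation" does not by itself deliver this, since the difference of two moment flows is not a moment flow. With these points filled in, your strategy would yield the theorem.
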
}

{In the periodic stationary state   the time averaged
energy current $  J_n=\lang j_{x,x+1}\rang$ is constant for
$x=-1,\ldots,n$.
In particular 
\begin{equation}  \label{eq:38a}
  J_n=
 -  \frac {1}{\sqrt{n}\theta}  \int_0^{\theta}  \cF\left(\frac{s}{\theta}\right)
 \bar p_n(s)\dd s    ,
\end{equation}
where $\bar p_x(s)  :=  \int_{\bbR^{2(n+1)}}p_x
\mu_s^P(\dd\qv,\dd\pv)$. It turns out that the stationary current is
of size $O(1/n)$ as can be seen from the following.
\begin{theorem}[see Theorem 3.1 of \cite{klo22}]
  \label{thm-current}
{Suppose that ${\cal F}(\cdot)$ satisfies \eqref{eq:2} and, in
  addition, we also have $\sum_{\ell\in\bbZ}\ell^2|\hat{\cal F}(\ell)|^2<+\infty$.}
Then, 
\begin{equation}
\label{051021-05z}
\lim_{n\to+\infty}n J_n=  J
:=-\left(\frac {2\pi}{\theta}\right)^2   \sum_{\ell\in\bbZ}
\ell^2{\cal Q} (\ell), 
\end{equation}
with ${\cal Q}(\ell)$ given  by, 
\begin{equation} 
\label{021205-21f1z}
\begin{aligned}
  & {\cal Q}(\ell)= 4\gamma|\hat     \cF(\ell)|^2
  \int_0^1 \cos^2\left(\frac{\pi z}{2}\right)
  \left\{\left[4\sin^2\left(\frac{\pi z}{2}\right)
      +\om_0^2 -\left(\frac{2\pi\ell}{\theta}\right)^2\right]^2
    +\left(\frac{{4} \gamma \pi \ell}{\theta}\right)^2 \right\}^{-1}\dd z.
\end{aligned} \end{equation}
In the more general case when the forcing $\cF_n(t)$ is $\theta_n$-periodic, with the
period $\theta_n=n^b\theta$ and the amplitude $n^{a}$, i.e.
$
 \cF_n(t)= \;n^{a} \cF\left(\frac{t}{ \theta_n}\right),
$
and
\begin{equation}
\label{a-bz}
b-a=\frac{1}{2},\quad a\le 0\quad\mbox{and}\quad b>0 
\end{equation}
the convergence in \eqref{051021-05z} still holds. However, then
\begin{equation} 
\label{021205-21f2z}
\begin{aligned}
   {\cal Q} (\ell)= 4\gamma|\hat     \cF(\ell)|^2
   \int_0^1 \cos^2\left(\frac{\pi z}{2}\right)
   \left[4\sin^2\left(\frac{\pi z}{2}\right)
       +\om_0^2 
     \right]^{-2}\dd z
,\quad\mbox{when }b>0.
\end{aligned} 
\end{equation}
\end{theorem}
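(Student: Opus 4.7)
The plan is to reduce the problem to an explicit calculation on the first moments $\bar q_x(s)$, $\bar p_x(s)$, and then to compute the large-$n$ asymptotics by Fourier diagonalization in time and in space.

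First, taking expectations in \eqref{eq:flip}--\eqref{eq:pbdf}, both the velocity-reversal noise (at the interior sites and at $x=n$) and the Ornstein--Uhlenbeck thermostat (at $x=0$) contribute the same linear damping $-2\gamma\bar p_x$ in means. Combined with $\dot{\bar q}_x=\bar p_x$ and the Neumann conventions $\bar q_{-1}=\bar q_0$, $\bar q_{n+1}=\bar q_n$, this gives the closed deterministic system
\begin{equation*}
\ddot{\bar q}_x(s) = \Delta \bar q_x(s) - \omega_0^2\,\bar q_x(s) - 2\gamma\,\dot{\bar q}_x(s) + \mathcal{F}_n(s)\,\delta_{x,n}.
\end{equation*}
Since the linear part is contractive and $\mathcal{F}_n$ is $\theta$-periodic with zero mean, this admits a unique $\theta$-periodic solution; uniqueness of the periodic stationary state in \autoref{periodic} identifies it with the means under $\mu_s^P$.

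Second, I pass to Fourier modes in time. Setting $\omega_\ell:=2\pi\ell/\theta$ and expanding $\bar q_x(s)=\sum_{\ell\in\bbZ}\hat q_x(\ell)e^{2\pi i\ell s/\theta}$, the system reduces at each frequency to a discrete Helmholtz problem
\begin{equation*}
\bigl(-\Delta + \omega_0^2 - \omega_\ell^2 + 2i\gamma\omega_\ell\bigr)\hat q_x(\ell) = \frac{\hat{\mathcal{F}}(\ell)}{\sqrt n}\,\delta_{x,n},
\end{equation*}
with Neumann boundary conditions. I diagonalize $-\Delta$ in the orthonormal cosine basis $\psi_k(x)=c_k\cos(\pi k(x+\tfrac12)/(n+1))$, $k=0,\dots,n$, whose eigenvalues are $\lambda_k=4\sin^2(\pi k/(2(n+1)))$. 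Using $\hat{\bar p}_n(\ell)=i\omega_\ell\hat q_n(\ell)$ and substituting into the Parseval form of \eqref{eq:38a}, then extracting the real part of the resulting forcing-response correlation, produces
\begin{equation*}
n J_n = -\sum_{\ell\in\bbZ} 4\gamma\,\omega_\ell^2\,|\hat{\mathcal{F}}(\ell)|^2 \cdot \frac{1}{n+1}\sum_{k=0}^{n}\frac{2\cos^2\!\bigl(\tfrac{\pi k}{2(n+1)}\bigr)}{\bigl(\lambda_k+\omega_0^2-\omega_\ell^2\bigr)^2+4\gamma^2\omega_\ell^2} + o(1).
\end{equation*}
For each fixed $\ell$, the inner sum is a Riemann sum (in $z=k/(n+1)$) that converges to the integral displayed in \eqref{021205-21f1z}, which matches the claimed expression upon identifying $(2\pi\ell/\theta)^2=\omega_\ell^2$ and $(4\gamma\pi\ell/\theta)^2=4\gamma^2\omega_\ell^2$.

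The main technical obstacle will be interchanging the $n\to\infty$ limit with the summation over $\ell$. Since the real part of the denominator is bounded below by $\omega_0^2>0$, the inner sum is bounded by $C/\omega_0^2$ uniformly in $n$ and in $\ell$, so each summand is dominated by $C'\ell^2|\hat{\mathcal{F}}(\ell)|^2$; dominated convergence then yields the identity under the hypothesis $\sum_\ell \ell^2|\hat{\mathcal{F}}(\ell)|^2<+\infty$. For the general scaling $\theta_n=n^b\theta$, amplitude $n^a$, with $b-a=\tfrac12$, $a\le0$, $b>0$, the same derivation applies with $\omega_\ell$ replaced by $2\pi\ell/\theta_n=O(n^{-b})$: the prefactor $n^{2a}\cdot n = n^{2b}$ exactly compensates the factor $\omega_\ell^2=O(n^{-2b})$ in $nJ_n$, while $\omega_\ell$ and $\gamma\omega_\ell$ vanish from the denominator of the Riemann integrand in the limit, producing the simpler formula \eqref{021205-21f2z}.
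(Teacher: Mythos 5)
Your route is essentially the same as the one this paper (and \cite{klo22}) uses: the theorem is quoted from \cite{klo22}, and the computation actually carried out here, the proof of \autoref{th4} in \autoref{app-means}, is exactly the reduction to the closed damped--driven dynamics of the means, diagonalization in the Neumann cosine basis \eqref{laps}, and identification of the limit as a Riemann sum in $z=j/(n+1)$; you perform the equivalent computation in the frequency domain for the periodic stationary state, using \eqref{eq:38a}, which is a clean and legitimate variant. Two small repairs are needed. First, your displayed formula for $nJ_n$ carries a spurious factor $2$: since $\psi_k(n)^2=\frac{2-\delta_{k,0}}{n+1}\cos^2\big(\frac{\pi k}{2(n+1)}\big)$ and ${\rm Re}\,\frac{i\omega_\ell}{\mu_k-\omega_\ell^2+2i\gamma\omega_\ell}=\frac{2\gamma\omega_\ell^2}{(\mu_k-\omega_\ell^2)^2+4\gamma^2\omega_\ell^2}$, the exact identity has prefactor $2\gamma\omega_\ell^2|\hat\cF(\ell)|^2$ in front of the inner sum with numerator $(2-\delta_{k,0})\cos^2(\cdot)$; as written (prefactor $4\gamma$ \emph{and} numerator $2\cos^2$), combined with your claim that the inner sum tends to the integral in \eqref{021205-21f1z}, you would obtain $2J$ rather than $J$ --- drop one of the two factors of $2$ and the bookkeeping closes. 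Second, the justification of the uniform-in-$\ell$ bound is stated incorrectly: the real part $\mu_k-\omega_\ell^2$ of the resolvent denominator is \emph{not} bounded below by $\omega_0^2$ (it is negative once $\omega_\ell^2$ exceeds the spectrum); what is true, and suffices for the domination by $C\,\ell^2|\hat\cF(\ell)|^2$, is that $(\mu_k-\omega_\ell^2)^2+4\gamma^2\omega_\ell^2\ge \min\{\om_0^4/4,\,2\gamma^2\om_0^2\}$, because whenever $\omega_\ell^2$ is close to the spectrum of $-\Delta+\om_0^2$ the damping term $4\gamma^2\omega_\ell^2$ is bounded away from zero. With these fixes, the argument, including the rescaled case $\theta_n=n^b\theta$, $b-a=1/2$, $b>0$ leading to \eqref{021205-21f2z}, is sound.
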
}

{Concerning the convergence of the energy profile we have shown the following, see
\cite[Theorem 3.4]{klo22}.
\begin{theorem}\label{th1z}
Under the assumptions of Theorem \ref{thm-current}
  we have {
  \begin{equation}
    \label{eq:3z}
    \lim_{n\to\infty} \frac 1n \sum_{x=0}^n \varphi\left(\frac x{n+1} \right) \lang p^2_x\rang
    {=  \lim_{n\to\infty} \frac 1n \sum_{x=0}^n \varphi\left(\frac x{n+1} \right)
    \lang {\cal E}_x\rang}
    = \int_0^1 \varphi (u) T(u) \dd u,
  \end{equation}
 } with 
  \begin{equation}
    \label{eq:5z}
      T(u) = T_--\frac{4\gamma Ju}{D}, \quad u\in[0,1],
  \end{equation}
for any $\varphi\in C[0,1]$.
  Here $J$ is given by \eqref{051021-05z}
  and
  \begin{equation}
    D = 1 - \omega_0^2 \Big(G_{\om_0}(0)+ G_{\om_0}(1)\Big) =\frac{2}{2+\om_0^2+\om_0\sqrt{\om_0^2+4}},
    \label{eq:13}
  \end{equation}
  where $G_{\om_0}(\ell)$ is the Green function defined in \eqref{GR}. 
\end{theorem}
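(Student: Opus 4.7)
The plan is to exploit the periodic stationarity identity $\lang \mathcal{G}_t F \rang = 0$, valid for any bounded $F$ in the domain of $\mathcal{G}_t$ (which follows from $\partial_t \lang F\rang = 0$ by \eqref{eq:fwe} and $\mu_{t+\theta}^P = \mu_t^P$), together with the constancy of the stationary current and a Green-function analysis of the resulting boundary corrections. I would proceed in three steps.

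\textbf{Step 1: Equipartition.} Applying the identity to $F = q_x^2$ yields $\lang q_x p_x\rang = 0$, since $\mathcal{G}_t(q_x^2) = 2 q_x p_x$ (the flip leaves $q_x^2$ invariant and the thermostat acts only on $p_0$). Applying it to $F = q_x p_x$ for $x$ in the bulk, and noting that $S_{\text{flip}}(q_x p_x) = -2 q_x p_x$, I would obtain
\begin{equation*}
  \lang p_x^2\rang + \lang q_x(\Delta q_x - \omega_0^2 q_x)\rang = 2\gamma \lang q_x p_x\rang = 0.
\end{equation*}
Combined with similar relations obtained by applying $\lang \mathcal{G}_t(\cdot)\rang = 0$ to $(q_{x+1}-q_x)p_x$ and to $(q_{x+1}-q_x)^2$, this gives, after a Cesaro averaging against $\varphi \in C[0,1]$, that the kinetic profile $\{\lang p_x^2\rang\}$ and the potential profile $\{\lang (q_x-q_{x-1})^2 + \omega_0^2 q_x^2\rang\}$ have the same limit. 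Consequently, it suffices to identify $\lim_n \frac{1}{n}\sum_x \varphi(x/(n{+}1)) \lang p_x^2\rang$.

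\textbf{Step 2: The profile is affine.} Averaging the microscopic conservation law \eqref{eq:current} via $\lang\cdot\rang$ gives $\lang j_{x-1,x}\rang = \lang j_{x,x+1}\rang = J_n$ for every $x$, with $nJ_n \to J$ by Theorem~\ref{thm-current}. I would then derive a fluctuation-dissipation identity in the spirit used later in the paper: choose a suitable local function $\psi_x$ (essentially an antiderivative of $-p_x(q_{x+1}-q_x)$ under the flip dynamics, constructed so that $S_{\text{flip}} \psi_x$ reproduces the current up to a gradient) and apply $\lang \mathcal{G}_t \psi_x\rang = 0$. This produces a representation
\begin{equation*}
  J_n = -\frac{1}{4\gamma}\bigl(\lang p_{x+1}^2\rang - \lang p_x^2\rang\bigr) + R_x,
\end{equation*}
where $R_x$ is a boundary/gradient remainder that is summable in $x$ uniformly in $n$. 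Telescoping from $x=0$ to a generic bulk site then yields that $\lang p_x^2\rang$ is affine in $x$ up to $o(1)$ error, hence the limit profile $T(u)$ is affine with slope proportional to $J$.

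\textbf{Step 3: Boundary constants and the factor $D$.} The left endpoint is pinned by the thermostat: from \eqref{eq:current-bound} we have $\lang j_{-1,0}\rang = 2\gamma(T_- - \lang p_0^2\rang) = J_n = O(1/n)$, so $\lang p_0^2\rang \to T_-$ and $T(0) = T_-$. The nontrivial part is extracting the factor $D$ in the slope. For this I would invert the stationary second-moment equations in closed form using the discrete Neumann Green function $G_{\omega_0}$ of $\Delta - \omega_0^2$ on $\{0,\ldots,n\}$, whose values at the boundary sites $x=0$ and $x=n$ control how the linear bulk profile connects to the prescribed thermostat value. The corrections at the two endpoints contribute exactly $\omega_0^2(G_{\omega_0}(0)+G_{\omega_0}(1))$ and assemble into $D = 1 - \omega_0^2(G_{\omega_0}(0)+G_{\omega_0}(1))$, giving the slope $-4\gamma J/D$.

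The main obstacle is Step 3: one must keep track of several boundary correlations ($\lang q_0 q_x\rang$, $\lang q_n q_x\rang$, and the mixed $\lang p \cdot q\rang$ terms) that are not directly handled by the bulk fluctuation-dissipation argument, and show that their Green-function resolution collapses into the single scalar $D$ and does not affect the Cesaro limit against $\varphi$. This requires sharp $\ell^\infty$ bounds on $G_{\omega_0}$ away from the endpoints and a careful accounting of the pinning contribution, for which the explicit formulas of \cite{klo22} for the periodic stationary covariance are the natural starting point.
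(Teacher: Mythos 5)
First, note that the paper does not prove Theorem \ref{th1z} here at all: it is quoted from the companion paper \cite{klo22} (Theorem 3.4 there), whose proof runs through the stationary covariance equations, the local-equilibrium identification of position covariances via the Green function \eqref{GR}, and the fluctuation--dissipation decomposition that reappears in this paper as \eqref{eq:32}--\eqref{eq:33}. Your Step 1 (equipartition from $\lang \mathcal G_t F\rang=0$ applied to $q_x^2$ and $q_xp_x$) is correct and is essentially the same device as \eqref{eq:4}. The skeleton of Steps 2--3 is also in the right spirit, but it contains a genuine gap in exactly the place where the constant $D$ is produced.

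The gap is this: the stationary fluctuation--dissipation identity (the $\lang\cdot\rang$-average of \eqref{eq:33}) gives $J_n=-\tfrac1{4\gamma}\big(\lang\mathfrak F_{x+1}\rang-\lang\mathfrak F_x\rang\big)$ in the bulk, with $\mathfrak F_x=p_x^2+(q_{x+1}-q_x)(q_x-q_{x-1})-\omega_0^2q_x^2$; telescoping therefore yields an affine profile for $\lang\mathfrak F_x\rang$, \emph{not} for $\lang p_x^2\rang$. Your claimed representation $J_n=-\tfrac1{4\gamma}(\lang p_{x+1}^2\rang-\lang p_x^2\rang)+R_x$ with a remainder that is negligible after summation cannot hold: $\sum_x R_x$ telescopes to the difference $\lang\mathfrak F_x\rang-\lang p_x^2\rang$ evaluated at the ends, which is of order one, i.e.\ comparable to the whole variation of the profile. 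Converting $\lang\mathfrak F_x\rang$ into $D\lang p_x^2\rang$ is precisely the \emph{bulk} local-equilibrium statement $\lang q_xq_{x+\ell}\rang\approx G_{\omega_0}(\ell)\lang p_x^2\rang$, $\ell=0,1,2$ (the stationary analogue of Proposition \ref{prop010803-22}, using $2G_{\omega_0}(1)-G_{\omega_0}(0)-G_{\omega_0}(2)=-\omega_0^2G_{\omega_0}(1)$), so the factor $D=1-\omega_0^2(G_{\omega_0}(0)+G_{\omega_0}(1))$ enters through the bulk, not, as your Step 3 asserts, through ``corrections at the two endpoints''; the boundary Green-function computation (the analogue of Proposition \ref{prop-boundaryeq}, where $G_{\omega_0}(0)+G_{\omega_0}(1)$ appears because $q_{-1}=q_0$) only fixes the boundary value $\lang\mathfrak F_0\rang\approx DT_-$, i.e.\ $T(0)=T_-$. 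Moreover, proving that local equilibrium is not a soft Green-function inversion: it requires a quantitative regularity estimate on the kinetic profile, of the type $\sum_x\big(\lang p_{x+1}^2\rang-\lang p_x^2\rang\big)^2\le C/n$ together with $\sup_x\lang p_x^2\rang\le C$ (cf.\ \eqref{051012-21a} and the kernel lower bound of Proposition \ref{prop031012-21}), obtained from the closed covariance equations; your proposal acknowledges this as the ``main obstacle'' but supplies no mechanism for it, so as written the identification of the slope $-4\gamma J/D$ is not established.
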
}

{Concerning the time variance of the average kinetic energy we have
shown the following.
\begin{theorem}[Theorem 9.1, \cite{klo22}]
  \label{thm:var}
Suppose that the forcing ${\cal F}_n(\cdot)$ is given by
\eqref{Fnt}, where ${\cal F}(\cdot)$ satisfies the hypotheses made in
Theorem \ref{thm-current}. Then, there exists a constant
$C>0$ such that  
  \begin{equation}
    \label{eq:91}
  {    \sum_{x=0}^n \frac 1\theta
    \int_0^\theta \left(\bar{p_x^2}(t) - \lang p_x^2\rang \right)^2
    \dd t  
  \le \frac{C}{n^2},\quad n=1,2,\ldots.}
\end{equation}
Here
$
\bar{p_x^2}(t) :=\int_{\bbR^{2(n+1)}}p_x^2
\mu_t^P(\dd\qv,\dd\pv)$. 
\end{theorem}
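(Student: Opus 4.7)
The plan is to analyze the closed linear ODE satisfied by the matrix of second moments in the periodic stationary state, exploiting the $1/\sqrt n$ amplitude of the forcing together with its localization at the single site $x=n$. Let $M^P(t)=(M^P_{ab}(t))$ with $M^P_{ab}(t):=\int\xi_a\xi_b\,\dd\mu_t^P$ for $\xi=(\qv,\pv)\in\R^{2(n+1)}$. Since the dynamics is linear in $\xi$, with additive Gaussian input at $p_0$ and multiplicative velocity-flip Poisson noise at the other sites, a direct computation of $\mathcal G_t(\xi_a\xi_b)$ shows that $M^P$ obeys a closed linear inhomogeneous ODE
\begin{equation*}
\frac{d}{dt}M^P(t)=\mathcal L\,M^P(t)+R_0+\cF_n(t)\,R^P(t),
\end{equation*}
where $\mathcal L$ is an operator on symmetric $2(n+1)\times 2(n+1)$ matrices encoding the symplectic evolution, the flip jump dissipator and the Langevin damping; $R_0$ is the constant matrix with single nonzero entry $4\gamma T_-$ at position $(p_0,p_0)$; and $R^P(t)=\be_{p_n}(\bar\xi^P(t))^T+\bar\xi^P(t)(\be_{p_n})^T$ has rank at most two, with $\bar\xi^P_a(t):=\int\xi_a\,\dd\mu_t^P$.

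To estimate $R^P$ I would first bound the first moments $\bar\xi^P$, which solve the linear periodic ODE $\dot{\bar\xi}^P=A\bar\xi^P+\cF_n(t)\be_{p_n}$ with $A$ the autonomous first-order generator. Because the chain is pinned ($\omega_0>0$) and damped at every site (Langevin at $p_0$, flip for $p_1,\dots,p_n$), the spectrum of $A$ lies in the left half-plane with a gap $c>0$ that is uniform in $n$, so $(\imath\omega-A)^{-1}$ is bounded $\ell^2\to\ell^2$ uniformly in $\omega\in\R$ and $n$. Together with $\|\cF_n\|_{L^2(0,\theta)}=O(n^{-1/2})$, Parseval in time yields $\|\bar\xi^P\|_{L^2(0,\theta;\ell^2)}=O(n^{-1/2})$, and hence
\begin{equation*}
\|\cF_n R^P\|_{L^2(0,\theta;\mathrm{HS})}\le C\|\cF_n\|_{L^\infty(0,\theta)}\|\bar\xi^P\|_{L^2(0,\theta;\ell^2)}=O(n^{-1}).
\end{equation*}

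The operator $\mathcal L$ inherits a uniform spectral gap from that of $A$, combined with the dissipativity of the flip contribution, so $(2\pi\imath k/\theta-\mathcal L)^{-1}$ is bounded uniformly in $k\in\Z$ and $n\ge1$ on symmetric matrices equipped with the Hilbert-Schmidt norm. Writing $\tilde M^P(t):=M^P(t)-\lang M^P\rang$, this is a $\theta$-periodic function with zero time-mean satisfying $\dot{\tilde M}^P=\mathcal L\tilde M^P+\cF_n R^P-\lang\cF_n R^P\rang$. Parseval in time combined with the uniform resolvent bound then gives
\begin{equation*}
\|\tilde M^P\|_{L^2(0,\theta;\mathrm{HS})}^2\le C\|\cF_n R^P-\lang\cF_n R^P\rang\|_{L^2(0,\theta;\mathrm{HS})}^2=O(n^{-2}).
\end{equation*}
Since $\sum_{x=0}^n(\bar{p_x^2}(t)-\lang p_x^2\rang)^2=\sum_{x=0}^n|\tilde M^P_{p_xp_x}(t)|^2\le\|\tilde M^P(t)\|_{\mathrm{HS}}^2$, integrating in $t\in[0,\theta]$ produces exactly the claimed inequality.

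The main obstacle I anticipate is establishing the uniform-in-$n$ $\ell^2$-resolvent bound for $A$, since the direct dissipation from both Langevin and flip noise acts only on the momentum components. One must show that the pinning mass $\omega_0^2$ together with the symplectic coupling propagates dissipation to the position components at a rate independent of the chain length. I would address this either by constructing a weighted quadratic Lyapunov functional that is equivalent to the standard $\ell^2$-norm and dissipated at a uniform rate, or, more concretely, by following the discrete-Fourier/Green-function analysis of the dispersion relation already developed in \cite{klo22} for the first-moment equations, which provides the required explicit bounds on $(\imath\omega-A)^{-1}$.
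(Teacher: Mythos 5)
You should first note that this paper does not actually prove \autoref{thm:var}: it is imported verbatim from \cite{klo22} (Theorem 9.1 there), so there is no in-paper proof to compare with; your proposal has to stand on its own. Its architecture is reasonable and in the spirit of the covariance analysis used in \cite{klo22} and in Appendix \ref{ev-covariance} here: the second moments in the periodic state obey a closed linear equation whose only time-dependent source is the rank-two term $\cF_n(t)\big(\be\,\bar\xi^T+\bar\xi\,\be^T\big)$ at site $n$; the means are $O(n^{-1/2})$ in $\ell^2$ uniformly in time (this follows from the explicit $2\times2$ block diagonalization of $A$ in the Neumann eigenbasis, eigenvalues $\la_{j,\pm}$ with ${\rm Re}\,\la_{j,\pm}\ge\ga_*$, cf. \eqref{lapm}--\eqref{010606-22}); hence the source is $O(n^{-1})$ and one hopes for an $O(n^{-2})$ time variance. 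Incidentally, the obstacle you single out at the end (the uniform resolvent bound for $A$) is the easy part, already available through that block-diagonal computation.

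The genuine gap is the resolvent bound for the \emph{second-moment} operator $\mathcal L$, which you assert "inherits a uniform spectral gap from that of $A$, combined with the dissipativity of the flip contribution"; this is false, and the claimed bound "uniformly in $k\in\bbZ$" certainly fails at $k=0$. The velocity flips conserve $p_x^2$, so in matrix form $\mathcal L M=-AM-MA^T+4\ga\,{\rm diag}$-feedback (cf.\ the source $\Sigma(\overline{\frak{\bf p}^2})$ in \eqref{eq:covev}), where the last term reinstates exactly the diagonal kinetic entries that $-AM-MA^T$ would damp. Because of this feedback, energy is dissipated only through the single Langevin site and the second moments relax diffusively: $\mathcal L$ has eigenvalues with real part of order $n^{-2}$, so there is no $n$-uniform gap; moreover the feedback term has operator norm $4\ga$, so it cannot be absorbed by a Neumann series around $M\mapsto -AM-MA^T$ even though the latter does have a uniform gap. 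What your scheme actually needs is a uniform-in-$n$ bound on $(2\pi\ii k/\theta-\mathcal L)^{-1}$ only for $k\neq0$ (large $|k|$ is trivial since $\|\mathcal L\|$ is bounded, so the issue is finitely many nonzero frequencies). This is plausible, precisely because the slow modes of $\mathcal L$ sit near zero frequency, but it is the real content of the theorem: it requires either a hypocoercivity-type estimate or an explicit solution of the Fourier-in-time analogue of the system \eqref{163011-21} with kernel bounds of the type \cite[Lemma B.1]{klo22}, which is how \cite{klo22} proceeds. As written, the decisive estimate is asserted rather than proved, and on incorrect grounds.
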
}

\subsection{Statements of the main results}
\label{sec:state-results}

\subsubsection{Macroscopic energy current due to work}
\label{sec2.1.1}
{ The first results concerns the work done by the forcing in  a diffusive limit,
i.e.
\begin{equation}
  \label{eq:16}
  J_n(t, \mu) = \frac 1n \int_0^{n^2t} \bbE_{\mu}\left(  j_{n,n+1} (s, \qv(s),\pv(s))  \right) ds
  =  - \frac 1n \int_0^{n^2t} \cF_n(s) \bbE_{\mu}\left( p_n(s) \right) ds,
\end{equation}
where $ \bbE_{\mu}$ denotes the expectation of the process with the
initial configuration $(\qv,\pv)$ distributed according to a 
probability measure $\mu$. We shall write $J_n(t, \bf{q},\bf{p})$ if
for a deterministic initial configuration $(\bf{q},\bf{p})$, i.e. 
$\mu=\delta_{\bf{q},\bf{p}}$, the $\delta$-measure that gives probability
1 to such configuration.

Assume furthermore that $(\mu_n)$ is a
   sequence of initial distributions, with each $\mu_n$ probability
   measure on $\bbR^{n+1}\times \bbR^{n+1}$. We suppose that
    there exist $C>0$ and
   $\delta\in[0,2)$ for
   which   for any integer $n\ge 1$
   \begin{equation}
  \label{fHn}
      \mathcal H_n(\bar\qv_n, \bar\pv_n) \le Cn^\delta. 
    \end{equation}
    Here $(\bar\qv_n, \bar\pv_n)$ {is} the vector of the averages of the
    configuration with respect to $\mu_n$.}
  {We are interested essentially in the case $\delta =1$, but
    \autoref{th4} is valid also for any $\delta <2$.} { In Proposition \ref{prop011012-21} we prove
    that, in the diffusive time scaling, the energy due to the averages \eqref{fHn}
    becomes negligible at any time $t>0$}.
    
{In Section \ref{sec-th4} of the Appendix we prove the following.}
 \begin{theorem} \label{th4}
 {Under the assumptions listed above, we have
       \begin{equation}
         \label{Jnn}
         \lim_{n\to+\infty}\sup_{t\ge 0}\Big|J_n(t, \mu_n)-Jt\Big|=0,       
      \end{equation}}
where $J$ is given by \eqref{051021-05z}.
\end{theorem}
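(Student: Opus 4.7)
The plan is to reduce everything to the deterministic linear dynamics of the first moments. Let $X_n(s):=\bbE_{\mu_n}(\qv(s),\pv(s))\in\R^{2(n+1)}$. Taking expectations in \eqref{eq:flip}--\eqref{eq:pbdf} and using that at the level of averages both the Poisson flip noise (at sites $x\ge 1$) and the Langevin thermostat (at site $x=0$) contribute the same drift $-2\gamma\bar p_x$, one sees that $X_n$ satisfies the deterministic linear ODE
\[
\dot X_n(s) = M_n X_n(s) + F_n(s),\qquad F_n(s)=(0,\ldots,0,\cF_n(s))^T,
\]
where $M_n$ is a constant $2(n+1)\times 2(n+1)$ matrix implementing the pinned damped harmonic dynamics $\ddot q + 2\gamma \dot q + A_n q = 0$, with $A_n=-\Delta+\om_0^2 I$ on the Neumann chain. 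Since $J_n(t,\mu_n) = -\tfrac{1}{n}\int_0^{n^2 t}\cF_n(s)\,[X_n(s)]_{p_n}\,ds$, the whole problem reduces to studying the $p_n$-component of $X_n$.

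Next I decompose $X_n(s)=X_n^P(s)+R_n(s)$, where $X_n^P$ is the unique $\theta$-periodic solution of the forced equation (the projection of the periodic stationary state from \autoref{periodic} onto first moments) and $R_n(s)=e^{M_n s}(X_n(0)-X_n^P(0))$ solves the homogeneous equation. For the periodic part, identity \eqref{eq:38a} reads exactly $\int_0^\theta \cF_n(s)[X_n^P(s)]_{p_n}\,ds = -\theta J_n$. Writing $n^2 t = k_n\theta+r_n$ with $r_n\in[0,\theta)$ and using $J_n=O(1/n)$ from \autoref{thm-current} together with the a priori bound $\|X_n^P\|_\infty = O(1/\sqrt n)$ (immediate from linearity, the forcing amplitude \eqref{Fnt}, and the uniform spectral gap of $M_n$ discussed below), one obtains
\[
-\frac{1}{n}\int_0^{n^2 t}\cF_n(s)[X_n^P(s)]_{p_n}\,ds = \frac{k_n\theta J_n}{n}+O(n^{-2}) = n t\,J_n + O(n^{-2}),
\]
with error uniform in $t\ge 0$. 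Combined with $nJ_n\to J$, this piece converges to $Jt$ locally uniformly in $t$.

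The main obstacle is showing that the transient contribution is negligible uniformly in $t$, which boils down to a uniform-in-$n$ spectral gap for $M_n$. Since $A_n\ge\om_0^2 I$ uniformly in $n$ thanks to the pinning, every normal mode of the system satisfies a scalar equation $\ddot u_k + 2\gamma\dot u_k + \lambda_k u_k = 0$ with $\lambda_k\in[\om_0^2,\om_0^2+4]$, so its exponential decay rate is at least $\alpha:=\min\{\gamma,\om_0^2/(2\gamma)\}>0$, independent of $n$. The same gap can be obtained from a tilted-energy Lyapunov function
\[
V_\epsilon(q,p) = \tfrac12\|p\|^2+\tfrac12\langle q,A_n q\rangle + \epsilon\langle q,p\rangle,
\]
for which a direct computation under the homogeneous flow gives $\dot V_\epsilon \le -c V_\epsilon$ with $c=c(\gamma,\om_0)>0$ independent of $n$, once $\epsilon$ is chosen small and Young's inequality is applied. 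Combined with the hypothesis \eqref{fHn}, which yields $|X_n(0)|\le C n^{\delta/2}$ (and the a priori $|X_n^P(0)|=O(1/\sqrt n)$), one gets $|R_n(s)|\le C n^{\delta/2} e^{-\alpha s}$. Together with $\|\cF_n\|_\infty = O(n^{-1/2})$ this produces
\[
\left|\frac{1}{n}\int_0^{n^2 t}\cF_n(s)[R_n(s)]_{p_n}\,ds\right| \le \frac{C n^{\delta/2}}{n\sqrt n\,\alpha} = O\!\left(n^{(\delta-3)/2}\right),
\]
uniformly in $t\ge 0$, which vanishes since $\delta<2$. Assembling the three pieces yields $|J_n(t,\mu_n)-Jt|\le t|nJ_n-J|+o(1)$, which tends to $0$ as $n\to\infty$ locally uniformly in $t$, giving \eqref{Jnn}.
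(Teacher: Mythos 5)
Your argument is correct in substance and, in its second half, genuinely different from the paper's. The paper (Appendix \ref{app-means}, Section \ref{sec-th4}) also reduces to the first-moment dynamics and splits the work via Duhamel's formula \eqref{011605-22} into an initial-data part $W_{n,i}$ and a forced part $W_{n,f}$; your treatment of the transient is essentially their treatment of $W_{n,i}$ (your Lyapunov/normal-mode gap $\alpha=\min\{\gamma,\om_0^2/(2\gamma)\}$ is exactly the paper's $\gamma_*$ in \eqref{010606-22}, and your bound $O(n^{(\delta-3)/2})$ matches theirs via \eqref{011908-22}). Where you diverge is the forced part: the paper computes it explicitly in the Neumann eigenbasis, splitting $W_{n,f}=W_{n,f}^{(1)}+W_{n,f}^{(2)}$ and re-deriving the Riemann-sum asymptotics that produce formula \eqref{051021-05z}, whereas you replace this computation by comparing with the unique $\theta$-periodic solution of the averaged ODE, identifying it with the first moments of the periodic state of \autoref{periodic}, and then importing \eqref{eq:38a} together with $nJ_n\to J$ (\autoref{thm-current}) as black boxes. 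Your route is shorter and conceptually cleaner (it makes transparent that the transient relaxes at rate $\gamma_*$ and that the limit current must be the stationary one), at the price of leaning on the part-I results; the paper's route is self-contained and yields an explicit $O(1/n^2)$ error for the forced part, and its mode-by-mode formulas \eqref{eq:zyj} are reused elsewhere (e.g.\ Proposition \ref{prop011012-21}). Two small points you should tighten: (i) the identification of $X_n^P$ with the means under $\mu^P_s$ needs a one-line justification (take means in the forward equation \eqref{eq:fwe} — the moments are finite by \autoref{periodic} — and invoke uniqueness of the bounded periodic solution of the damped linear ODE, or equivalently $X_n^P(s)=\int_{-\infty}^s\cF_n(r)e^{-A(s-r)}{\bf e}_{2(n+1)}\,\dd r$ and the decay bound \eqref{012008-22}); (ii) your final estimate $t|nJ_n-J|+o(1)$ only gives convergence uniformly on compact time intervals, not literally $\sup_{t\ge0}$ as in \eqref{Jnn} — but the paper's own proof has the same feature (its Riemann-sum error is also multiplied by $t$), and local uniformity is all that is used later (Corollary \ref{lem:bound2}, Proposition \ref{cor021211-19}), so this is a shared caveat rather than a defect of your argument.
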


\begin{remark}
  The asymptotic current $J$ is the same as in the stationary state (cf. \cite{klo22})
  and it does not depend on the initial configuration.
\end{remark}
\begin{remark}
{ Analogously to  the stationary case, rescaling the period
  $\theta$ with $n$ and the strenght of the force in such a way that
\begin{equation}
\label{Fnta-b}
\cF_n(t)= \;n^{a} \cF\left(\frac{t}{ n^{b}\theta}\right),
\qquad b-a=\frac{1}{2},\quad a\le 0\quad\mbox{and}\quad b>0,
\end{equation}
\autoref{th4} still holds, but with a different value of the
current. Namely, $J$  is given by
\eqref{051021-05z} with ${\cal Q}(\ell)$ defined by  \eqref{021205-21f2z}.
Formula \eqref{021205-21f2z} corresponds to \eqref{021205-21f1z} 
with the value $\theta = \infty$. If $b-a\not=1/2$ the macrosopic current $nJ_n$
is not of order $O(1)$, which leads to an anomalous behavior of the heat
conductivity of the chain (it vanishes, if $b-a>1/2$, becomes
unbounded, if $b-a<1/2$. The assumption $a\le 0$ guarantees that the
force acting on the system does not become infinite, as $n\to+\infty$.}
\end{remark}
\begin{remark}
{Using  contour integration  it is possible to calculate  the
quantities appearing  in  \eqref{021205-21f1z}  and
\eqref{021205-21f2z}, see \cite[Appendix D]{klo22a}. In the case of
\eqref{021205-21f1z}  we obtain  
  \begin{align*}
     &{\cal Q}(\ell) =\frac{\theta |\hat    \cF(\ell)|^2}{2\pi\ell}
       {\rm Im}\left( \left\{\frac{2 }{\la(\om_0,\ell)\sqrt{1+4/\la(\om_0,\ell)}}
       +\frac12\right\} \left\{1+\frac{ \la(\om_0,\ell)}{2}\Big(1+\sqrt{1
     +\frac{4}{\la(\om_0,\ell)}}\Big)\right\}^{-1}\right),
\end{align*}
with
$$
\la(\om_0,\ell):=\om_0^2 -\left(\frac{2\pi\ell}{\theta}\right)^2 
    +i\left(\frac{{4} \gamma \pi \ell}{\theta}\right) .
    $$
    Furthermore,  in the case of
\eqref{021205-21f2z} we have 
    \begin{align*}
    {\cal Q}(\ell)
    = 
    \frac{2\gamma|\hat     \cF(\ell)|^2  (4 +\om_0^2) }
    { (\om_0^4+4 \om_0^2+8)^{3/2}} .
\end{align*}}
  \end{remark}

\subsubsection{Macroscopic energy profile}

Let $\nu_{T_-} (\dd{\bf q},\dd{\bf p})$ be defined as the product
  Gaussian measure on $\Omega_n$ {(see \eqref{eq:1})}
 of {zero} average and variance $T_->0$ given by
\begin{equation} \label{eq:nuT}
  \begin{split}
    &\nu_{T_-} (\dd{\bf q},\dd{\bf p}) : =
    \frac{1}{Z}\prod_{x=0}^n \exp\left\{-\mc E_x({\bf
      q},{\bf p})/T_- \right\} \dd{\bf q}\dd{\bf p},
\end{split}
\end{equation}
where $Z$ is the normalizing constant.
Let $f(\qv,\pv)$ be a probability density  
with respect
to $\nu_{T_-}$. We denote the relative entropy
 \begin{equation}
  \label{eq:7a}
\mathbf{H}_{n}(f) :=  \int_{\Om_n} f(\qv,\pv)\log f(\qv,\pv) \dd \nu_{T_-}(\qv,\pv).
\end{equation}
We assume now
that the initial distribution {$\mu_n$} has  density $f_n(0, \qv,\pv)$, with respect
to $\nu_{T_-}$, such that {there exists a constant $C>0$ for which  } 
\begin{equation}
\label{eq:ass0}
    {\mathbf{H}}_n(f_n(0)) \le C n,\quad n=1,2,\ldots.
  \end{equation}
  {For example, it can be verified that local Gibbs measures of the form
    \begin{equation}
      \label{eq:LGM}
      f_n(\qv,\pv) \dd \nu_{T_-}(\qv,\pv) =
      \prod_{x=0}^n \exp\left\{-\frac{\mc E_x({\bf q},{\bf p})}{T_{x,n}} \right\} \dd{\bf q}\dd{\bf p},
    \end{equation}
with $\inf_{x,n}  T_{x,n} >0$ satisfy \eqref{eq:ass0}.}
 {
   At this point we only remark that, due to the
   entropy inequality (see the proof of Corollary \ref{th2} below),
   assumption \eqref{eq:ass0} implies 
   \begin{equation*}
  \sup_{n\ge 1}
  \bbE_{\mu_n}\left[ \frac 1{n+1} \sum_{x=0}^n {\cal E}_x(0)\right] <+\infty.
\end{equation*}
Furthermore, since the Hamiltonian ${\cal H}(\cdot,\cdot)$ is a convex
function, by the Jensen inequality
$$
\sup_{n\ge 1}\frac 1{n+1}\mathcal H_n(\bar\qv_n, \bar\pv_n) \le \sup_{n\ge 1}
  \bbE_{\mu_n}\left[ \frac 1{n+1} \sum_{x=0}^n \mathcal H_n(\qv,
\pv)\right]<+\infty,
$$
so \eqref{fHn} is satisfied with $\delta=1$.}

{ Denote by $\mathcal M_{\rm fin}([0,1])$, resp $\mathcal M_{+}([0,1])$  the space of bounded variation,
  {Borel}, resp. positive, measures on the interval $[0,1]$
  endowed with the weak topology.
Before formulating the main result we introduce the notion of a
measured valued solution of the following initial-boundary value
problem
 \begin{equation}
    \label{eq:5}
    \begin{split}
      &\partial_t T  = \frac{D}{4\gamma} \partial_u^2 T, \quad u\in(0,1),\\
      &T(t,0) = T_-, \quad  \partial_u T (t,1) = -\frac{4\gamma   J}{D},\quad
      T(0,\dd u)  = T_0(\dd u).
    \end{split}
  \end{equation}
 Here   $J$ and $D$  are defined by    \eqref{051021-05z} 
 and \eqref{eq:13}, respectively and $T_0\in \mathcal M_{\rm fin}([0,1])$. }

\begin{definition}
  \label{df012701-23}
{We say that
a function $T:[0,+\infty)\to \mathcal M_{\rm fin}([0,1])$ is a weak (measured
valued) solution of \eqref{eq:5} if:  it belongs to
$C\Big([0,+\infty); \mathcal M_{\rm fin}([0,1])\Big)$ and
  for any $\varphi\in C^2[0,1]$ such that 
     $\varphi(0)=\varphi'(1)=0$ we have
  \begin{equation}
    \label{eq:5w}
    \begin{split}
       \int_0^1\varphi(u)T(t,\dd u) - \int_0^1\varphi(u)T_0(\dd u) 
      =&\frac{D}{4\ga}\int_0^t\dd s\int_0^1  \varphi''(u)T(s,\dd u) \\
     &+\frac{DT_- t}{4\ga}\varphi'(0)-Jt \varphi(1).
    \end{split}
  \end{equation}}
\end{definition}

{
 The proof of the uniqueness of the solution of \eqref{eq:5w} is quite
 routine. For completeness sake we present it
   in \autoref{A:unique}.
  }

\begin{theorem}\label{th1}
  {Suppose that the initial configurations $(\mu_n)$ satisfy
    \eqref{eq:ass0}. Assume furthermore
    that there exists  $T_0\in  \mathcal M_+([0,1])$ such that
\begin{equation}
  \label{eq:6}
   \lim_{n\to\infty} \bbE_{\mu_n} \left[\frac 1{n+1} \sum_{x=0}^n \varphi\left(\frac x{n+1} \right)  \mathcal E_x(0)\right]
    = \int_0^1 \varphi (u) T_0( \dd u),
  \end{equation}
for any function $\varphi\in C[0,1]$ - the space of continuous
functions on $[0,1]$.
Here  
$\mathcal E_x(t) = \mc E_x({\bf q}(t),{\bf p}(t))$.}
{Then,
  \begin{equation}
    \label{eq:3}
    \lim_{n\to\infty} \frac 1{n+1}
    \sum_{x=0}^n \varphi\left(\frac x{n+1} \right) \bbE_{\mu_n}
    \left({\cal E}_x(n^2 t)\right) 
    = \int_0^1 \varphi (u) T(t,  \dd u).
  \end{equation}
  Here  $T(t,  \dd u)$ is the unique weak solution of 
 \eqref{eq:5}, with the initial data given by measure $T_0$ in \eqref{eq:6}.}
\end{theorem}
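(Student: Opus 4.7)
My plan is to follow the standard scheme of proving a hydrodynamic limit via the entropy method combined with the fluctuation–dissipation decomposition provided by \eqref{eq:33}, and then conclude by the uniqueness of the weak solution to \eqref{eq:5}. Concretely, for a test function $\varphi\in C^2[0,1]$ with $\varphi(0)=\varphi'(1)=0$, I would apply Dynkin's formula to the empirical energy functional $\langle T_n(t), \varphi\rangle:=\frac{1}{n+1}\sum_{x=0}^n \varphi(x/(n+1))\mathcal E_x(n^2 t)$. Using \eqref{eq:current} and a discrete integration by parts on the bulk currents $j_{x,x+1}$, this rewrites as the initial value plus the time integral of a bulk term involving $\Delta^{\text{disc}}\varphi$ against currents, plus boundary contributions coming from $\varphi(1)$ (linked to $j_{n,n+1}(s)=-{\cal F}_n(s)p_n(s)$) and from $\varphi'(0)$ (linked to $j_{-1,0}(s)=2\gamma(T_--p_0^2(s))$), plus a martingale remainder.

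Next I would show that the martingale term vanishes in $L^2$ after diffusive rescaling: its quadratic variation is a sum over $x$ of $O(1/n^2)$ integrated energy contributions, so the uniform bound on the time-integrated energy (Corollary \ref{lem:bound2}) gives $O(1/n)$. For the bulk integral, I would invoke the fluctuation–dissipation identity \eqref{eq:33} to split $j_{x,x+1}$ into (i) a gradient of a local function of the configuration which, after a second summation by parts, produces a term matching $\frac{D}{4\gamma}\varphi''(u)$ against the kinetic energy profile, and (ii) a term of the form $\mathcal G_t(\text{local})$, which after time integration cancels up to a boundary in time of size $O(n^{-2}\cdot n)=o(1)$ by the energy bound. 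The identification of the diffusion coefficient as $D/(4\gamma)$ requires the local equilibrium results of Proposition \ref{prop010803-22}, which express the limiting covariances of time-integrated positions in terms of the local temperature via the Green function, together with the equipartition Proposition \ref{cor012912-21} which allows me to replace $\langle p_x^2\rangle$ by $\langle \mathcal E_x\rangle$ (and by a half of the total energy density) in the limit.

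The two boundary terms are then handled separately. The right boundary term $-\int_0^{n^2 t}\frac{1}{n}{\cal F}_n(s)\bbE_{\mu_n}p_n(s)\,ds\,\varphi(1)$ converges to $-Jt\,\varphi(1)$ by Theorem \ref{th4}. The left boundary term, which after summation by parts carries the factor $\varphi'(0)$ multiplied by a time integral of $2\gamma(T_--\bbE_{\mu_n}p_0^2(s))$, is transformed using again the fluctuation–dissipation splitting at $x=0$ together with Proposition \ref{prop-boundaryeq}: the local equilibrium at the boundary forces the expected boundary kinetic energy to be absorbed into the gradient structure, yielding $\frac{DT_- t}{4\gamma}\varphi'(0)$ in the limit. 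To run this scheme on arbitrary initial data, I would first establish tightness of $(T_n(t,\cdot))_n$ in $C([0,T];\mathcal M_{\mathrm{fin}}([0,1]))$ from the uniform energy bound of Section \ref{sec:bound-energ}, so that any limit point is continuous in time and satisfies \eqref{eq:5w}; the uniqueness proven in Appendix \ref{A:unique} then upgrades subsequential convergence to full convergence and identifies the limit with the weak solution whose initial datum is the $T_0$ provided by \eqref{eq:6}.

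The main obstacle will be the rigorous passage to the limit in the boundary terms, and more specifically the left Dirichlet boundary. Away from the boundaries the fluctuation–dissipation relation and local equilibrium in the bulk are essentially averaged over macroscopic intervals, which tolerates $o(1)$ errors per site; but at $x=0$ the contribution of $\bbE_{\mu_n}p_0^2(s)$ carries a weight that does not vanish, so one really needs a local equilibrium statement at the single boundary site—this is exactly the content of Proposition \ref{prop-boundaryeq}, whose role is to turn the bath condition $p_0^2\sim T_-$ into the correct Fourier gradient at $u=0$ through the Green function constant $D$. All other ingredients (tightness, martingale smallness, summation by parts, equipartition, uniqueness) are more standard once this boundary local equilibrium is available.
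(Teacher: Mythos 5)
Your proposal follows essentially the same route as the paper: discrete summation by parts on the energy conservation law, the fluctuation--dissipation identity \eqref{eq:33}, equipartition (Proposition \ref{cor012912-21}), bulk and boundary local equilibrium (Propositions \ref{prop010803-22} and \ref{prop-boundaryeq}), Theorem \ref{th4} for the right boundary term, and compactness plus uniqueness of the weak solution \eqref{eq:5w}. The only deviations are cosmetic and harmless: since \eqref{eq:3} concerns expectations, the Dynkin/martingale step is unnecessary (the paper works directly with $\bbE[{\cal E}_x]$ and \eqref{eq:current-n1}), and the term $\varphi'(0)DT_-t/(4\ga)$ actually comes from the boundary contribution $\frac{1}{4\ga}\varphi'(0)\int_0^t\bbE[\mathfrak F_0(n^2s)]\,\dd s$ generated by the second summation by parts of $\nabla\mathfrak F_x$ (the current $j_{-1,0}$ enters with weight $\varphi(0)=0$), rather than from $\varphi'(0)$ times $\int 2\ga(T_--\bbE[p_0^2])$, although your appeal to Proposition \ref{prop-boundaryeq} together with \eqref{eq:32c} is exactly the right ingredient to evaluate it.
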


  \begin{remark}
 {   The initial energy $\cal E_x(0)$ can be represented
    as the sum $\cal E_x^{\rm th}+\cal E_x^{\rm mech}$ of the \emph{thermal energy}
    $$
   \cal E_x^{\rm th}:= \frac 12 \left[(p'_x)^2 + (q'_x - q'_{x-1})^2 + \omega_0^2
      (q'_x)^2\right]
    $$
     and the \emph{mechanical energy}
     $$
     \cal E_x^{\rm mech}:=\frac 12 \left[\bar p_x^2 + (\bar q_x - \bar q_{x-1})^2
       + \omega_0^2 \bar q_x^2\right].
     $$
    Here
    $q'_x = q_x - \bar q_x$ and $p'_x = p_x - \bar
    p_x$, with
$\bar p_x  :=  \int_{\Om_n}p_x
\mu_n(\dd\qv,\dd\pv)$ and $\bar q_x  :=  \int_{\Om_n}q_x
\mu_n(\dd\qv,\dd\pv)$.}
   
{   If $\cal E_x^{\rm mech}\not=0$ and
   satisfies \eqref{fHn}, with $\delta=1$, then 
    the initial measure $T_0(du)$ is the macroscopic distribution of the total energy and
    not of the  \emph{temperature}, where the latter is understood as the thermal energy.
    Nevertheless, as a consequence of Proposition \ref{prop011012-21},
    at any macroscopic positive time the entire mechanical energy is transformed
    immediately into the
    thermal energy, so that $T(t,  \dd u)$ for $t>0$ can be seen as the macroscopic
    temperature distribution. The situation is different for the unpinned dynamics
    ($\omega_0 = 0$) where the transfer of mechanical energy
    to thermal energy happens slowly at macroscopic times (see \cite{kos3}).}
  \end{remark}

  \begin{remark}
    \label{rem-mixed}
   { Concerning Theorem \ref{th1},  a similar proof will work}
    in the case where two Langevin heat baths
    at two temperatures, $T_-$ and $T_+$
    are  {placed} at the boundaries, in the absence of the periodic forcing.
    In this case the macroscopic equation
    will be the same but with boundary conditions
    $T(t,0) = T_-$ and $T (t,1) =T_+$.

    Also, in the  absence of any heat bath, we could apply two periodic forces $\mathcal F_n^{(0)}(t)$ and
    $\mathcal F_n^{(1)}(t)$ respectively at the left and right boundary.
    They will generate two incoming energy current, $J^{(0)}>0$ on the left and $J^{(1)}<0$ on the right,
    given by the corresponding formula \eqref{051021-05z}, and we will have
    the same equation but with boundary conditions
    $\partial_u T (t,0) = -\frac{4\gamma   J^{(0)}}{D}$ and $\partial_u T (t,1) = -\frac{4\gamma   J^{(1)}}{D}$.
    Of course in this case the total  energy increases in time and
    {periodic} stationary states do not exist.

    In the case where both a heat bath and a periodic force are present on the
    same side,
    say on the right endpoint,
    then the macroscopic boundary condition arising
    is  $T (t,1) =T_+$, i.e. the periodic forcing is
    ineffective on the macroscopic level,
    and all the energy generated by its work
    will flow into the heat bath. {It would be interesting to
    investigate what happens when the amplitude of the forcing is
    larger than considered here ($-1/2<a\le0$ in
    \eqref{Fnta-b}). However, it
    is not yet clear to us what occurs in this case.}
  \end{remark}

\begin{remark}
 { If the initial data $T_0$ is $C^1$ smooth and satisfies the boundary condition in \eqref{eq:5},
  then {the initial-boundary value  problem} \eqref{eq:5} has a unique strong solution $T(t,u)$ that belongs
  to the intersection of the spaces
  $C\big([0,+\infty)\times[0,1]\big)$ and $
  C^{1,2}\big((0,+\infty)\times(0,1)\big)$ - the space of functions
  continuously differentiable once in the first and twice in the
  second variable, see e.g. \cite[Corollary 5.3.2,
  p.147]{friedman}. This solution coincides then with the unique weak
  solution  in the sense of Definition \ref{df012701-23}.}
\end{remark}
\begin{remark}
  In the proof of \autoref{th1} we need to show a result about the equipartition of energy
  (cf. Proposition \ref{cor012912-21}). As a consequence the limit profile of the energy
  equals the limit profile of the temperature, i.e. we have
  {\begin{equation}
    \label{eq:3T}
   \lim_{n\to\infty} \frac 1{n+1}
    \sum_{x=0}^n \int_0^{+\infty}\varphi\left(t,\frac x{n+1} \right)  \bbE_{\mu_n}
    \left({p}_x^{2}(n^2 t)\right) \dd t
    = \int_0^{+\infty}\dd t\int_0^1 \varphi (t,u) T(t, \dd u),
  \end{equation}
  for any test function as in the statement of Theorem \ref{th1}.}
\end{remark}



\section{Entropy, energy and currents bounds}
\label{sec:bound-energ}

We first prove that the initial entropy bound \eqref{eq:ass0} holds for all times.
\begin{proposition}
\label{cor021211-19}
Suppose that the law of the initial configuration admits the density $ f_n(0,
{\bf q}, {\bf p})$ w.r.t. the Gibbs measure $\nu_{T-}$ that satisfies
\eqref{eq:ass0}. Then, for any $t>0$ there exists $ f_n(t, {\bf q},
{\bf p})$ - the density of the law of  the configuration
$\big({\bf q}(t), {\bf p}(t)\big)$.
 In addition, 
for any $t$  there exists a constant $C$ independent of $n$ such that 
\begin{equation}
  \label{eq:10bb}
 \sup_{s\in[0,t]} \mathbf{H}_{n}(f_n(n^2 s)) \le C n, 
\end{equation}
\end{proposition}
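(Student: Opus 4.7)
The plan is to derive a differential inequality for $\mathbf{H}_n(f_n(t))$ and integrate it from $0$ up to the microscopic time $n^2 t$. Existence of the density $f_n(t)$ for $t>0$ follows from a H\"ormander-type hypoellipticity argument: the Langevin noise acting on $p_0$, propagated through the pinned Hamiltonian flow (which couples all momenta because $\om_0>0$), produces a law absolutely continuous with respect to $\nu_{T_-}$ at any positive time. I will take this as a standard input and focus on the entropy estimate.

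Set $L_t := \mathcal G_t^{*}$, the $L^2(\nu_{T_-})$-adjoint, so that $\partial_t f_n = L_t f_n$. The four pieces decompose cleanly: the free Hamiltonian flow $\mathcal A_0 := \mathcal A_t - \mathcal F_n(t)\partial_{p_n}$ is antisymmetric because $\nu_{T_-}$ is invariant for it; $S_{\text{flip}}$ and $S_-$ are self-adjoint and dissipative (Fisher and jump information, respectively); and since $\partial_{p_n}\log(\dd\nu_{T_-}/\dd\pv) = -p_n/T_-$, one has $(\mathcal F_n(t)\partial_{p_n})^{*} = -\mathcal F_n(t)\partial_{p_n} + \mathcal F_n(t)p_n/T_-$. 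Computing $\int (L_t f_n)\log f_n\,\dd\nu_{T_-}$, the Hamiltonian and dissipative parts behave as expected, while the two forcing contributions from $-\mathcal F_n\partial_{p_n}$ and $\mathcal F_n p_n/T_-$ undergo a crucial cancellation after one integration by parts in $p_n$, leaving only a boundary term in the mean momentum. The net identity is
\[
\frac{\dd}{\dd t}\mathbf{H}_n(f_n(t)) \;=\; -D_n(f_n(t)) \;+\; \frac{\mathcal F_n(t)}{T_-}\,\bar p_n(t),
\]
where $D_n(f)\ge 0$ bundles the non-negative dissipations from $S_-$ and $S_{\text{flip}}$ and $\bar p_n(t):=\bbE_{\mu_n}[p_n(t)]$.

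To control the remaining work term I pass to the deterministic evolution of the means $(\bar\qv(s),\bar\pv(s))$: the flips and the left Langevin drift average to $-2\gamma\bar p_x$ at every site $0\le x\le n$, so the ``energy of the averages''
\[
\bar H_n(s) \;:=\; \sum_{x=0}^n \tfrac12\big[\bar p_x(s)^2 + (\bar q_x(s)-\bar q_{x-1}(s))^2 + \om_0^2\,\bar q_x(s)^2\big]
\]
obeys $\dd\bar H_n/\dd s \;=\; -2\gamma\sum_x \bar p_x^2 + \mathcal F_n(s)\bar p_n(s)$. A Young inequality on the forcing term and integration up to $n^2 t$ give
\[
\int_0^{n^2 t}\!\!\bar p_n(s)^2\,\dd s \;\le\; C\Big(\bar H_n(0) + \int_0^{n^2 t}\!\!\mathcal F_n(s)^2\,\dd s\Big) \;\le\; Cn(1+t),
\]
using $\bar H_n(0)\le\bbE_{\mu_n}[\mathcal H_n(0)]\le Cn$ by Jensen (convexity of $\mathcal H_n$) together with the standard entropy inequality applied to \eqref{eq:ass0}, and $\int_0^{n^2 t}\mathcal F_n(s)^2\,\dd s = O(nt)$ from \eqref{Fnt} and the periodicity of $\mathcal F$. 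Cauchy--Schwarz then yields $\big|\int_0^{n^2 s}\mathcal F_n\bar p_n/T_-\,\dd r\big|\le Cn$ uniformly for $s\in[0,t]$; combined with $\mathbf{H}_n(f_n(0))\le Cn$ and discarding the non-positive $-\!\int D_n$, this proves \eqref{eq:10bb}.

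The main technical difficulty is the rigorous justification of the formal adjoint computation: a priori $f_n(t)$ only has finite entropy, so $\int(\partial_{p_n}f_n)\log f_n\,\dd\nu_{T_-}$ is not automatically well defined and the cancellation producing $\bar p_n(t)$ requires sufficient integrability of $f_n$ and $\partial_{p_n}f_n$. The standard remedy, which \autoref{secF} is advertised to carry out, is to truncate $\log u$ to a bounded range, mollify $f_n(t)$ by a smooth kernel $\rho_\varepsilon$, derive the identity for the regularized density and truncated logarithm, and then pass to the limits $\varepsilon\to 0$ and truncation removed, using the lower semicontinuity of relative entropy and Fatou's lemma on the non-negative dissipation $D_n$.
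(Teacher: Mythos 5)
Your proof is correct, and the entropy-production computation itself is the same as the paper's: the noise parts $S_{\rm flip}$, $S_-$ contribute non-positively, and the only positive term is the boundary work $\tfrac{1}{T_-}\int_0^{n^2t}\cF_n(s)\bar p_n(s)\,\dd s = -\tfrac{n}{T_-}J_n(t,\mu_0)$. Where you genuinely diverge is in bounding this work term. The paper simply invokes \autoref{th4}, i.e.\ the sharp asymptotics $J_n(t,\mu_n)\to Jt$, whose proof (Appendix \ref{app-means}) rests on an explicit Fourier/resolvent analysis of the mean dynamics in the eigenbasis of the Neumann laplacian. You instead prove the cruder bound $|nJ_n(t,\mu_n)|\le C(t)\,n$ directly, via a Lyapunov estimate for the ODE of the means: $\tfrac{\dd}{\dd s}\bar H_n=-2\gamma\sum_x\bar p_x^2+\cF_n(s)\bar p_n(s)$, then Young, $\int_0^{n^2t}\cF_n^2=O(nt)$, $\bar H_n(0)\le Cn$ (Jensen plus the entropy inequality, exactly as the paper notes below \eqref{eq:ass0}), and Cauchy--Schwarz. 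This is more elementary and self-contained — it does not require identifying the limit current $J$ — while the paper's route reuses a theorem it needs anyway and in exchange keeps the proof of the proposition to a few lines. Your treatment of the non-smooth case (truncating the logarithm, mollifying, lower semicontinuity of entropy and Fatou) also differs from the paper's Appendix \ref{secF}, which instead uses the variational entropy inequality \eqref{022501-23} of Bernardin--Olla, taking an infimum over smooth reference densities $\psi$; both are standard and adequate, and you correctly flag that this is where the real technical work lies. One cosmetic remark: since the initial law is already assumed absolutely continuous with density $f_n(0)$, the existence of $f_n(t)$ does not really need a hypoellipticity argument (and the momenta are coupled by the springs, not by the pinning $\om_0$); absolute continuity is simply propagated by the flow, as the paper implicitly assumes.
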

\proof
{For simplicity sake, we present  here a proof under an additional
  assumption that $ f_n(t,{\bf q}, {\bf p})$ is a smooth function such
  that $\mu_t(\dd\qv,\dd\pv)= f_n(t,{\bf q}, {\bf p}) \dd {\bf q}\dd
  {\bf p}$ is the solution  of the forward equation \eqref{eq:fwe}.
The general case is treated in Appendix \ref{secF}.}
Using \eqref{eq:7} for the generator ${\cal G}_t$ we conclude that
\begin{align*}
 \mathbf{H}_{n}(f_n(n^2 t)) - \mathbf{H}_{n}(f_n(0)) 
   =\int_0^{n^2 t} \dd s \int_{\Om_n} f_n(s) {\cal G}_{s} \log f_n(s) \dd \nu_{T-}
    ={\rm I}_n+{\rm II}_n,
\end{align*}
with
\begin{align*}
  &
    {\rm I}_n:=\ga \int_0^{n^2 t} \dd s \int_{\Om_n} f_n(s)\left({\cal S}_{\rm flip}
    + 2{\cal S}_{-}\right) \log f_n(s) \dd \nu_{T-},\\
  &
    {\rm II}_n :=\int_0^{n^2 t} \dd s \int_{\Om_n} f_n(s) {\cal A}_{ s}\log f_n(s) \dd \nu_{T-}.
\end{align*}
We have that $ {\rm I}_n\le 0$ because ${\cal S}_{\rm flip}$ and
${\cal S}_{-}$ are symmetric negative operators with respect to the measure $\nu_{T-}$.
The only positive contribution comes from the second term
where the boundary work defined by \eqref{eq:16} appears:
\begin{align*}
    {\rm II}_n = \int_0^{n^2 t} \dd s  \cF_n( s) \int_{\Om_n} \frac{p_n}{T_-} f_n(s) \dd  \nu_{T-}
       =-\frac{n}{T_-} {J_n(t,\mu_0)},
\end{align*}
{where $\dd\mu_0:=f_n(0) \dd  \nu_{T-}$}.
Therefore
\begin{align*}
  &\mathbf{H}_{n}(f_n(n^2 t))\le \mathbf{H}_{n}(f_n(0))
   -\frac{n }{T_-} J_n(t,\mu_0).
\end{align*}
The conclusion of the proposition then follows from a direct
application of \eqref{eq:ass0} and Theorem \ref{th4}. 
  \qed

 {To abbreviate the notation we shall omit the index by the
   expectation sign, indicating the initial
condition. }
 \begin{corollary}[Energy bound]\label{th2}
 For any $t_* \ge0$  we have
 \begin{equation}
  \label{eq:4a}
  \sup_{t\in[0,t_*]}\sup_{n\ge 1}
  \bbE \left[ \frac 1{n+1} \sum_{x=0}^n {\cal E}_x(n^2 t)\right] =E(t_*)<+\infty.
\end{equation}
\end{corollary}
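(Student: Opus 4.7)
The plan is to obtain \eqref{eq:4a} as a direct consequence of the uniform entropy estimate \eqref{eq:10bb} proved in Proposition~\ref{cor021211-19}, via the classical entropy inequality: for any probability density $f$ on $\Om_n$ with respect to $\nu_{T_-}$, any measurable $g$, and any $\alpha>0$ for which $\int e^{\alpha g}\, \dd\nu_{T_-}<+\infty$, one has
$$\int f g\, \dd\nu_{T_-} \le \frac{1}{\alpha}\Big(\mathbf H_n(f) + \log \int e^{\alpha g}\, \dd\nu_{T_-}\Big).$$
I would apply this with $f = f_n(n^2 t)$, $g = \mathcal H_n/(n+1)$, and $\alpha = \alpha_0 n$, where $\alpha_0 \in (0,1/T_-)$ is an arbitrary fixed constant. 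Since $\frac{1}{n+1}\sum_{x=0}^n \mathcal E_x = \mathcal H_n/(n+1)$, the left-hand side is exactly the quantity we want to bound.

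The entropy term on the right-hand side is bounded by $Cn$ uniformly for $t \in [0, t_*]$ by Proposition~\ref{cor021211-19}, so after dividing by $\alpha_0 n$ its contribution is at most $C/\alpha_0$, independently of $n$ and of $t \in [0, t_*]$.

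For the log-moment I would exploit the Gaussian structure of $\nu_{T_-}$: because $\omega_0>0$, the Hamiltonian is a positive definite quadratic form on $\Om_n$ and $\nu_{T_-}$ is a genuine non-degenerate Gaussian measure, with partition function $Z(T)$ proportional to $T^{n+1}$. A straightforward change of the effective temperature $1/T' = 1/T_- - \alpha_0 n/(n+1)$ then gives
$$\int e^{\alpha_0 n \mathcal H_n/(n+1)}\, \dd\nu_{T_-} = \frac{Z(T')}{Z(T_-)} = \Big(1 - \tfrac{\alpha_0 n T_-}{n+1}\Big)^{-(n+1)},$$
whose logarithm is bounded by $Cn$ uniformly in $n$, since $\alpha_0 T_-<1$ keeps the base bounded away from zero. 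Dividing again by $\alpha_0 n$ yields a uniform bound in $n$.

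Summing the two contributions gives \eqref{eq:4a} with $E(t_*) < +\infty$ depending only on $\alpha_0$, $T_-$ and the entropy constant in \eqref{eq:10bb}. There is no substantive obstacle; the only place where care is needed is the Gaussian computation of the exponential moment, which relies crucially on the positive definiteness of the quadratic form guaranteed by the pinning $\omega_0>0$. In the unpinned case a zero mode would appear and the partition function would not be well defined, requiring a different argument.
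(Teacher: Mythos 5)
Your argument is correct and is essentially the paper's own proof: the paper likewise applies the entropy inequality (with a fixed small $\al<1/T_-$ rather than your equivalent rescaling $\al_0 n$ against $\mathcal H_n/(n+1)$), bounds the Gaussian log-moment of the Hamiltonian by $C_\al n$ using $\om_0>0$, and invokes the uniform entropy bound \eqref{eq:10bb} of Proposition \ref{cor021211-19}. No differences of substance.
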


\begin{proof}
It follows from the entropy inequality, see e.g.~
  \cite[p.~338]{KL}, that for $\al>0$ small enough we can find $C_\al>0$ such that
\begin{equation}\label{eq:boundent}
 \bbE \left[\sum_{x=0}^n{\cal E}_x(n^2 t)\right]
  \le \frac{1}{\al}\big(C_\al n+ \mathbf{H}_{n}(t)\big),\qquad t\ge0 .
 \end{equation}
\end{proof}

From Theorem \ref{th4} and  Corollary \ref{th2}  we
immediately conclude the following.
\begin{corollary}[Current size]\label{lem:bound2}
  For any $t_* \ge0$ there exists $C>0$ such that
  {\begin{equation}
    \label{eq:11}
           \sup_{x=0,\ldots,n+1,\,t\in[0,t_*]}\left|\int_0^t
             \mathbb{E} \left[  j_{x-1,x}(n^2s)\right]
             \dd s\right| \le  \frac{C}{ n },\quad n=1,2,\ldots.
  \end{equation}}
  In particular, for any $t>0$ there exists $C>0$ such that 
 \begin{equation}
    \label{eq:32c}
     \Big|
       \int_0^{t}  \Big\{\mathbb
    E \big[ p_0^2(n^2s)   \big]- T_-  
    \Big\}\dd s\Big|\le\frac{C}{n},
  \end{equation}
\end{corollary}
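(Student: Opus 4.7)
The plan is to first establish \eqref{eq:11} at the right endpoint $x=n+1$ directly from the definition \eqref{eq:16} and \autoref{th4}, then propagate the bound to arbitrary $x$ using the microscopic conservation law \eqref{eq:current} together with the energy bound of Corollary \ref{th2}, and finally read off \eqref{eq:32c} from the explicit form of the leftmost boundary current \eqref{eq:current-bound}.

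\textbf{Right boundary.} After the change of variables $u=s/n^2$, definition \eqref{eq:16} reads
\[
\int_0^t \bbE\left[j_{n,n+1}(n^2 s)\right]\dd s
 \;=\; \frac{1}{n^2}\int_0^{n^2 t}\bbE\left[j_{n,n+1}(s)\right]\dd s
 \;=\; \frac{J_n(t,\mu_n)}{n}.
\]
Since \autoref{th4} gives $\sup_{t\in[0,t_*]}|J_n(t,\mu_n)-Jt|\to 0$, the family $\{J_n(t,\mu_n):n\ge1,\,t\in[0,t_*]\}$ is uniformly bounded, which yields \eqref{eq:11} for $x=n+1$.

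\textbf{Telescoping to interior sites.} Introduce the tail sum $\bar{\cal E}_x(t):=\sum_{y=x}^n {\cal E}_y(t)$. By \eqref{eq:current} the continuity equation telescopes to
\[
\mathcal G_t \bar{\cal E}_x(t) \;=\; j_{x-1,x}(t) - j_{n,n+1}(t),\qquad x=0,\dots,n+1,
\]
so Dynkin's formula, applied at time $n^2 s$ and followed by the same diffusive rescaling, yields
\[
\int_0^s\!\!\bbE[j_{x-1,x}(n^2 u)]\dd u
 \;=\; \int_0^s\!\!\bbE[j_{n,n+1}(n^2 u)]\dd u
 + \frac{1}{n^2}\Big(\bbE[\bar{\cal E}_x(n^2 s)]-\bbE[\bar{\cal E}_x(0)]\Big).
\]
Since $0\le \bar{\cal E}_x\le \sum_{y=0}^n{\cal E}_y$, Corollary \ref{th2} controls the second term by $2E(t_*)(n+1)/n^2 = O(1/n)$, while the first term is $O(1/n)$ by the right-boundary case. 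Combining these two estimates gives \eqref{eq:11} uniformly in $x$ and $t\in[0,t_*]$.

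\textbf{Left-boundary temperature bound.} Specializing \eqref{eq:11} to $x=0$ and using the explicit form $j_{-1,0}(t)=2\gamma(T_- - p_0^2(t))$ from \eqref{eq:current-bound} we obtain
\[
\Big|\int_0^{t}\!\Big\{\bbE\left[p_0^2(n^2 s)\right]-T_-\Big\}\dd s\Big|
 \;=\; \frac{1}{2\gamma}\Big|\int_0^{t}\!\bbE[j_{-1,0}(n^2 s)]\dd s\Big|
 \;\le\; \frac{C}{n},
\]
which is \eqref{eq:32c}. No step presents a real obstacle; the only care needed is in matching the time-scaling conventions between \eqref{eq:16} and the statement of the corollary, and in noting that \autoref{th4} is invoked only at the right boundary, the propagation to other sites being purely energetic.
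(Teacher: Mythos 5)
Your proposal is correct and follows essentially the same route as the paper: the telescoped continuity equation $\int_0^t\bbE[j_{x-1,x}(n^2s)]\dd s=\int_0^t\bbE[j_{n,n+1}(n^2s)]\dd s+n^{-2}\sum_{y=x}^n\big(\bbE[{\cal E}_y(n^2t)]-\bbE[{\cal E}_y(0)]\big)$, with the boundary term controlled by \autoref{th4} and the energy term by Corollary \ref{th2}, and then \eqref{eq:32c} read off from \eqref{eq:current-bound} at $x=0$. No gaps.
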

\proof
By the local conservation of energy
\begin{equation}
\label{eq:current-n1}
n^{-2}\frac{\dd}{\dd t}\bbE[{\cal E}_x(n^2 t)]=
\bbE\Big[j_{x-1,x}(n^2 t) -j_{x,x+1}(n^2 t)\Big] .
\end{equation}
Therefore
\begin{align}
  \label{eq:current-2}
  \int_0^t \bbE j_{x-1,x}(n^2 s)\dd s=\int_0^t \bbE j_{n,n+1}(n^2 s)\dd s
  +n^{-2}\sum_{y=x}^n\Big(\bbE[{\cal E}_y(n^2 t)]
  -\bbE[{\cal E}_y(0)]\Big),
\end{align}
and bound \eqref{eq:11} follows directly from estimates \eqref{Jnn}
and  \eqref{eq:4a}. Estimate \eqref{eq:32c} is a consequence of the
definition of $j_{-1,0}$ (see \eqref{eq:current-bound}) and \eqref{eq:11}.
\qed

\section{Equipartition of energy and
  Fluctuation-dissipation relations}
\label{sec:equip-energy-fluct}

       \subsection{Equipartition of the energy}

In the present section we show the equipartition
property of the energy.
\begin{proposition}
\label{cor012912-21}
  Suppose that $\varphi\in {C^1}[0,1]$ is such that ${\rm supp}\,\varphi\subset(0,1)$. Then,
  \begin{equation}
    \label{011312-21}
\lim_{n\to+\infty}\frac{1}{n+1}\sum_{x=0}^n
\varphi\left(\frac{x}{n+1}\right) \int_0^{t} \mathbb E\Big[
p_x^2(n^2s)- \big(q_x(n^2s)-q_{x-1}(n^2s)\big)^2-\om_0^2  q_x^2(n^2s)\Big]\dd s=0.
\end{equation}
  \end{proposition}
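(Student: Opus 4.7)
My plan is to derive a microscopic identity expressing the integrand as the generator $\mathcal G_t$ applied to a local observable plus pure discrete--gradient correctors, and then to pass to the limit using Dynkin's formula and the a priori energy bound of Corollary~\ref{th2}. For $x\in\{1,\ldots,n-1\}$ in the bulk, a direct computation gives $\mathcal A_t(q_xp_x) = p_x^2 + q_x(\Delta q_x - \om_0^2 q_x)$; combined with the elementary algebraic identity
\[
q_x\Delta q_x = \tfrac12\Delta q_x^2 - \tfrac12\bigl[(q_x-q_{x-1})^2 + (q_{x+1}-q_x)^2\bigr],
\]
with $\gamma S_{\rm flip}(q_xp_x) = -2\gamma q_xp_x$ and with $S_-(q_xp_x)=0$, this yields
\[
\mathcal G_t(q_xp_x) = p_x^2 - \om_0^2 q_x^2 + \tfrac12 \Delta q_x^2 - \tfrac12\bigl[(q_x-q_{x-1})^2 + (q_{x+1}-q_x)^2\bigr] - 2\gamma q_x p_x.
\]
Since $q_x^2$ is velocity--independent, one has $\mathcal G_t(q_x^2) = 2 q_x p_x$, so setting $\Phi_x := q_x p_x + \gamma q_x^2$ cancels the dissipative cross term and produces the clean balance
\[
p_x^2 - (q_x-q_{x-1})^2 - \om_0^2 q_x^2 = \mathcal G_t \Phi_x - \tfrac12 \Delta q_x^2 + \tfrac12\bigl[(q_{x+1}-q_x)^2 - (q_x-q_{x-1})^2\bigr], \quad x=1,\ldots,n-1.
\]

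I would then multiply this identity by $\varphi(x/(n+1))$, sum over $x$ (only bulk indices contribute, since $\mathrm{supp}\,\varphi$ is a compact subset of $(0,1)$), divide by $n+1$, integrate over $s\in[0,t]$ in the diffusive scale, and take expectations. By Dynkin's formula, $\int_0^t \bbE[(\mathcal G_{n^2 s}\Phi_x)(n^2 s)]\,\dd s = n^{-2}\bigl(\bbE\Phi_x(n^2 t) - \bbE\Phi_x(0)\bigr)$, and since $|\Phi_x| \le C\,\mathcal E_x$ (with $C$ depending on $\om_0$ and $\gamma$), Corollary~\ref{th2} bounds the generator contribution by $\frac{C}{(n+1)n^2}\sum_x\bigl(\bbE\mathcal E_x(n^2 t)+\bbE\mathcal E_x(0)\bigr) = O(1/n^2)$. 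For the two gradient correctors, a single summation by parts produces no boundary terms (by the support condition on $\varphi$), and the mean value estimates $|\varphi((x\pm 1)/(n+1))-\varphi(x/(n+1))|\le \|\varphi'\|_\infty/(n+1)$ bound them by $\frac{C\|\varphi'\|_\infty}{n(n+1)}\int_0^t\sum_x\bbE\mathcal E_x(n^2 s)\,\dd s = O(1/n)$, again by Corollary~\ref{th2}.

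The only genuinely non--routine point is the choice of the local function $\Phi_x$: the correction $\gamma q_x^2$ is precisely what absorbs the dissipative contribution produced by the velocity flips, which would otherwise leave a term $2\gamma q_x p_x$ that is not a discrete gradient and would fail to vanish in the limit. Once the identity is in hand, the remaining estimates---Dynkin's formula, summation by parts, and the a priori energy bound---are routine.
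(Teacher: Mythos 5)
Your proof is correct and follows essentially the same route as the paper: both use the fluctuation--dissipation identity built on the local function $q_xp_x+\gamma q_x^2$ (whose $\gamma q_x^2$ part cancels the flip dissipation), then time-integrate, apply Dynkin/forward-equation bounds of order $n^{-2}$, and sum by parts against the compactly supported test function using the energy bound of Corollary~\ref{th2}. The only cosmetic difference is that you write the gradient corrector as $-\tfrac12\Delta q_x^2+\tfrac12\nabla\bigl[(q_x-q_{x-1})^2\bigr]$ whereas the paper writes the equivalent $\nabla^\star\bigl[q_x(q_{x+1}-q_x)\bigr]$.
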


  \proof
  {After a simple calculation we obtain the following
  fluctuation-dissipation relation: for $x= 1, \dots, n-1$, }
  \begin{equation}
    \label{eq:4}
    p_x^2 - \omega_0^2 q_x^2 - (q_x - q_{x-1})^2 = \nabla^\star\left[q_x(q_{x+1} - q_x)\right]
    + \mathcal G_t\left(q_x p_x + \gamma q_x^2\right),
  \end{equation}
  {where the discrete gradient $\nabla$ and its adjoint $\nabla^\star$ are defined in \eqref{012901-23} below.}

Therefore,
\begin{align}
  \label{010905-22}
  & \int_0^t \mathbb E \Big[p_x^2(n^2s) - \om^2_0q_x^2(n^2s)
    -(q_x(n^2s) - q_x(n^2 s))^2\Big] \dd s\\
  &
    =
    \nabla \int_0^t \mathbb E\left[q_x(n^2s)(q_{x+1}(n^2s) - q_x(n^2s))\right] \dd s
    \notag  \\
   & +n^{-2} \mathbb E
     \Big[q_x(n^2t)p_x(n^2t) +2\ga q_x^2(n^2t) \Big]
     - n^{-2} \mathbb E
    \Big[q_x(0)p_x(0) +2\ga q_x^2(0) \Big].\notag 
\end{align}
After summing up against the test function $\varphi$ (that has compact support strictly contained in $(0,1)$) and using the energy bound
\eqref{eq:4a}
we conclude \eqref{011312-21}.

\qed


\subsection{Fluctuation-dissipation relation}

In analogy to  \cite[Section 5.1]{klo22} define
\begin{equation}
  \label{eq:32}
  \begin{split}
  \frak f_x &:= \frac 1{4\gamma} \left(q_{x+1} - q_x\right)\left(p_x + p_{x+1}\right)
  + \frac 14 \left(q_{x+1} - q_x\right)^2 ,\qquad x=0, \dots, n-1,\\
\mathfrak F_x &:= p_x^2 + \left(q_{x+1} - q_x\right) \left(q_{x} - q_{x-1}\right)
    -\omega_0^2 q_x^2,\qquad x=0,\dots, n,
\end{split}
\end{equation}
with the convention that $q_{-1} = q_{0}$, $q_n=q_{n+1}$.
Then
\begin{equation}
  \label{eq:33}
  j_{x,x+1} =  - \frac 1{4\gamma} \nabla  \mathfrak F_x + \mathcal G_t \frak f_x
    - \frac {\delta_{x,n-1}} {4\gamma} \cF_n(t) \left(q_{n} - q_{n-1}\right),
    \quad x=0, \dots, n-1.
  \end{equation}

\section{Local equilibrium and the Proof of Theorem \ref{th1}}
\label{sec:proofth1}

The fundamental ingredients in the proof of Theorem \ref{th1} are
the identification of the work done at the boundary given by \autoref{th4},
the equipartition and the fluctuation-dissipation relation contained in \autoref{sec:equip-energy-fluct}, and
 the following \emph{local equilibrium} results. In the bulk we have the following:
\begin{proposition}
\label{prop010803-22}
Suppose that $\varphi\in C[0,1]$ is such that ${\rm
  supp}\,\varphi\subset(0,1)$. Then 
  \begin{equation}
    \label{eq:32a}
    \lim_{n\to+\infty}\frac{1}{n+1} \sum_{x=0}^{n}
    \varphi\left(\frac{x}{n+1}\right) \int_0^{t}  \mathbb E
    \big[ q_x(n^2s) q_{x+\ell}(n^2s) -
    G_{\omega_0}(\ell) p_x^2(n^2s) \big] \dd s=0,
  \end{equation}
 for $\ell=0,1,2$.  {Here $G_{\omega_0}(\ell)$ is the
Green's function of $-\Delta_{\bbZ} + \omega^2_0$, where
$\Delta_{\bbZ}$ is   the lattice laplacian, see \eqref{GR}.}
\end{proposition}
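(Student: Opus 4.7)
The plan is to reduce Proposition \ref{prop010803-22} to a fluctuation-dissipation identity of the same flavour as the equipartition identity \eqref{eq:4} used in Proposition \ref{cor012912-21}, combined with the control on the two-point correlation matrix carried out in \autoref{ev-covariance}. More precisely, for each $\ell\in\{0,1,2\}$ I would seek local functions $h_x^{(\ell)}=h_x^{(\ell)}(\qv,\pv)$ and $H_x^{(\ell)}=H_x^{(\ell)}(\qv,\pv)$, at most quadratic in $(\qv,\pv)$ and with coefficients bounded uniformly in $n$, such that in the bulk
\begin{equation*}
  q_x q_{x+\ell}-G_{\omega_0}(\ell)\, p_x^2
  = \mathcal G_t h_x^{(\ell)} + \nabla^{\star} H_x^{(\ell)},
\end{equation*}
at least up to terms whose time integral vanishes in the limit $n\to\infty$ by the bounds of Section \ref{sec:bound-energ}. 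Once such an identity is established, integration in time against the test function $\varphi(x/(n+1))$ and the summation by parts against $\nabla^\star$ finish the argument exactly as in the proof of \eqref{011312-21}: the $\mathcal G_t h_x^{(\ell)}$ part produces a boundary-in-time term bounded by $n^{-2}\mathbb E[h_x^{(\ell)}(n^2 t)+h_x^{(\ell)}(0)]$, which is $O(n^{-1})$ by Corollary \ref{th2}, while the gradient term, after discrete integration by parts, becomes $\frac{1}{(n+1)^2}\sum_x \varphi'\!\bigl(\tfrac{x}{n+1}\bigr)\int_0^t\mathbb E[H_x^{(\ell)}(n^2 s)]\,\dd s$, which also vanishes.

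To construct $h_x^{(\ell)}$ and $H_x^{(\ell)}$, I would use the defining identity $(-\Delta_{\bbZ}+\omega_0^2)G_{\omega_0}=\delta_0$ for the Green's function. The case $\ell=0$ is precisely the equipartition identity \eqref{eq:4} once one recognizes that $G_{\omega_0}(0)$ coincides with the equilibrium ratio $\lang q_x^2\rang/\lang p_x^2\rang$ under the pinned Gibbs measure; the cases $\ell=1,2$ are variants obtained by translating \eqref{eq:4} and adding multiples of it with coefficients coming from the recursion satisfied by $G_{\omega_0}$. Concretely, first compute $\mathcal G_t(q_xp_{x+\ell}+q_{x+\ell}p_x)$ and $\mathcal G_t(q_xq_{x+\ell})$, which produce linear combinations of $p_xp_{x+\ell}$, $(q_{x+\ell}-q_{x+\ell-1})q_x+(q_x-q_{x-1})q_{x+\ell}$ and $\omega_0^2 q_xq_{x+\ell}$, plus a dissipative $-2\gamma$ contribution coming from $S_{\mathrm{flip}}$ acting on $qp$. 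One then chooses the coefficients so as to kill the off-diagonal $P_{x,x+\ell}$, $R_{x,x+\ell}$ terms and leave exactly $q_xq_{x+\ell}-G_{\omega_0}(\ell)p_x^2$ up to a discrete divergence; the Green's function identity makes the algebra close.

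The main obstacles will be two-fold. First, the identity above involves the cross correlator $p_xp_{x+\ell}$, which is not yet known to vanish pointwise; one must show, as in \autoref{ev-covariance}, that $\int_0^t\mathbb E[p_x(n^2s)p_{x+\ell}(n^2s)]\dd s$ is small when averaged against $\varphi(x/(n+1))$. This is the content of the local equilibrium proof referenced to Appendix \ref{sec-proofs-le} and is handled by analysing the linear ODE system satisfied by the symbol $(Q_{x,y},R_{x,y},P_{x,y})$, exploiting the dissipation produced by $S_{\mathrm{flip}}$ on the off-diagonal $R$ and $P$ modes. Second, because $\varphi$ is supported strictly inside $(0,1)$, the boundary contributions from the forcing $\cF_n$ and the Langevin bath $S_-$ in $\nabla^\star H_x^{(\ell)}$ only enter through a bounded stretch of indices near the boundary and are absorbed using Corollary \ref{lem:bound2} together with the fact that $\varphi$ (and hence $\varphi'$ after summation by parts) vanishes there. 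Having done this, the conclusion \eqref{eq:32a} follows.
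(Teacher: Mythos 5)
There is a genuine gap. The core of your plan is an exact local fluctuation--dissipation identity $q_xq_{x+\ell}-G_{\omega_0}(\ell)p_x^2=\mathcal G_t h_x^{(\ell)}+\nabla^\star H_x^{(\ell)}$ with \emph{local} quadratic $h_x^{(\ell)},H_x^{(\ell)}$, but no such identity exists, and your claim that the case $\ell=0$ ``is precisely'' the equipartition identity \eqref{eq:4} is not correct: \eqref{eq:4} relates $p_x^2$ to the full local potential energy $\omega_0^2q_x^2+(q_x-q_{x-1})^2$, i.e.\ it gives a single linear relation among $\lang q_x^2\rang_t$, $\lang q_xq_{x+1}\rang_t$ and $\lang p_x^2\rang_t$, which cannot separately identify $\lang q_xq_{x+\ell}\rang_t$ with $G_{\omega_0}(\ell)\lang p_x^2\rang_t$ for each $\ell=0,1,2$ (Proposition \ref{prop010803-22} is strictly stronger than, and consistent with, equipartition only because $(-\Delta_{\bbZ}+\omega_0^2)G_{\omega_0}=\delta_0$). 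When you compute $\mathcal G_t(q_xp_{x+\ell}+p_xq_{x+\ell}+2\gamma q_xq_{x+\ell})$ you produce $2p_xp_{x+\ell}$ together with position correlations at the shifted distances $\ell\pm1$; trying to ``choose the coefficients so the algebra closes'' forces you to invert $-\Delta_{\bbZ}+\omega_0^2$, i.e.\ the would-be $h,H$ are necessarily nonlocal (resolvent-type) kernels. What this actually produces is not $q_xq_{x+\ell}-G_{\omega_0}(\ell)p_x^2$ but, as in \eqref{eq:62}--\eqref{eq:63} of the paper, $\lang S^{(q)}_{x,x+\ell}\rang_t=\sum_y H^{(n)}_{x,y}\lang p_y^2\rang_t+\text{(errors)}$, with a kernel $H^{(n)}_{x,y}$ of total mass $\approx G_{\omega_0}(\ell)$ spread over a neighborhood of $x$ with only $1/(1+(x-y)^2)$ decay.

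Consequently the real analytic content of the proposition is the replacement of $\sum_y H^{(n)}_{x,y}\lang p_y^2\rang_t$ by $G_{\omega_0}(\ell)\lang p_x^2\rang_t$, and this is \emph{not} implied by entropy/energy bounds or by the dissipation of off-diagonal momentum modes: it requires the a priori regularity of the kinetic-energy profile, $\sum_{x}\big[\lang p_x^2\rang_t-\lang p_{x+1}^2\rang_t\big]^2\le C/(n+1)$ and $\sup_x\lang p_x^2\rang_t\le C$ (Proposition \ref{prop031012-21a}), which in the paper is derived from the representation \eqref{eq:50a} together with the spectral lower bound on $I-M$ (Proposition \ref{prop031012-21}, from \cite{klo22}) and the boundary estimate \eqref{eq:32c}. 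Your proposal never produces or invokes such a gradient estimate; the step you defer to ``analysing the linear ODE system for $(Q,R,P)$, exploiting the dissipation of $S_{\rm flip}$'' does not terminate after finitely many local identities (each off-diagonal $p_xp_y$ relation brings in $qp$ correlations, then $qq$ correlations, i.e.\ you must solve the full Lyapunov equation \eqref{eq:23}), and even after solving it exactly the pointwise-in-$x$ identification still needs the profile regularity, because the kernel has unit mass but order-one spread. So as written the argument does not close; the missing ideas are precisely the explicit Fourier solution of the time-integrated covariance equation (\autoref{ev-covariance}) and the kinetic-energy gradient bound of Proposition \ref{prop031012-21a} used via Cauchy--Schwarz against the decay $|H^{(n)}_{x,y}|\le C/(1+(x-y)^2)$.
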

{At the left boundary the situation is a bit different, due to the
  fact that $q_0=q_{-1}$, and we have}
\begin{proposition}
  \label{prop-boundaryeq}
  We have
  \begin{equation}
    \label{eq:20}
    \lim_{n\to+\infty} 
       \int_0^{t}  \mathbb E\big[ q_0^2(n^2 s) - \left(G_{\omega_0}(1) +
         G_{\omega_0}(0)\right)  p_0^2(n^2s) \big]  \dd s =0.
  \end{equation}
\end{proposition}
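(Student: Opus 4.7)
The plan is as follows. The constant $G_{\omega_0}(0)+G_{\omega_0}(1)$ is precisely the value at the origin of the Green's function of $-\Delta_{\bbZ}+\omega_0^2$ on the half-line with the reflecting condition that is built into the dynamics through the convention $q_{-1}=q_0$. The statement \eqref{eq:20} is therefore the assertion that, in a time-averaged sense, the left end of the chain sits at the reservoir temperature $T_-$ with its positions distributed according to this Neumann Green's function. The first step is to derive a boundary analogue of the bulk fluctuation–dissipation identity \eqref{eq:4}. Applying $\mathcal G_t$ to $q_0 p_0+\gamma q_0^2$ and carefully incorporating the Langevin generator $2\gamma S_{-}$ together with $q_{-1}=q_0$, a direct computation gives
\begin{equation*}
\mathcal G_t\bigl(q_0 p_0+\gamma q_0^2\bigr)\;=\;p_0^2+q_0(q_1-q_0)-\omega_0^2 q_0^2.
\end{equation*}
Integrating over $[0,n^2 t]$, taking expectations, using Dynkin's formula, and bounding $\mathbb E[|q_0 p_0|+q_0^2](n^2 s)\le C\,\mathbb E[\mathcal H_n(n^2 s)]\le C n$ via Corollary~\ref{th2}, I obtain
\begin{equation*}
\int_0^t\mathbb E\bigl[p_0^2(n^2 s)+q_0(n^2 s)q_1(n^2 s)-(1+\omega_0^2)q_0^2(n^2 s)\bigr]\,\dd s=O(n^{-1}).
\end{equation*}

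Combining this with the estimate $\int_0^t \mathbb E[p_0^2(n^2 s)]\,\dd s\to T_- t$ from \eqref{eq:32c} reduces the claim to the single scalar relation
\begin{equation*}
\int_0^t\mathbb E\bigl[(1+\omega_0^2)q_0^2(n^2 s)-q_0(n^2 s)q_1(n^2 s)\bigr]\,\dd s\;\longrightarrow\;T_- t,
\end{equation*}
which still couples $q_0^2$ to the neighbouring covariance $q_0 q_1$. To close the system one needs a second independent relation between these two limits. Proposition~\ref{prop010803-22} is not directly applicable, since its test functions vanish near the boundary. Instead I would follow the covariance-matrix route: the time-integrated two-point matrix of $(\mathbf q,\mathbf p)$ satisfies a discrete Lyapunov-type equation driven by the Langevin and flip noises, whose leading-order solution on the diffusive time scale is exactly the half-line Neumann stationary covariance at temperature $T_-$. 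This is precisely the analysis of Appendix~\ref{ev-covariance}, and it yields
\begin{equation*}
\int_0^t\mathbb E\bigl[q_0(n^2 s)q_\ell(n^2 s)\bigr]\,\dd s\;\longrightarrow\;\bigl(G_{\omega_0}(\ell)+G_{\omega_0}(\ell+1)\bigr)T_- t,\qquad \ell=0,1,
\end{equation*}
both implying \eqref{eq:20} and being consistent with the boundary FDR via the defining identity $(2+\omega_0^2)G_{\omega_0}(0)-2G_{\omega_0}(1)=1$.

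The main obstacle is the propagation of control from the bulk to the boundary. The boundary FDR alone produces one linear relation among two unknown limits, and the bulk local equilibrium result is insufficient to fix the second. Overcoming this requires the sharp decay estimates on the linear semigroup of the pinned, reflected chain obtained in \cite{klo22}, which allow one to pass to the limit in the Lyapunov equation up to and including the site $x=0$ and to identify the boundary block of the time-averaged covariance with the Neumann Green's function at temperature $T_-$.
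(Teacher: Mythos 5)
Your boundary fluctuation--dissipation identity is correct: with $q_{-1}=q_0$ one indeed gets $\mathcal G_t(q_0p_0+\gamma q_0^2)=p_0^2+q_0(q_1-q_0)-\omega_0^2q_0^2$ (the $-2\gamma q_0p_0$ from $2\gamma S_-$ cancels the term coming from $\gamma q_0^2$), and dividing Dynkin's formula by $n^2$ together with the energy bound of Corollary \ref{th2} gives the $O(n^{-1})$ relation you state. But, as you yourself observe, this is one linear relation among two unknown limits and is not used in the paper's proof. The entire content of the proposition lies in the step you then merely assert: that ``the analysis of Appendix \ref{ev-covariance} yields'' $\int_0^t\mathbb E[q_0q_\ell(n^2s)]\dd s\to(G_{\om_0}(\ell)+G_{\om_0}(\ell+1))T_-t$. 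Appendix \ref{ev-covariance} by itself only produces the algebraic representation \eqref{eq:69}, which in real space reads (after removing the means via Proposition \ref{prop011012-21}, a reduction your proposal also skips)
\begin{equation*}
\lang S^{(q)}_{0,0}\rang_t=\sum_{y=0}^n H^{(n)}_y\lang p_y^2\rang_t
+\Big(T_-t-\lang p_0^2\rang_t\Big)\sum_{j,j'}\Phi(\mu_j,\mu_{j'})\psi_j(0)^2\psi_{j'}(0)^2+o_n(t),
\end{equation*}
with $H^{(n)}_y$ as in \eqref{eq:24}. Converting this into your claimed limit is exactly the work done in \autoref{sec-proofs-le}, and it needs three inputs that your proposal never supplies: (i) the row-sum computation $\sum_yH^{(n)}_y=\sum_j\mu_j^{-1}\psi_j(0)^2\to G_{\om_0}(0)+G_{\om_0}(1)$, i.e.\ \eqref{eq:25}, which is where the constant in the statement actually comes from; (ii) the kernel decay $|H^{(n)}_y|\le C/(1+y^2)$; and, crucially, (iii) the a priori regularity of the time-integrated kinetic-energy profile, $\sum_x[\lang p_x^2\rang_t-\lang p_{x+1}^2\rang_t]^2\le C/(n+1)$ (Proposition \ref{prop031012-21a}), which is what permits replacing $\lang p_y^2\rang_t$ by $\lang p_0^2\rang_t$ under the decaying kernel, up to an error $O(n^{-1/2})$.

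This third ingredient is the genuine gap. It does not follow from ``sharp decay estimates on the linear semigroup of the pinned, reflected chain'': those control the means and the remainder matrices $\tilde R$, while the gradient bound on $\lang p_x^2\rang_t$ comes from the doubly stochastic structure and spectral-gap-type estimate for the matrix $M_{x,y}$ (Proposition \ref{prop031012-21}), combined with the energy bound and \eqref{eq:32c}. Without it, bulk local equilibrium gives no control at the boundary site, and the statement that the Lyapunov equation's leading-order solution is ``the half-line Neumann stationary covariance at temperature $T_-$'' is precisely what has to be proven, not a known fact; note also that the analysis identifies the boundary covariance in terms of $\lang p_0^2\rang_t$ (which is all the proposition requires), with \eqref{eq:32c} needed only to dispose of the $(T_-t-\lang p_0^2\rang_t)$ term, so your formulation ``at temperature $T_-$'' conflates the local temperature with the reservoir value.
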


The proofs of Propositions \ref{prop010803-22} and \ref{prop-boundaryeq}
require the analysis of the evolution
of the covariance matrix of {the position and momenta vector}
and will be done in \autoref{sec-proofs-le}.
As a consequence, recalling definition \eqref{eq:32},  the bound
\eqref{eq:32c} and {the identity $2G_{\om_0} (1)-G_{\om_0} (0)
        -G_{\om_0} (2) =-\om_0^2G_{\om_0} (1)$}
we have the following corollary
\begin{corollary}
\label{cor-loceq}
For any $t>0$ and $\varphi\in C[0,1]$ such that ${\rm
  supp}\,\varphi\subset(0,1)$ we have
\begin{equation}
    \label{eq:32d}
    \lim_{n\to+\infty}\frac{1}{n+1} \sum_{x=0}^{n}
    \varphi\left(\frac{x}{n+1}\right) \int_0^{t}  \mathbb E
    \big[ \mathfrak F_x(n^2s) - D p_x^2(n^2s) \big] \dd s=0
  \end{equation}
  and
\begin{equation}
    \label{eq:32cc}
    \lim_{n\to+\infty} 
       \int_0^{t}  \Big\{\mathbb
    E\big[\mathfrak F_0(n^2s)   \big]- DT_-  
    \Big\}\dd s =0.
  \end{equation}
 { Here $D$ is defined in \eqref{eq:13}.}
\end{corollary}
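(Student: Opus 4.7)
The plan is to treat the two statements of the corollary separately, expanding $\mathfrak F_x$ directly from definition \eqref{eq:32} and invoking the two local-equilibrium propositions. Neither step is deep; the work is essentially bookkeeping once the correct replacements are identified.

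For the bulk identity \eqref{eq:32d}, I would first expand
\[
(q_{x+1}-q_x)(q_x-q_{x-1}) = q_x q_{x+1} - q_{x-1} q_{x+1} - q_x^2 + q_{x-1} q_x,
\]
so that $\mathfrak F_x - p_x^2 + \omega_0^2 q_x^2$ is a linear combination of two-point correlations $q_y q_{y+\ell}$ with $\ell \in \{0,1,2\}$. After multiplying by $(n+1)^{-1}\varphi(x/(n+1))$ and summing over $x$, I would translate the summation indices in the two terms containing $q_{x-1}$ so that each correlation appears in the canonical form $q_{\tilde x}q_{\tilde x+\ell}$ of Proposition \ref{prop010803-22}. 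Because $\varphi$ has compact support in $(0,1)$, for $n$ large the shifted sums remain in the bulk; the shift errors $|\varphi((\tilde x+1)/(n+1)) - \varphi(\tilde x/(n+1))| = O(1/n)$, combined with the uniform second-moment bound $\mathbb E[q_y^2(n^2 s)] \le 2\omega_0^{-2}\,\mathbb E[\mathcal E_y(n^2 s)]$ coming from \eqref{Ex} and Corollary \ref{th2}, make the total error vanish as $n\to\infty$. Proposition \ref{prop010803-22} then replaces each $q_{\tilde x}q_{\tilde x+\ell}$ by $G_{\omega_0}(\ell)\,p_{\tilde x}^2$, yielding the prefactor
\[
1 + 2G_{\omega_0}(1) - G_{\omega_0}(0) - G_{\omega_0}(2) - \omega_0^2 G_{\omega_0}(0)
\]
in front of $p_x^2$. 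The Green's-function identity $2G_{\omega_0}(1) - G_{\omega_0}(0) - G_{\omega_0}(2) = -\omega_0^2 G_{\omega_0}(1)$ flagged in the statement — which is just $(-\Delta_{\bbZ} + \omega_0^2) G_{\omega_0} = \delta_0$ evaluated at $\ell=1$ — collapses the prefactor to $1 - \omega_0^2(G_{\omega_0}(0) + G_{\omega_0}(1)) = D$, giving \eqref{eq:32d}.

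For the boundary identity \eqref{eq:32cc}, the convention $q_{-1} = q_0$ annihilates the cross term, so $\mathfrak F_0 = p_0^2 - \omega_0^2 q_0^2$ identically. Proposition \ref{prop-boundaryeq} allows me to replace $\int_0^t \mathbb E[q_0^2(n^2 s)]\,ds$ by $(G_{\omega_0}(0) + G_{\omega_0}(1))\int_0^t \mathbb E[p_0^2(n^2 s)]\,ds$ up to a vanishing error, and the boundary-current estimate \eqref{eq:32c} of Corollary \ref{lem:bound2} identifies the latter integral with $T_- t$ in the limit. Combining gives
\[
\int_0^t \mathbb E[\mathfrak F_0(n^2 s)]\,ds \longrightarrow \bigl(1 - \omega_0^2(G_{\omega_0}(0)+G_{\omega_0}(1))\bigr) T_- t = D T_- t,
\]
which is \eqref{eq:32cc}. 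The only slightly delicate piece is the bulk index-shift bookkeeping, but nothing beyond uniform continuity of $\varphi$ and the uniform energy bound is required; no substantial obstacle is anticipated beyond the two local-equilibrium propositions themselves.
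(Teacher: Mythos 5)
Your proposal is correct and follows essentially the same route as the paper, which presents the corollary as an immediate consequence of Propositions \ref{prop010803-22} and \ref{prop-boundaryeq}, the bound \eqref{eq:32c}, and the Green's function identity $2G_{\om_0}(1)-G_{\om_0}(0)-G_{\om_0}(2)=-\om_0^2G_{\om_0}(1)$; your expansion of $\mathfrak F_x$, the index-shift bookkeeping controlled by the energy bound, and the resulting prefactor collapsing to $D$ are exactly the intended argument. The only cosmetic slip is calling the shift error $O(1/n)$ for merely continuous $\varphi$ (uniform continuity only gives $o(1)$ uniformly in $x$), which you yourself correct at the end and which suffices.
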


\subsection{Proof of Theorem \ref{th1}}

{
  Consider the subset $\mathcal M_{+,E_*}([0,1])$ of  $\mathcal
   M_+([0,1])$ (the space of all positive, finite
  {Borel} measures on  $[0,1]$)
consisting of measures with total mass less than or equal to $E_*$. It is
compact in the topology of weak convergence of measures. In addition,  the topology is metrizable
when restricted to this set.

For any $t\in [0,t_*]$ and $\varphi \in  C[0,1]$ define
\begin{equation}
  \label{eq:9}
  \begin{split}
  \xi_n(t, \varphi) =
  \frac 1{n+1} \sum_{x=0}^n \varphi_x \bbE \big[{\cal E}_x(n^2t)\big],\qquad
  \varphi_x := \varphi\left(\frac {x}{n+1}\right)
\end{split}
\end{equation}
{for any $\varphi\in C[0,1]$.}
Since  flips  of the momenta do not affect the energies ${\cal E}_x$, we have 
$\xi_n \in C\left([0,t_*], \mathcal M_+([0,1])\right)$.
Here  $C\left([0,t_*], \mathcal M_{+,E_*}([0,1])\right)$
is endowed with the   topology of the uniform convergence.
As a consequence of Corollary \ref{th2} for any $t_*>0$ the total energy is bounded by
$E_* = E(t_*)$ (see \eqref{eq:4a}) and we have that
$\xi_n \in C\left([0,t_*], \mathcal M_{+,E_*}([0,1])\right)$.

\subsection{Compactness}
\label{ssec:compact}

  Since $\mathcal M_{+,E_*}([0,1])$ is compact, in order to show that $(\xi_n)$
  is compact, 
  we only need to control  
modulus of continuity in time of  $\xi_n(t, \varphi)$ for any
$\varphi\in C^1[0,1]$, see e.g. ~\cite[p. 234]{kelley}.
This will be consequence of the following Proposition.

\begin{proposition}
  \label{mod-cont}
  \begin{equation}
    \label{eq:14}
    \lim_{\delta \downarrow 0} \limsup_{n\to\infty} 
      \sup_{0\le s ,t\le t_*, |t-s| < \delta} \left| \xi_n(t, \varphi) - \xi_n(s, \varphi)\right| = 0
  \end{equation}
\end{proposition}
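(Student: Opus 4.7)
The plan is to apply Dynkin's formula to $F(\mathbf q,\mathbf p):=\sum_x\varphi_x\mathcal E_x$ and then to rewrite the resulting discrete divergence by combining a summation by parts with the fluctuation--dissipation relation \eqref{eq:33}. We start from
\begin{equation*}
\xi_n(t,\varphi)-\xi_n(s,\varphi)=\frac{1}{n+1}\int_{n^2 s}^{n^2 t}\bbE\!\left[\sum_{x=0}^n\varphi_x\bigl(j_{x-1,x}-j_{x,x+1}\bigr)(\tau)\right]\dd\tau.
\end{equation*}
A first summation by parts in $x$ isolates two boundary contributions
\begin{equation*}
T_L:=\frac{\varphi_0}{n+1}\int_{n^2 s}^{n^2 t}\bbE[j_{-1,0}(\tau)]\dd\tau,\qquad T_R:=-\frac{\varphi_n}{n+1}\int_{n^2 s}^{n^2 t}\bbE[j_{n,n+1}(\tau)]\dd\tau,
\end{equation*}
together with a bulk sum $\frac{1}{n+1}\int_{n^2s}^{n^2t}\bbE[\sum_{x=0}^{n-1}(\varphi_{x+1}-\varphi_x)j_{x,x+1}]\dd\tau$. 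Substituting \eqref{eq:33} into each bulk current and performing a second summation by parts on the gradient $\nabla\mathfrak F_x$ decomposes the bulk contribution into four pieces: (a) a discrete-Laplacian term proportional to $\sum_{x=1}^{n-1}\Delta\varphi_x\,\bbE[\mathfrak F_x]$; (b) a $\mathcal G_\tau$-exact term which, by Dynkin applied in reverse, telescopes to $\frac{1}{n+1}\bbE[\sum_{x=0}^{n-1}(\varphi_{x+1}-\varphi_x)(\mathfrak f_x(n^2 t)-\mathfrak f_x(n^2 s))]$; (c) two isolated endpoint pieces involving $\mathfrak F_0$ and $\mathfrak F_n$; and (d) a forcing correction at $x=n-1$ of the form $\propto(\varphi_n-\varphi_{n-1})\int\mathcal F_n(\tau)\bbE[q_n-q_{n-1}]\dd\tau$.

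Each piece is controlled by combining the uniform energy bound of Corollary \ref{th2} with the elementary Cauchy--Schwarz estimates $|\bbE\mathfrak f_x|,\,|\bbE\mathfrak F_x|\le C\,\bbE(\mathcal E_{x-1}+\mathcal E_x+\mathcal E_{x+1})$. For $\varphi$ of class $C^2$ (which forms a dense subclass of $C[0,1]$ and is therefore sufficient to test weak-$*$ equicontinuity of the $\mathcal M_{+,E_*}([0,1])$-valued path $\xi_n$), the bound $|\Delta\varphi_x|=O((n+1)^{-2})$ compensates the $n^2$ from the diffusive rescaling, so that (a) is bounded by $C\|\varphi''\|_\infty E_*|t-s|=O(\delta)$. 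The telescoping term (b) is $O(\|\varphi'\|_\infty E_*/n)=o(1)$; the endpoint pieces (c) give $O(\|\varphi'\|_\infty|t-s|)$ once one uses that $\bbE[\mathfrak F_0(\tau)]$ and $\bbE[\mathfrak F_n(\tau)]$ stay uniformly bounded, a fact that follows from the covariance analysis of \autoref{ev-covariance}; and the forcing correction (d) is $O(\|\varphi'\|_\infty/\sqrt n)$ via $|\mathcal F_n|\le \|\mathcal F\|_\infty/\sqrt n$ and a Cauchy--Schwarz bound on $\int_{n^2s}^{n^2t}\bbE|q_n-q_{n-1}|\dd\tau$ using the energy bound. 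The right boundary term is controlled directly by \autoref{th4}: since $T_R=-\frac{n}{n+1}\varphi_n(J_n(t,\mu_n)-J_n(s,\mu_n))$ and $J_n(\cdot,\mu_n)\to J\,(\cdot)$ uniformly on $[0,t_*]$, one gets $|T_R|\le|\varphi(1)|J\delta+o(1)$.

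The main obstacle is $T_L$. Invoking the conservation identity \eqref{eq:current-2} one rewrites
\begin{equation*}
T_L=\varphi(0)\bigl(\xi_n(t,1)-\xi_n(s,1)\bigr)+\frac{n\varphi(0)}{n+1}\bigl(J_n(t,\mu_n)-J_n(s,\mu_n)\bigr),
\end{equation*}
reducing the control of $T_L$ to that of the modulus of continuity of the total energy $\xi_n(\cdot,1)$. The energy balance furthermore gives $\xi_n(t,1)-\xi_n(s,1)=\frac{2\gamma}{n+1}\int_{n^2s}^{n^2t}\bbE[T_--p_0^2(\tau)]\dd\tau-\frac{n}{n+1}(J_n(t,\mu_n)-J_n(s,\mu_n))$; the second summand is $O(\delta)+o(1)$ by \autoref{th4}, while the first one, though only $O(1)$ via the integrated estimate \eqref{eq:32c} alone, is shown to vanish as $n\to\infty$ uniformly in $s,t\in[0,t_*]$ thanks to the finer pointwise information on $\bbE[p_0^2(n^2 r)]-T_-$ obtained from the covariance analysis of \autoref{ev-covariance}. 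Collecting the estimates yields $\limsup_{n\to\infty}\sup_{|t-s|<\delta}|\xi_n(t,\varphi)-\xi_n(s,\varphi)|\le C_\varphi\,\delta$, and letting $\delta\downarrow 0$ proves \eqref{eq:14}.
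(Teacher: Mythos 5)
Your overall skeleton (energy conservation, summation by parts, the fluctuation--dissipation identity \eqref{eq:33}, the energy bound and \autoref{th4} for the right boundary) is the same as the paper's, and your treatment of the bulk Laplacian term, the $\mathcal G$-exact term and the forcing correction is sound. The paper, however, never attempts what you attempt: it establishes the identity \eqref{eq:12} only for $\varphi\in C^2[0,1]$ with $\varphi(0)=\varphi'(1)=0$, so that your terms $T_L$ and the right-endpoint piece proportional to $\varphi'(1)\int_s^t\bbE[\mathfrak F_n]$ never appear, and then invokes a density argument. Your attempt to handle a general $C^2$ test function directly contains a genuine gap, concentrated exactly at these two places.

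First, the left boundary. Your reduction of $T_L$ via energy balance is circular (it reduces the modulus of continuity of $\xi_n(\cdot,\varphi)$ to that of $\xi_n(\cdot,1)$, which has the same defect at $u=0$), and the key claim that $\frac{2\gamma n^2}{n+1}\int_s^t\bbE\big[T_--p_0^2(n^2\tau)\big]\dd\tau$ vanishes as $n\to\infty$ is not only unproved but false: this quantity is $\frac{n^2}{n+1}\int_s^t\bbE[j_{-1,0}(n^2\tau)]\dd\tau$, the macroscopic heat flux through the left boundary on $[s,t]$, which in the periodic stationary regime converges to $J(t-s)\neq 0$ and in general to $\frac{D}{4\gamma}\int_s^t\partial_uT(\tau,0)\dd\tau$. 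There is no ``finer pointwise information on $\bbE[p_0^2(n^2r)]-T_-$'' in \autoref{ev-covariance}: that appendix (and Proposition \ref{prop-boundaryeq}, Corollary \ref{cor-loceq}) controls only time-integrated covariances, and \eqref{eq:32c} gives $O(1/n)$, not $o(1/n)$, so after multiplication by $n^2/(n+1)$ you retain an $O(1)$ term that your argument cannot make small. Second, the endpoint pieces: uniform pointwise boundedness of $\bbE[\mathfrak F_0(\tau)]$, $\bbE[\mathfrak F_n(\tau)]$ does not follow from the covariance analysis and can fail under \eqref{eq:ass0} (which permits $\bbE[p_x^2(0)]\sim n$ at a single site); what is available is the time-integrated statement \eqref{eq:32cc}, which gives $\int_s^t\bbE[\mathfrak F_0(n^2\tau)]\dd\tau=DT_-(t-s)+o_n$ and salvages the $x=0$ piece, but there is no analogous local-equilibrium estimate at the right boundary, so the $\mathfrak F_n$ piece is only $O(1)$ when $\varphi'(1)\neq0$. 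The paper's choice $\varphi(0)=\varphi'(1)=0$ is precisely what removes both obstructions; your density reduction to all of $C^2[0,1]$ does not, because the missing estimates are at the boundary points themselves.
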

{The proof of Proposition \ref{mod-cont} is postponed untill Section
\ref{cc}, we first use it to proceed  with 
 the limit identification argument. }

\subsection{Limit identification}

Consider a smooth test function $\varphi\in C^2[0,1]$
such that
\begin{equation}
  \label{010805-22}
  \varphi(0)= \varphi'(1)=0.
\end{equation}
{In what follows  we use the following notation. For a given function $\varphi:[0,1]\to\bbR$ and $n=1,2,\ldots$ we define  discrete  approximations of the function itself and of its gradient, respectively by
 \begin{align}
\label{disc-approx}
\varphi_x:=\varphi (\tfrac x {n+1}) ,\quad(\nabla_n \varphi)_x:=(n+1)\big(\varphi (\tfrac{x+1}{n+1})-\varphi
                      (\tfrac x {n+1})\big),
                                                \,\mbox{ for }x\in\{0,\ldots,n\}. 
\end{align}  
We use the convention  
$\varphi (-\tfrac{1}{n+1})=\varphi (0)$.}
Let $0<t_* <+\infty$ be fixed.
In what follows we show that, for any $t\in [0,t_*]$
 \begin{equation}
\label{012405-22}
\begin{split}
\xi_n(t,\varphi) - \xi_n(0,\varphi) =  \frac{\varphi'(0)DT_-t}{4\ga}
-Jt\varphi(1) 
+\frac{D}{4\ga}\int_0^{t} \xi_n(s,\varphi'') \dd s + o_n.
\end{split}
\end{equation}
Here, and in what follows $o_n$ denotes a quantity satisfying $\lim_{n\to+\infty}o_n=0$.
Thus any limiting point of $\big(\xi_n(t)\big)$
has to be the unique weak solution of \eqref{eq:5w} and this obviously
proves the conclusion of Theorem \ref{th1}. 

By  an approximation argument we can restrict ourselves to the case
when ${\rm supp}\,\varphi''\subset(0,1)$.  
  Then as in \eqref{eq:12} we have
\begin{equation}
  \label{eq:12b}
  \begin{split}
  \xi_n(t, \varphi) -  \xi_n(0, \varphi)   
  = \frac{n^2}{n+1} \sum_{x=0}^{n-1} \left(\varphi_{x+1} - \varphi_x\right)
  \int_0^{t} \mathbb E\left[ j_{x,x+1} (n^2\tau)\right] \dd \tau\\
  - \frac{n^2}{n+1}  \varphi_n \int_0^{t}\mathbb E\left[ j_{n,n+1} (n^2\tau)\right]  \dd \tau,
\end{split}
\end{equation}
By Theorem \ref{th4} the last term converges to $- \varphi(1) J t$.
        On the other hand from \eqref{eq:33}  we have
\begin{equation}
  \label{eq:15}
   \frac{n^2}{n+1} \sum_{x=0}^{n-1} \left(\varphi_{x+1} - \varphi_x\right)
   \int_0^{t} \mathbb E\left[ j_{x,x+1} (n^2\tau)\right] \dd \tau
    =\sum_{j=1}^3\mathrm{I}_{n,j},
  \end{equation}
  where
 \begin{align*}
   &\mathrm{I}_{n,1}:=-\frac{1}{4\ga}\left(\frac{n}{n+1}\right)^2
     \sum_{x=0}^{n-1} \nabla_n \varphi_x  \int_0^{t}
     \mathbb E\big[ \; \nabla \mathfrak F_x(n^2s)
     \big]  \dd s,\\
          &\mathrm{I}_{n,2}
   :=\left(\frac{1}{n+1}\right)^2\sum_{x=0}^{n-1} \nabla_n \varphi_x
     \bbE\Big[ \frak f_x(n^2t)- \frak f_x(0)\Big]  \dd s,
   \\
   &\mathrm{I}_{n,3}
   :=-\frac{1}{4\ga}\left(\frac{n}{n+1}\right)^2 
     \nabla_n \varphi_{n-1}  \int_0^{t}\cF_n(n^2s) \mathbb E\big[ q_{n}(n^2s) - q_{n-1}(n^2s)\big]\dd s.
 \end{align*}
 
 It is easy to see from Corollary \ref{th2} that
 $ \mathrm{I}_{n,2} =\bar o_n(t).$
Here the symbol $\bar o_n(t)$ stands for a quantity that satisfies
\begin{equation}
  \label{040506-22}
 \lim_{n\to+\infty}\sup_{s\in[0,t_*]}|\bar o_n(s)|=0.
  \end{equation}
 
Using the fact that $\varphi'(1)=0$ and the {estimate \eqref{021012-21}}
respectively we conclude also that $\mathrm{I}_{n,3} =\bar o_n(t)$.
Thanks to Corollary \ref{th2} and \eqref{010805-22} we have
\begin{align*}
   &\mathrm{I}_{n,1}= \sum_{j=1}^3\mathrm{I}_{n,1}^{(j)} + \bar o_n(t),\quad
     \mbox{where}\\
   &
     \mathrm{I}_{n,1}^{(1)}:=
     \frac{1}{4\ga (n+1)} \sum_{x=0}^{n}  \varphi''\left(\frac{x}{n+1}\right) \int_0^{t}
     \mathbb E\big[
       \;  
     \mathfrak F_x(n^2s)  \big]  \dd s,\\
   &
     \mathrm{I}_{n,1}^{(2)}:=-\frac{1}{4\ga}\left(\frac{n}{n+1}\right)^2  \varphi'\left(\frac{n-1}{n+1}\right)
     \int_0^{t} \mathbb E\big[\; \mathfrak F_n(n^2s)  \big]  \dd s =
     \bar o_n(t),\\
   &
    \mathrm{I}_{n,1}^{(3)}:= \frac{1}{4\ga} 
    \varphi'(0) \int_0^{t} \mathbb E\big[
       \;  
       \mathfrak F_0(n^2s)  \big]  \dd s.
 \end{align*}

 {Since ${\rm supp}\,\varphi''\subset(0,1)$, by Corollary \ref{cor-loceq} 
 and {the equipartition property}  (Proposition \ref{cor012912-21})
 for a fixed $t\in[0,t_*]$ we have
 \begin{align}
   \label{012402-23}
     \mathrm{I}_{n,1}^{(1)}
    & =\frac{D}{4\ga (n+1)} \sum_{x=0}^{n}\varphi''
     \left(\frac{x}{n+1}\right) \int_0^{t} \mathbb E\Big[ \;  
       p_x^2(n^2s)  \Big]  \dd s+\bar o_n(t)\\
  & =\frac{D}{4\ga (n+1)} \sum_{x=0}^{n}\varphi''
     \left(\frac{x}{n+1}\right) \int_0^{t} \mathbb E\Big[ \;  
       \mathcal E_x(n^2s)  \Big]  \dd s+o_n.\notag
\end{align}
Concluding, we have obtained
 \begin{align*}
    \lim_{n\to+\infty}\mathrm{I}_{n,1}^{(3)} = \frac{\varphi'(0)DT_-t}{4\ga}.
 \end{align*}}
Thus \eqref{012405-22} follows.\qed

\subsection{Proof of Proposition \ref{mod-cont}}
\label{cc}

{From the calculation made in \eqref{eq:12b}--\eqref{012402-23} we
  conclude that for any function
  $\varphi\in C^2[0,1]$ satisfying \eqref{010805-22} we have
\begin{equation}
  \label{eq:12}
  \begin{split}
    &\xi_n(t, \varphi) -  \xi_n(s, \varphi) =
     \frac{\varphi'(0)DT_-}{4\ga} (t-s)
-J\varphi(1) (t-s)
\\
&
+\frac{D}{4\ga (n+1)} \sum_{x=0}^{n}\varphi''
     \left(\frac{x}{n+1}\right) \int_s^{t} \mathbb E\Big[ \;  
       p_x^2(n^2\tau )  \Big]  \dd \tau + \bar o_n(t) + \bar o_n(s)
\end{split}
\end{equation}
for any $ 0\le s<t \le t_*$
and \eqref{eq:14} follows immediately.
A density argument completes the proof.}

}

\section*{Acknowledgements} {The authors would like to thank both 
anonymous referees for their careful reading of the manuscript and
many helpful remarks that lead to the improvement of the paper.}

\section*{Funding, Data Availability and Conflict of Interest
  Statements}
 The work of J.L.L. was supported in
  part by the A.F.O.S.R. He thanks the Institute for Advanced Studies
  for its hospitality.
  T.K. acknowledges the support of the NCN grant 2020/37/B/ST1/00426. 

{\bf Data sharing is not applicable to this article} as no datasets were
generated or analysed during the current study. In addition, the
authors {\bf have no conflicts of interest to declare} that are relevant to the content of this article.

\appendix

\section{The discrete laplacian}
\label{sec:discrete-laplacian}

\subsection{Discrete gradient and laplacian}

\label{sec2.7}

Recall that the lattice gradient, its adjoint and laplacian of any
$f:\bbZ\to\bbR$ are  defined as
{\begin{equation}
  \label{012901-23}
  \nabla
f_x=f_{x+1}- f_{x}, \qquad \nabla^\star
f_x=f_{x-1}- f_{x}
\end{equation}
and $\Delta_\bbZ f_x=-\nabla^\star \nabla f_x=f_{x+1}+
f_{x-1}-2 f_{x}$, $x\in\bbZ$, respectively.}

Suppose that $\om_0>0$. Consider the
Green's function of $-\Delta_{\bbZ} + \omega^2_0$, where  $\Delta_{\bbZ}$ is the laplacian on the integer lattice $\bbZ$. It is given by, see
e.g. \cite[(27)]{ray},
\begin{align}
\label{GR}
&G_{\om_0}(x) = \left(-\Delta_{\bbZ} + \omega^2_0 \right)^{-1}(x)\quad 
=\int_0^1\left\{4\sin^2(\pi u)+\om_0^2\right\}^{-1}\cos(2\pi ux)du\\
&
 =\frac{1}{\om_0\sqrt{\om_0^2+4}}\left\{1+\frac{\om_0^2}{2}+\om_0\sqrt{1
    +\frac{\om_0^2}{4}}\right\}^{-|x|},\quad x\in\bbZ.\notag
\end{align}

\subsection{Discrete Neumann laplacian $-\Delta$}

{Let  $\la_j$ and $\psi_j$, $j=0,\ldots,n$ be the respective  eigenvalues
and eigenfunctions for the discrete Neumann laplacian $-\Delta$
defined in \eqref{NL}.}
They are given by
\begin{align}
\label{laps}
\la_j=4\sin^2\left(\frac{\pi j}{2(n+1)}\right),\quad
\psi_j(x)=\left(\frac{2-\delta_{0,j}}{n+1}\right)^{1/2}\cos\left(\frac{\pi
    j(2x+1)}{2(n+1)}\right),\quad x,j=0,\ldots,n.
\end{align}
The eigenvalues of $\om_0^2-\Delta$ are
given by
\begin{equation}
  \label{muj}
\mu_j=\om_0^2+\la_j=\om_0^2+4\sin^2\left(\frac{\pi
    j}{2(n+1)}\right),\quad j=0,\ldots,n.
\end{equation}

  \section{The dynamics of the  means}

  \label{app-means}

Let $\mu$ be a Borel probability measure on $\bbR^{2(n+1)}$ and let
$(\bar{\qv}, \bar{\pv})$ be the vector of the $\mu$-averages of initial
data. In the following we denote by
$ \left(\begin{matrix}\bar{\qv}(t)\\
    \bar{\pv}(t)
 \end{matrix}\right)$ the vector means of positions and momenta by
$\bar q_x(t) = \bbE_{{\bf q},{\bf p}}(q_x(t))$ and
$\bar p_x(t) = \bbE_{{\bf q},{\bf p}}(p_x(t))$.
 Let ${\bf e}_{2(n+1)}$ be the $2(n+1)$ vector defined by ${\rm e}_{2(n+1),j}= \delta_{2(n+1),j}$.
{Then, performing the averages in \eqref{eq:flip} and \eqref{eq:pbdf},
we conclude  the evolution equation for the averages. Its solution is given by}
\begin{equation}
  \begin{split}
  \label{011605-22}
\left(
  \begin{array}{c}\bar{\bf q}(t)\\
    \bar{\bf p}(t)
    \end{array}\right)=e^{-At}\left(
  \begin{array}{c}\bar{\bf q}\\
    \bar{\bf p}
    \end{array}\right)+  \int_0^t \cF_n\left(s\right)
  e^{-A(t-s)}\; {\bf e}_{2(n+1)} \dd s.
\end{split}
\end{equation}
Here $A$ is a $2\times 2$ block matrix made of $(n+1)\times (n+1)$
matrices of the form
\begin{equation}
\label{A}
A=
\left(
  \begin{array}{cc}
    0&-{\rm Id}_{n+1}\\
    -\Delta +\om_0^2 {\rm Id}_{n+1}& 2\ga {\rm Id}_{n+1}
  \end{array}
\right),
\end{equation}
where ${\rm Id}_{n+1}$ is  the $(n+1)\times (n+1)$ identity matrix.

Using the expansion
$$
{\cal F}(t)=\sum_{\ell\in\bbZ}\hat {\cal F}(\ell)e^{2\pi i \ell t}
$$
and defining
\begin{align}
  \label{eq:yz}
& \left(
  \begin{array}{c}\bar{\bf y}(t)\\
    \bar{\bf z}(t)
  \end{array}\right) := e^{-A t}{\bf e}_{2(n+1)},
  \end{align}
we can write
\begin{equation}
    \label{021605-22}
    \begin{split}
    \left(
  \begin{array}{c}\bar{\bf q}(t)\\
    \bar{\bf p}(t)
  \end{array}\right)
=&e^{-At}\left(
  \begin{array}{c}\bar{\bf q}\\
    \bar{\bf p}
  \end{array}\right) +
\sum_{\ell\in\bbZ} \frac {\hat{\cal F}(\ell) }{\sqrt n} \int_0^{t} e^{2\pi i \ell s/\theta}
 \left(
  \begin{array}{c}\bar{\bf y}(t-s)\\
    \bar{\bf z}(t-s)
  \end{array}
\right)\dd s.
\end{split}
\end{equation}

To find the formulas for the components of  $\bar  v_x(t)$, $\bar
u_x(t)$, $x=0,\ldots,n$ of  the vector $\left(
  \begin{array}{c}\bar{\bf u}(t)\\
    \bar{\bf v}(t)
  \end{array}\right) : =e^{-At}\left(
  \begin{array}{c}\bar{\bf q}\\
    \bar{\bf p}
  \end{array}\right) $    it is convenient to use the Fourier
coordinates in the base $\psi_j$ of the eigenvectors for the Neumann laplacian
$\Delta$, see \eqref{laps}. Let
$\tilde u_j(t) =\sum_{x=0}^n\bar  u_x(t)\psi_j(x)$ and
$\tilde v_j(t) =\sum_{x=0}^n\bar  v_x(t)\psi_j(x)$ be  the Fourier
coordinates  of the vector $\left(
   \bar{\bf u}(t),
    \bar{\bf v}(t)
   \right)$.  Likewise, we let
$\tilde q_j =\sum_{x=0}^n \bar q_x\psi_j(x)$ and
$\tilde p_j =\sum_{x=0}^n  \bar p_x\psi_j(x)$, with
 $\bar  q_x$, $\bar
p_x$, $x=0,\ldots,n$ the components of  $(\bar{\bf q}, \bar{\bf p})$.

 Let 
\begin{equation}
  \label{lapm}
\la_{j,\pm}:=\ga\pm\sqrt{\ga^2-\mu_j}
\end{equation}
be the  two solutions 
of the equation
\begin{equation}
  \label{011602-23}
\la^2-2\ga \la+\mu_j=0.
\end{equation}
Note that
$
\la_{j,+}\la_{j,-}=\mu_j.
$
Then,
\begin{align}
  \label{012810-21y}
    \tilde
    u_j (t) 
  = \frac{1}{2\sqrt{\ga^2-\mu_j}}\Big[
   - (\la_{j,-}\tilde q_j+\tilde p_j)\exp\left\{-\la_{j,+}t\right\}+
   (\la_{j,+}\tilde q_j+\tilde p_j) \exp\left\{-\la_{j,-}t\right\}\Big].
\end{align}
and
\begin{align}
\label{012810-21}
    \tilde
    v_j (t)
    = \frac{1}{2\sqrt{\ga^2-\mu_j}}\Big[
   (\mu_{j}\tilde q_j+\la_{j,+}\tilde p_j)\exp\left\{-\la_{j,+}t\right\}-
   ( \mu_{j}\tilde q_j+\la_{j,-}\tilde p_j)
    \exp\left\{-\la_{j,-}t\right\}\Big],
\end{align}
in the case when $\mu_j\not=\ga^2$. When $\ga^2=\mu_j$ (then $\la_{j,\pm}=\ga$) we have
 $$
 \tilde   u_j (t)= \Big[(1+\ga t) \tilde   q_j +\tilde   p_j t\Big]e^{-\ga t},\qquad \tilde   v_j (t)=
   \Big[\ga^2t  \tilde   q_j + (1-\ga t)\tilde   p_j \Big]e^{-\ga t},
   $$
Then, by \eqref{012810-21y} and \eqref{012810-21}, we conclude that
the components of $e^{-A t}{\bf e}_{2(n+1)}$ equal
\begin{equation}
   \begin{split}
\label{eq:zyj}
&\tilde
    y_j (t)
    =\frac{\psi_j(n)}{2\sqrt{\ga^2-\mu_j}}\Big(
   - \exp\left\{-\la_{j,+}t\right\}+
  \exp\left\{-\la_{j,-}t\right\}\Big),\\
  &{\tilde z}_j (t)
      =
    \frac{\psi_j(n)}{2\sqrt{\ga^2-\mu_j}}\Big(
  \la_{j,+}\exp\left\{-\la_{j,+}t\right\}-
   \la_{j,-}
    \exp\left\{-\la_{j,-}t\right\}\Big).
     \end{split}
   \end{equation}
   in the case when $\mu_j\not=\ga^2$.
In the case that $\ga^2=\mu_j$ (then $\la_{j,\pm}=\ga$) we have
$\tilde   y_j (t)= \psi_j(n) te^{-\ga t}$ and   $\tilde   z_j (t)=
\psi_j(n) (1-\ga t)e^{-\ga t}$.

 Elementary calculations lead to the following bounds
\begin{equation}
  \label{010606-22}
{\rm Re}\la_{j,\pm}\ge \ga_*:=\min\left\{\ga,
  \frac{\om_0^2}{2\ga}\right\},\quad |\la_{j,\pm}|\le \ga+|\ga^2-\om_0^2-4|^{1/2},\quad  j=0,\ldots,n.
\end{equation}
Hence, there exists $C>0$ such that
\begin{equation}
  \label{012008-22}
  |\tilde   y_j (t) |+|\tilde   z_j (t) |\le
  C(t+1)e^{-\ga_*t}|\psi_j(n)|
\end{equation}
for all $t\ge0$,  $j=0,\ldots,n$, $n=1,2,\ldots$.
By the Plancherel identity, \eqref{012810-21y} and \eqref{012810-21}
we conclude also that
there exist  constants $C,C'>0$ such that, for all $t\ge0$  and $n\in \mathbb N$,
\begin{equation}
  \label{011908-22}
  \begin{split}
    &\sum_{x=0}^n\Big[\bar  u^2_x (t)+\bar   v^2_x (t)\Big]=\sum_{j=0}^n\Big[\tilde
    u^2_j (t)+\tilde   v^2_j (t)\Big]\\
    &
    \le
C(t+1)e^{-\ga_*t}\sum_{j=0}^n\Big(\tilde   q^2_j +\tilde   p^2_j
\Big)\le C'(t+1)e^{-\ga_*t}{\cal H}_n\Big(\bar{\bf q}, \bar{\bf p}\Big).
\end{split}
\end{equation}

 \subsection{$L^2$ norms of the means}

 By \eqref{021605-22}, {the triangle inequality} and the Plancherel theorem  
    \begin{equation}
      \begin{split}
     &   \sum_{x=0}^n\int_0^t\left[\bar q_x^2(s)+\bar p_x^2(s)\right]\dd s
\le    C  \sum_{x=0}^n\int_0^t\left[\bar u_x^2(s)+\bar
  v_x^2(s)\right]\dd s \\
&
+\frac Cn \sum_{j=0}^n \left[\left| \sum_{\ell\in\bbZ} \hat \cF(\ell)\int_0^t
          e^{2\pi i\ell s/\theta} \tilde y_j(t-s) \dd s\right|^2 +\left| \sum_{\ell\in\bbZ}\hat \cF(\ell)\int_0^t
          e^{2\pi i\ell s/\theta} \tilde z_j(t-s) \dd s\right|^2\right] .
  \end{split}
\end{equation}
The constant appearing here below do not depend on $t$ and $n$. 
Using \eqref{eq:2}, \eqref{012008-22}
and \eqref{011908-22}
we conclude therefore that
\begin{equation}
  \label{012208-22}
      \begin{split}
     &   \sum_{x=0}^n\int_0^t\left[\bar q_x^2(s)+\bar p_x^2(s)\right]\dd s
\le    C{\cal H}_n\Big(\bar{\bf q}, \bar{\bf p}\Big) 
+\frac Cn \left( \sum_{\ell\in\bbZ}|\hat \cF(\ell)|\right)^2  .
  \end{split}
\end{equation}
From \eqref{012208-22} we conclude therefore the following.
  \begin{proposition}
\label{prop011012-21}
Assume that the hypotheses of Theorem \ref{th4} are in force. Then,   there exists $C>0$ such that
\begin{equation}
\label{021012-21}
 \sum_{x=0}^n \int_0^t\Big[\bar q_x^2(n^2s)+\bar p_x^2(n^2s)\Big]\dd s\le 
{\frac{C}{n^{\kappa}}}, 
\end{equation}
for all $t\ge0$, $n=1,2,\ldots$. Here $\kappa=\min\{2-\delta,1\}$ and
$\delta$ is as in \eqref{fHn}. {If  the hypotheses of Theorem
\ref{th1}  hold, then $\delta=1$ and \eqref{021012-21} is satisfied
with $\kappa=1$.}
\end{proposition}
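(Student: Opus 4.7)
The proof is essentially an immediate corollary of the estimate \eqref{012208-22}, which has already been established in the preceding discussion. The plan is to observe that \eqref{012208-22} is \emph{uniform in} $t\ge 0$, and then perform a diffusive time change of variables.

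More precisely, first I would note that the right-hand side of \eqref{012208-22} does not depend on $t$: this is crucial, and it comes from the fact that the means $(\bar{\bf u}(s),\bar{\bf v}(s))=e^{-As}(\bar{\bf q},\bar{\bf p})$ decay exponentially (estimate \eqref{011908-22} with the factor $(s+1)e^{-\gamma_* s}$ integrable on $[0,\infty)$), and that the convolution integrals $\int_0^t e^{2\pi i\ell s/\theta}\tilde y_j(t-s)\,\dd s$ and $\int_0^t e^{2\pi i\ell s/\theta}\tilde z_j(t-s)\,\dd s$ are bounded uniformly in $t$ by $C|\psi_j(n)|$ thanks to \eqref{012008-22}, after which Parseval's identity $\sum_{j=0}^n|\psi_j(n)|^2=1$ and the assumption $\sum_{\ell\in\bbZ}|\hat{\cal F}(\ell)|<+\infty$ in \eqref{eq:2} combine to give a constant independent of $t$.

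Next I would apply \eqref{012208-22} with $t$ replaced by $n^2 t$ and perform the change of variable $\tau=n^2 s$ on the left-hand side:
\begin{equation*}
\sum_{x=0}^n\int_0^t\big[\bar q_x^2(n^2 s)+\bar p_x^2(n^2 s)\big]\,\dd s
=\frac{1}{n^2}\sum_{x=0}^n\int_0^{n^2 t}\big[\bar q_x^2(\tau)+\bar p_x^2(\tau)\big]\,\dd \tau
\le \frac{C}{n^2}{\cal H}_n(\bar{\bf q},\bar{\bf p})+\frac{C}{n^3}\Big(\sum_{\ell\in\bbZ}|\hat{\cal F}(\ell)|\Big)^{2}.
\end{equation*}
Finally, invoking the hypothesis \eqref{fHn}, namely ${\cal H}_n(\bar{\bf q}_n,\bar{\bf p}_n)\le Cn^\delta$ with $\delta\in[0,2)$, the right-hand side is bounded by $C n^{\delta-2}+C n^{-3}\le C n^{-\kappa}$ with $\kappa=\min\{2-\delta,1\}$, uniformly in $t\ge 0$ and $n\ge 1$. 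The last statement, valid under the hypotheses of \autoref{th1}, follows because \eqref{eq:ass0} together with the convexity of the Hamiltonian and Jensen's inequality yields \eqref{fHn} with $\delta=1$, as explained after the statement of \eqref{eq:LGM}, hence $\kappa=1$.

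There is no real obstacle in this argument: all the heavy lifting (the Fourier diagonalization of the mean dynamics, the spectral bounds \eqref{010606-22}, and the careful uniform-in-$t$ estimation leading to \eqref{012208-22}) has been carried out earlier in \autoref{app-means}. The only point requiring a moment of care is confirming that the constant in \eqref{012208-22} is genuinely $t$-independent, which is what allows the diffusive rescaling to produce the extra factor $n^{-2}$.
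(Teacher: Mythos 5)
Your overall strategy coincides with the paper's: Proposition \ref{prop011012-21} is indeed meant to follow from \eqref{012208-22} via the diffusive change of variables and \eqref{fHn}. However, the step you single out as ``the only point requiring a moment of care'' --- the $t$-uniformity of \eqref{012208-22} --- is exactly where your argument fails. What must be controlled is the time integral of the \emph{squared} forced response, i.e. $\frac1n\sum_{j=0}^n\int_0^t\big|\sum_{\ell}\hat\cF(\ell)\int_0^s e^{2\pi i\ell r/\theta}\,\tilde y_j(s-r)\,\dd r\big|^2\dd s$ (and the analogous term with $\tilde z_j$), not the convolution evaluated at the single time $t$. The pointwise bound $C|\psi_j(n)|\sum_\ell|\hat\cF(\ell)|$ from \eqref{012008-22} is uniform in $s$, but after squaring, integrating in $s$ over $[0,t]$ and summing over $j$ (using $\sum_j\psi_j(n)^2=1$) it yields $\frac{Ct}{n}\big(\sum_\ell|\hat\cF(\ell)|\big)^2$: this term grows linearly in $t$, because the means approach a nontrivial time-periodic response whose squared $\ell^2$-norm has time average of order $1/n$. (The display preceding \eqref{012208-22} omits the outer time integration; read literally, it --- and hence your $t$-independent reading of \eqref{012208-22} --- cannot be right.)

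Consequently your final bound $Cn^{\delta-2}+Cn^{-3}$, uniform in $t\ge 0$, is false and would in fact contradict \autoref{th4}: since $J_n(t,\mu_n)\to Jt$ with $J\ne 0$, Cauchy--Schwarz applied to $nJ_n(t,\mu_n)=-\int_0^{n^2t}\cF_n(s)\,\bbE_{\mu_n}[p_n(s)]\,\dd s$ together with $\int_0^{n^2t}\cF_n^2(s)\dd s\le Cnt$ gives $\int_0^t \bar p_n^2(n^2s)\dd s\ge c\,t/n$ for all large $n$, so the left-hand side of \eqref{021012-21} is bounded below by $ct/n$; it can be neither $O(n^{\delta-2})$ with $\delta<1$ for fixed $t$ nor bounded uniformly in $t$. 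The correct bookkeeping,
\begin{equation*}
\sum_{x=0}^n\int_0^t\big[\bar q_x^2(s)+\bar p_x^2(s)\big]\dd s\le C\,{\cal H}_n(\bar\qv,\bar\pv)+\frac{Ct}{n}\Big(\sum_{\ell\in\bbZ}|\hat\cF(\ell)|\Big)^2,
\end{equation*}
gives after the change of variables $\frac{C}{n^2}{\cal H}_n(\bar\qv,\bar\pv)+\frac{Ct}{n}\le C(1+t)\,n^{-\kappa}$, which is precisely why the exponent is capped at $\kappa=\min\{2-\delta,1\}$ rather than the $2-\delta$ your computation would produce, and why the constant must be allowed to depend on $t$ (this is how the estimate is actually used later, e.g. in \eqref{013001-23a}). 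Your proposal should be repaired by restoring the factor $t$ in the second term and stating the conclusion with a $t$-dependent constant (or a factor $1+t$).
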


\subsection{The proof of Theorem \ref{th4}}

\label{sec-th4}

We   show \eqref{Jnn} and \eqref{051021-05z}.
 Recall that the initial configuration $({\bf q},{\bf p})$ is
 distributed according to $\mu_n$. For the work done we have
\begin{equation}
  \label{eq:19}
  \begin{split}
    &  W_n(t) := \int_0^t \cF_n(s) \bar p_n(s) \dd s  \\
    &
    =\sum_j \psi_j(n) \sum_{\ell\in\bbZ} \frac 1{\sqrt n}
    \hat\cF(\ell)^\star\int_0^t e^{-i2\pi \ell s/\theta} \tilde p_j(s)\dd s .
  \end{split}
\end{equation}
We have $J_n(t;\mu)=-W_n(n^2t)/n$, see \eqref{eq:16}.

Using \eqref{011605-22} the utmost right hand side can be rewritten in the form
$W_{n,i}(t)+W_{n,f}(t) $ where
 \begin{equation}
   \label{eq:19a}
  \begin{split}
 &W_{n,i}(t):=\sum_{j=0}^n \psi_j(n) \sum_{\ell\in\bbZ} \frac 1{\sqrt n}
    \hat\cF(\ell)^*\int_0^t e^{-i2\pi \ell s/\theta} \tilde v_j(s) \dd s ,\\
   &W_{n,f}(t):= \frac 1n  \sum_{j=0}^n \psi_j(n) \sum_{\ell,\ell'\in\bbZ} 
    \hat\cF(\ell) ^* \hat\cF(\ell')
    \int_0^t \dd s\;  e^{i2\pi (\ell'-\ell) s/\theta} \int_0^s
    e^{-i2\pi \ell' s'/\theta} \tilde z_j(s') \dd s',
     \end{split}
   \end{equation}
   with $\tilde v_j(s)$ and $\tilde z_j(s')$ defined in
   \eqref{012810-21} and \eqref{eq:zyj}.

   Thanks to {the last estimate of \eqref{eq:2}} and the Cauchy-Schwarz inequality we
   conclude from \eqref{011908-22}
   $$
|W_{n,i}(t)|\le \frac C{\sqrt n}{\cal H}_n^{1/2}\Big(\bar{\bf q}_n, \bar{\bf p}_n\Big)\int_0^t (s+1)^{1/2}e^{-\ga_*s/2}\dd s.
$$
Thanks to \eqref{fHn} $\lim_{n\to+\infty}|W_{n,i}(n^2t)|/n=0$.
Using \eqref{eq:zyj} we have
\begin{equation*}
  \begin{split}
    \int_0^s e^{-i2\pi \ell' s'/\theta} \tilde z_j(s') \dd s'\\
    = \frac{\psi_j(n)}{\la_{j,+} - \la_{j,-}}
    \left[
       \frac{\la_{j,-} \left[e^{-s\left(\la_{j,-} +2\pi i \ell'  /\theta\right)}-1\right]}{2\pi i \ell'  /\theta+\la_{j,-}}
 -\frac{\la_{j,+} \left[e^{-s\left(\la_{j,+}+2\pi i \ell'  /\theta\right)}-1\right]}{2\pi i \ell'  /\theta+\la_{j,+}}   \right],
  \end{split}
\end{equation*}
so that we can decompose the work done in
$W_{n,f}(t) =W_{n,f}^{(1)} (t)+W_{n,f}^{(2)} (t),$
where
\begin{equation*}
  \begin{split}
 &\frac 1n W_{n,f}^{(1)}(n^2t) 
  := - \frac 1{n^2}  \sum_{j=0}^n\frac{\psi_j^2(n)}{\la_{j,+} - \la_{j,-}}
    \sum_{\ell,\ell'\in\bbZ} \hat {\cal F}^*(\ell) \hat {\cal
    F}(\ell') \Big(  \frac{\la_{j,-} }{2\pi
  i \ell'  /\theta+\la_{j,-}} 
    -\frac{\la_{j,+} }{2\pi
  i \ell'  /\theta+\la_{j,+}}\Big) \\
&\times \int_0^{n^2t}\exp\left\{2\pi i s(\ell'-\ell)  /\theta \right\}  \dd s\\
&= - t  \sum_{j=0}^n\frac{\psi_j^2(n)}{\la_{j,+}- \la_{j,-}}
    \sum_{\ell\in\bbZ} |\hat {\cal F}(\ell)|^2 \Big(  \frac{\la_{j,-} }{2\pi i \ell  /\theta+\la_{j,-}} 
    -\frac{\la_{j,+} }{2\pi i \ell /\theta+\la_{j,+}}\Big) +O\left(\frac{1}{n^2}\right)
\end{split}
\end{equation*}
{and
\begin{align*}
  &\frac 1n W_{n,f}^{(2)}(n^2t)
  := \frac{1}{n^2} \sum_{j=0}^n\frac{\psi_j^2(n) }{\la_{j,+} - \la_{j,-}}
    \sum_{\ell,\ell'\in\bbZ} \hat {\cal F}^\star (\ell) \hat {\cal
    F}(\ell')\int_0^{n^2t} \Big(
    \frac{\la_{j,+}\exp\left\{ -(2\pi
     i \ell'  /\theta +\la_{j,+})s\right\}}{2\pi
    i \ell' /\theta+\la_{j,+}} \\
  &
    -
    \frac{\la_{j,-}\exp\left\{-(2\pi
     i \ell'  /\theta+\la_{j,-})s\right\}}{2\pi
  i \ell'  /\theta+\la_{j,-}} \Big)
       \dd s .
\end{align*}}

{Using \eqref{011602-23} and integrating over the $s$ variable we conclude that
\begin{align*}
  &\frac{1}{n}W_{n,f}^{(2)}(n^2t)
  = \frac{1}{n^2}  \sum_{j=0}^n\psi_j^2(n) 
    \sum_{\ell,\ell'\in\bbZ} \hat {\cal F}^*(\ell) \hat {\cal
    F}(\ell') 
  \\
  &
    \times \Big\{\frac{1-\exp\left\{ -(2\pi
     i \ell'  /\theta +\la_{j,+})n^2t\right\}}{(2\pi
     i \ell'  /\theta +\la_{j,+})}\cdot \frac{2\pi
    i \ell'  /\theta}{\mu_{j}-(2\pi
     \ell'  /\theta)^2-4\ga\pi i\ell'/\theta} \\
  &
    +
    \frac{\la_{j,-}}{(2\pi
  i \ell'  /\theta+\la_{j,-})} \Big[\exp\left\{-(2\pi
     i \ell'  /\theta+\la_{j,-})n^2t\right\}\frac{1-\exp\left\{-2\sqrt{\ga^2-\mu_j} n^2t\right\}}{2\sqrt{\ga^2-\mu_j} (2\pi
     i \ell'  /\theta+\la_{j,-})}\\
  &
  +\frac{\exp\left\{-(2\pi
     i \ell'  /\theta+\la_{j,+})n^2t\right\}}{(2\pi
    i \ell'  /\theta+\la_{j,+})}  -\frac{1}{\mu_{j}-(2\pi
     \ell'  /\theta)^2-4\ga\pi i\ell'/\theta}\Big]\Big\}.
\end{align*}
Here we have used the fact that
\begin{equation}
  \label{021602-23}
  \la_{j,+}-\la_{j,-}=2\sqrt{\ga^2-\mu_j}.
\end{equation}
Recalling  \eqref{010606-22} we obtain that
$\frac{1}{n}W_{n,f}^{(2)}(n^2t)  =O\Big(\frac{1}{n^2}\Big)$ for each $t>0$.}

{Concerning $W_{n,f}^{(1)} (t)$, we  use \eqref{011602-23} and
obtain
\begin{align*}
  &\frac{1}{n}W_{n,f}^{(1)} (n^2t) 
=-t\sum_{j=0}^n \sum_{\ell\in\bbZ}
 \frac{(2\pi
  i \ell /\theta)\psi_j^2(n) |\hat {\cal F}(\ell)|^2 }{ 
\mu_j-(2\pi
   \ell  /\theta)^2-2\ga (2\pi
  i \ell  /\theta)  }  
      +O\Big(\frac{1}{n^2}\Big)
\end{align*}
After substituting  for $\psi_j(n)$ and $\mu_j$ from \eqref{laps} and
\eqref{muj} correspondingly, we obtain
  \begin{align*}
  \frac{1}{n}W_{n,f}^{(1)} (n^2t) &
     = \frac{4\ga t}{n+1}\sum_{j=0}^n  \sum_{\ell\in\bbZ}
 \frac{
  \cos^2\Big(\frac{\pi j}{2(n+1)}\Big)\left( 2\pi\ell/\theta\right)^2    |\hat {\cal F}(\ell)|^2 }{ \Big[\om_0^2+4\sin^2\Big(\frac{j\pi}{2(n+1)}\Big)-(2\pi
   \ell  /\theta)^2\Big]^2+\Big[ (4\ga\pi
                                    \ell  /\theta)\Big]^2 }  +O\Big(\frac{1}{n^2}\Big)                               
    \\
                                  &
                                    =-J t +O\Big(\frac{1}{n^2}\Big)   ,
\end{align*}
so that
$$
\lim_{n\to\infty} J_n(t) = - \lim_{n\to\infty}    W_n(n^2t)/n = Jt
$$
and Theorem \ref{th4} follows.}\qed

\section{The Evolution of the Covariance matrix}
\label{ev-covariance}

\subsection{Dynamics of fluctuations}

Denote
\begin{equation}
\label{pqpp}
q'_x(t):=q_x(n^2t)-\bar q_x(n^2t)
\quad\mbox{and}\quad  p'_x(t):=p_x(n^2t)-\bar p_x(n^2t)
\end{equation}
for $x=0,\ldots,n$.
From \eqref{eq:flip} and \eqref{eq:pbdf} 
we get
\begin{equation} 
\label{eq:fflip-1}
\begin{aligned}
 & \dot { q}'_x(t) = {n^2}p'_x(t) ,
  \qquad x\in \{0, \dots, n\},\\
  &\dd { p}'_x(t) =  n^2\left(\Delta q'_x-\om_0^2 
    q_x'\right) \dd t-   2\ga n^2 p_x'(t) \dd t-   2 {p_x(n^2t-)} \dd\tilde
  N_x(\gamma n^2t),
  \quad x\in \{1, \dots, n\}
  \end{aligned} \end{equation}
and at the left boundary
\begin{align}
     \dd  p'_0(t) =  n^2 \left(\Delta q'_0-\om_0^2 q_0'\right)  \dd   t
  -  2  \gamma n^2 p'_0(t) \dd t
     +\sqrt{4 \gamma T_-} n\dd \tilde w_-(t).
                    \vphantom{\Big(}   \label{eq:fpbdf-2}
\end{align}
Here $
\tilde N_x(t):=N_x(t)-t$. Let ${\bf X}'(t)=[q_0'(t),\ldots,q_n'(t), p_0'(t),\ldots,p_n'(t)]$. 
Denote  by    $S_n(t)$ the
  the covariance matrix
\begin{equation}
\label{S1ts}
S_n(t) =\bbE_{\mu_n}\Big[{\bf X}'(t)\otimes {\bf X}'(t)\Big]
=\left[
  \begin{array}{cc}
    {S^{(q)}_n(t)}&S^{(q,p)}_n(t)\\
   S^{(p,q)}_n(t)& S^{(p)}_n(t)
  \end{array}
\right],
\end{equation}
where
\begin{align}
\label{S1ts1}
&S^{(q)}_n(t)=\Big[\bbE_{\mu_n}[q_x'(t)q_y'(t)]\Big]_{x,y=0,\ldots,n},\quad S^{(q,p)}_n(t)=\Big[\bbE_{\mu_n}[q_x'(t)p_y'(t)]\Big]_{x,y=0,\ldots,n},\notag\\
&\\
&
S^{(p)}_n(t)=\Big[\bbE_{\mu_n}[p_x'(t)p_y'(t)]\Big]_{x,y=0,\ldots,n}\quad \mbox{and}\quad S^{(p,q)}_n(t)=\Big[S^{(q,p)}_n(t)\Big]^T. \notag
\end{align}

   \subsection{Structure of the covariance matrix}

Given a vector $\frak y = (y_0, y_1, \dots, y_n)$, define also the matrix
 valued function
 \begin{equation}
\label{D2}
D({\frak y}) =
       4 \gamma\begin{bmatrix}
  T_-& 0 & 0 &\dots&0\\
                     0&   y_1 &  0 &\dots&0\\
                     0 & 0 &   y_2 &\dots&0\\
                     \vdots & \vdots & \vdots & \vdots&\vdots\\
 0& 0 & 0 & \dots &  y_n
                        \end{bmatrix}.
\end{equation}
Let $\Sigma ({\frak y}) $ be the $2\times 2$ block matrix
 \begin{equation}
\label{S2}
\Sigma ({\frak y}) =\left[\begin{array}{cc}
0_{n+1}&0_{n+1}\\
0_{n+1}&D({\frak y})
\end{array}\right].
\end{equation}
{Here $0_{n+1}$ is $(n+1)\times(n+1)$ null matrix.}
{From \eqref{eq:fflip-1} and \eqref{eq:fpbdf-2} we conclude}
\begin{align*}
S_n(t) =\bbE_{\mu_n}\Big[e^{-An^2t}{\bf X}'(0)\otimes {\bf X}'(0) e^{-A^Tn^2t}\Big]+n^2\int_{0}^t
    e^{-An^2(t-s)} \Sigma\Big(\overline{\frak {\bf p}^2}({n^2} s)\Big)
    e^{-A^Tn^2(t-s)}\dd s
\end{align*}
where $A$ is given by \eqref{A}
{and $\overline{\frak {\bf p}^2}(s)=[\bbE_{\mu_n}p_1^2(s),\ldots, \bbE_{\mu_n}p_n^2(s)]$}.
Consequently
      \begin{equation}\label{eq:covev}
        \begin{split}
          &\frac{1}{n^2}\frac{\dd}{\dd t}S_n(t) = - A S_n(t) - S_n(t)A^T+
          \Sigma\Big(\overline{\frak {\bf p}^2}(n^2 t)\Big).
        \end{split}
      \end{equation}
      Denoting
      \begin{equation}
        \label{eq:17}
        \lang S_n \rang_t = \int_0^t  S_n(s) \dd s, \qquad
       \lang {\frak {\bf p}^2}\rang_t = \int_0^t \overline{\frak {\bf p}^2}(n^2 s) \dd s,
     \end{equation}
     we have, { by integrating in \eqref{eq:covev},
     \begin{equation}
       \label{eq:23}
       A \lang S_n \rang_t +\lang S_n \rang_t A^T -
       \Sigma \Big(\lang {\frak {\bf p}^2}\rang_t\Big) = \frac{1}{n^2}\big[S_n(0)-S_n(t)\big].
     \end{equation}}

     In the following $\{\psi_j(x), \mu_{j'}\}_{x,j,j'=0,\dots,n}$ are the eigenfunctions and eigenvalues
     of $\omega_0^2 - \Delta$, given in   \eqref{laps} in \autoref{sec:discrete-laplacian}.
     
Given a matrix $[B_{x,x'}]_{x,x'=0,\ldots,n}$ {we define its Fourier
transform} 
$$
 \tilde{ B}_{j,j'}=\sum_{x,x'=0}^n
 B_{x,x'}\psi_j(x) \psi_{j'}(x'),\quad j,j'=0,\ldots,n.
$$  
Then we have the inverse relations
\begin{equation}
  \label{eq:28}
  B_{x,x'} =\sum_{j,j'=0}^n \tilde{ B}_{j,j'} 
  \psi_j(x) \psi_{j'}(x').
\end{equation}
 
{Following analogous algebraic calculation to those of \cite[section
6.3]{klo22}, see also Section \ref{sec-R} below for a detailed  calculation,}
we obtain
\begin{align}
  \label{eq:47}
  \lang\tilde S^{(p)}_{j,j'}\rang_t= \Theta(\mu_j,\mu_{j'}) \tilde F_{j,j'}(t)
  + \frac{1}{n^2}\tilde R^{(p)}_{j,j'}(t),
\end{align}
where
\begin{align}
  \label{020406-22}
                 \tilde F_{j,j'}(t):=
                 \sum_{y=0}^n\psi_j(y)\psi_{j'}(y)\lang p_y^2\rang_t
                  +\Big( T_- t -  \lang p_0^2\rang_t\Big)\psi_j(0)\psi_{j'}(0)
                \end{align}
and
\begin{align}
  \label{Theta}
\Theta(\mu_j,\mu_{j'}) =
  \left[1 + \frac{(\mu_j -\mu_{j'})^2}{8\gamma^2 (\mu_j +
  \mu_{j'})}\right]^{-1}.
\end{align}
{Concerning  $\tilde R^{(p)}_{j,j'}(t)$, it is of the form
\begin{align}
  \label{010201-23}
                 \tilde R^{(p)}_{j,j'}(t)=\sum_{\iota\in Z}\Xi^{(p)}_\iota(\mu_j,\mu_{j'}) \big[\tilde S^{(\iota)}_{j,j'}(t)-\tilde S^{(\iota)}_{j,j'}(0)\big],
\end{align}
where $Z$ is a 3 element set consisting of indices $p$, $q$ and $pq$
and $\Xi^{(p)}_\iota$ are some $C^\infty$ smooth functions defined on
$[0,4+\om_0^2]\times [0,4+\om_0^2]$.}

We also have
\begin{equation}
  \label{eq:69}
  \lang\tilde S^{(q)}_{j,j'}\rang_t
  = \frac{ 2  \Theta(\mu_j,\mu_{j'})}{\mu_{j} + \mu_{j'}}
  \tilde F_{j,j'}(t) + \frac{1}{n^2}\tilde R^{(q)}_{j,j'}(t), 
\end{equation}
and 
\begin{align}
  \label{eq:47qp}
& \lang\tilde S^{(q,p)}_{j,j'}\rang_t
    = \frac{\Theta(\mu_j,\mu_{j'})}{2\ga (\mu_{j} + \mu_{j'})}
  (\mu_j-\mu_{j'}) \tilde F_{j,j'}(t) + \frac{1}{n^2}\tilde R^{(q,p)}_{j,j'}(t),
\end{align}
where the matrices $\tilde R^{(q)}_{j,j'}(t)$ and $\tilde
R^{(q,p)}_{j,j'}(t)$ are given by analogues of \eqref{010201-23}.

\subsection{Some bounds on the kinetic energy}

From \eqref{eq:47}  we have
\begin{equation}
  \label{eq:50}
    \lang S^{(p)}_{x,x} \rang_t 
  = \sum_{y=0}^n M_{x,y}  \lang p_y^2 \rang_t
  + \big(T_-t-\lang p_0^2\rang_t \big) M_{x,0}+ {r_{n,x}^{(p)}(t)},
\end{equation}
where   {$
\lang S^{(p)}_{x,x} \rang_t =\int_0^t[p_x'(s)]^2\dd s,
  $}
\begin{equation}
M_{x,y} :=\sum_{j,j'=0}^n\Theta(\mu_j,\mu_{j'})
      \psi_j(x)\psi_{j'}(x)\psi_j(y)\psi_{j'}(y) ,
\label{eq:54}
\end{equation}
{
  and
  \begin{equation}
    \label{010602-23}
r_{n,x}^{(p)}(t):=\frac{1}{n^2}\sum_{j,j'=0}^n\tilde R^{(p)}_{j,j'}(t)\psi_j(x) \psi_{j'}(x).
\end{equation}
The latter satisfy the following estimates: for each $t>0$  there exists $C>0$ such that
\begin{equation}
  \label{013001-23}
\sup_{s\in[0,t]}\sum_{x=0}^n|r^{(p)}_{n,x}(s)|\le \frac{C}{n+1},\quad
n=1,2,\ldots.
\end{equation}
The proof of \eqref{013001-23} can be found in Section \ref{sec-rx} below.}

It has been shown in \cite[Appendix A]{klo22} that
\begin{equation}
  \label{030506-22}
  \sum_{y'=0}^nM_{x,y'}=\sum_{y'=0}^nM_{y',x}\equiv 1 \quad
  \mbox{and}\quad M_{x,y}>0\quad\mbox{for all }x,y=0,\ldots,n.
  \end{equation}
 {Recall that  $\lang p_x^2 \rang_t=\lang S^{(p)}_{x,x} \rang_t+
   \int_0^t\bar p_x^2(n^2s)\dd s$.  Under the assumptions of Theorem \ref{th1} we may admit  $\delta=1$ in
the conclusion of Proposition \ref{prop011012-21}. Thanks to
\eqref{021012-21} we conclude that for each $t>0$ there exists $C>0$
such that
\begin{equation}
  \label{013001-23a}
\sum_{x=0}^n\int_0^t\bar p_x^2(n^2s)\dd s\le \frac{C}{n},\quad
n=1,2,\ldots.
\end{equation}
From \eqref{013001-23a}, \eqref{013001-23},
   and \eqref{eq:32c}}
    we infer therefore that
\begin{equation}
  \label{eq:50a}
    \lang p_x^2 \rang_t
  = \sum_{y=0}^n M_{x,y}  \lang p_y^2 \rang_t
  +\rho_{x}(t),
\end{equation}
where
  $\rho_{x}(t)$ satisfies: {for any $t>0$ there exists $C>0$ such that
\begin{equation}
  \label{010506-22}
\sup_{s\in[0,t]}\sum_{x=0}^n|\rho_{x}(s)|\le \frac{C}{n+1},\quad n=1,2,\ldots.
\end{equation}}
The following lower bound on the matrix $[M_{x,y}]$ comes from
\cite[Proposition 7.1]{klo22} (see also \cite{bll}).
  \begin{proposition}
\label{prop031012-21}
There exists $c_*>0$ such that
\begin{align}
\label{lowerM}
\sum_{x,y=0}^n(\delta_{x,y}-M_{x,y})f_yf_x 
\ge c_*\sum_{x=0}^{n-1} (\nabla f_x)^2, \,\quad\mbox{ for any
  }(f_x)\in\bbR^{n+1},\, n=1,2,\dots.
\end{align}
\end{proposition}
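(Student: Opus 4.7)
My plan is to reduce the problem to a spectral question in Fourier coordinates by simultaneously diagonalizing $M$ and the Neumann Laplacian through the basis $\{\psi_j\}_{j=0}^n$ from \eqref{laps}. Writing $f_x = \sum_j \tilde f_j\psi_j(x)$ and substituting the definition \eqref{eq:54} of $M_{x,y}$, one gets
\[
\sum_{x,y=0}^n(\delta_{x,y}-M_{x,y})f_xf_y \;=\; \sum_{j,j'=0}^n\bigl(1-\Theta(\mu_j,\mu_{j'})\bigr)\,a_{j,j'}^2,
\]
where $a_{j,j'} := \sum_x f_x\psi_j(x)\psi_{j'}(x)$; on the other side, Plancherel for the Neumann Laplacian gives $\sum_{x=0}^{n-1}(\nabla f_x)^2 = \sum_j\lambda_j\tilde f_j^{\,2}$ with $\lambda_j=4\sin^2(\pi j/(2(n+1)))$. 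So the problem becomes a bilinear-form inequality between the $a_{j,j'}$ and the $\tilde f_j$.

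Next, I would exploit the explicit form
\[
1-\Theta(\mu_j,\mu_{j'}) = \frac{(\mu_j-\mu_{j'})^2}{8\gamma^2(\mu_j+\mu_{j'}) + (\mu_j-\mu_{j'})^2},
\]
which vanishes exactly on the diagonal $j=j'$ and is bounded below off-diagonal by a quantity of order $\min\bigl(1,(\mu_j-\mu_{j'})^2/(\mu_j+\mu_{j'})\bigr)$. Since $\mu_j=\omega_0^2+\lambda_j$ and $\lambda_j$ is of $\sin^2$-type, the sine-subtraction identity lets me write $\mu_j-\mu_{j'}$ in terms of $\sin(\pi(j-j')/(2(n+1)))\cdot\sin(\pi(j+j')/(2(n+1)))$, which is precisely the Fourier symbol governing $\sqrt{\lambda_{|j-j'|}\lambda_{j+j'}}$. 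Together with $\mu_j+\mu_{j'}$ uniformly bounded from below by $2\omega_0^2>0$ (this is where the pinning is essential) and from above by a constant, this produces a lower bound of the type $1-\Theta(\mu_j,\mu_{j'})\gtrsim \lambda_{|j-j'|}$ uniformly in $j,j',n$.

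To close the argument I would use the product-to-sum identity for cosines: $\psi_j(x)\psi_{j'}(x)$ expands as a linear combination of $\psi_{j+j'}(x)$ and $\psi_{|j-j'|}(x)$, up to normalization constants and corrections supported on the boundary indices. Consequently the $a_{j,j'}$ decompose into combinations of $\tilde f_{j+j'}$ and $\tilde f_{|j-j'|}$. Feeding this decomposition into $\sum (1-\Theta_{j,j'})a_{j,j'}^2$ and reindexing by $k=|j-j'|$ yields a sum of the form $\sum_k\lambda_k\tilde f_k^{\,2}$ up to nonnegative remainder terms, which produces the desired bound with some $c_*>0$.

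The main obstacle will be the bookkeeping between the Fourier coefficients $\tilde f_k$ and the products $a_{j,j'}$ after the triple-product expansion: cross terms between $\tilde f_{j+j'}$ and $\tilde f_{|j-j'|}$ must be controlled (by Young's inequality, absorbing part into a smaller effective constant), and edge cases at $j\pm j'\in\{0,n,n+1\}$ require separate treatment because the product-to-sum identities pick up Neumann-boundary contributions there. The fact that $M$ has the constant vector as its only eigenvector with eigenvalue $1$ (a consequence of the strict positivity and row-stochasticity in \eqref{030506-22}) ensures that $c_*$ can be chosen independently of $n$; for this spectral-gap type claim I would rely on the analysis carried out in \cite[Proposition 7.1]{klo22} and \cite{bll}, where this combinatorial reduction is made explicit.
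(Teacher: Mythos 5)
Your opening reduction is correct: with $a_{j,j'}:=\sum_x f_x\psi_j(x)\psi_{j'}(x)$ one indeed has $\sum_{x,y}(\delta_{x,y}-M_{x,y})f_xf_y=\sum_{j,j'}\big(1-\Theta(\mu_j,\mu_{j'})\big)a_{j,j'}^2$, because $\sum_{j,j'}a_{j,j'}^2=\sum_x f_x^2$ by completeness of $(\psi_j)$. The gap is in your key quantitative step. From $\mu_j-\mu_{j'}=4\sin\big(\tfrac{\pi(j+j')}{2(n+1)}\big)\sin\big(\tfrac{\pi(j-j')}{2(n+1)}\big)$ one gets $(\mu_j-\mu_{j'})^2=\la_{|j-j'|}\cdot 4\sin^2\big(\tfrac{\pi(j+j')}{2(n+1)}\big)$, and the second factor is \emph{not} uniformly bounded below: it degenerates when $j+j'$ is small (and when $j+j'$ is close to $2(n+1)$). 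The lower bound $\mu_j+\mu_{j'}\ge 2\om_0^2$ does not rescue this, since that quantity sits in the denominator of $1-\Theta$, not in the offending sine factor. Concretely, for $j=0$, $j'=1$ one has $1-\Theta(\mu_0,\mu_1)\asymp n^{-4}$ while $\la_1\asymp n^{-2}$, so the claimed uniform bound $1-\Theta(\mu_j,\mu_{j'})\gtrsim\la_{|j-j'|}$ is false, and the product-to-sum reindexing built on it collapses. In addition, your closing appeal to \cite[Proposition 7.1]{klo22} for the $n$-uniformity of $c_*$ is circular: that proposition \emph{is} the statement to be proved (it is exactly what the present paper cites, giving no independent proof), and the fixed-$n$ Perron--Frobenius fact that the constant vector is the only eigenvector of $M$ with eigenvalue $1$ yields by itself neither an $n$-independent constant nor a comparison with the Dirichlet form.

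The argument can be closed much more simply from your correct first step, without any triple-product bookkeeping. Since $\mu_j,\mu_{j'}\in[\om_0^2,\om_0^2+4]$, the denominator $8\ga^2(\mu_j+\mu_{j'})+(\mu_j-\mu_{j'})^2$ is bounded above by a constant depending only on $\ga,\om_0$, hence $1-\Theta(\mu_j,\mu_{j'})\ge c\,(\mu_j-\mu_{j'})^2$. Now observe that $(\mu_j-\mu_{j'})a_{j,j'}=\langle\psi_{j'},[\Delta,F]\psi_j\rangle$, where $F$ is multiplication by $f$ and $\Delta$ the Neumann laplacian \eqref{NL}; therefore $\sum_{j,j'}(\mu_j-\mu_{j'})^2a_{j,j'}^2$ is the squared Hilbert--Schmidt norm of $[\Delta,F]$, which can be computed in the canonical basis: $[\Delta,F]_{x,y}=\Delta_{x,y}(f_y-f_x)$, so the norm equals $2\sum_{x=0}^{n-1}(\nabla f_x)^2$ exactly. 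Combining, $\sum_{x,y}(\delta_{x,y}-M_{x,y})f_xf_y\ge 2c\sum_{x=0}^{n-1}(\nabla f_x)^2$, which is \eqref{lowerM} with a constant depending only on $\ga$ and $\om_0$, uniformly in $n$; no Young-inequality absorption or boundary-index case analysis is needed.
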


Multiplying both sides of \eqref{eq:50a} by $\lang p_x^2\rang_t$,
summing over $x$ and using   Proposition  \ref{prop031012-21} {together
with estimate \eqref{eq:50a}}
we
immediately conclude the following.
 \begin{corollary}
\label{prop031012-21aa}
For any $t>0$ there exists $C>0$ such that
\begin{equation}
\label{051012-21}
\sum_{x=0}^{n-1}[\lang p_x^2\rang_t-\lang p_{x+1}^2\rang_t]^2 \le
\frac{C}{n+1}\sum_{x=0}^n  \lang p_x^2\rang_t,\quad n=1,2,\ldots.
\end{equation}
\end{corollary}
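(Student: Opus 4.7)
The proof proposal is very short and structural, since all the heavy machinery has been established. The plan is to apply the quadratic-form lower bound of Proposition \ref{prop031012-21} to the identity \eqref{eq:50a}, then control the remainder using the uniform estimate \eqref{010506-22}.

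\textbf{Step 1: Test the identity against itself.}
I would take $f_x:=\lang p_x^2\rang_t$ for $x=0,\dots,n$, and rewrite \eqref{eq:50a} as
\begin{equation*}
\sum_{y=0}^n(\delta_{x,y}-M_{x,y})f_y=\rho_x(t),\qquad x=0,\dots,n.
\end{equation*}
Multiplying by $f_x$, summing in $x$, and applying Proposition \ref{prop031012-21}, I obtain
\begin{equation*}
c_*\sum_{x=0}^{n-1}(\nabla f_x)^2\le\sum_{x,y=0}^n(\delta_{x,y}-M_{x,y})f_yf_x=\sum_{x=0}^n f_x\rho_x(t).
\end{equation*}

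\textbf{Step 2: Control the remainder.}
Since $f_x=\lang p_x^2\rang_t\ge 0$, one has the elementary bound $\max_{x} f_x\le\sum_{y=0}^n f_y$. Combined with \eqref{010506-22}, this yields
\begin{equation*}
\Bigl|\sum_{x=0}^n f_x\rho_x(t)\Bigr|\le\Bigl(\max_{x}f_x\Bigr)\sum_{x=0}^n|\rho_x(t)|\le\frac{C}{n+1}\sum_{x=0}^n\lang p_x^2\rang_t.
\end{equation*}
Combining the two displays and relabelling the constant finishes the argument.

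\textbf{Expected obstacle.}
There is no genuine obstacle here; the proof is essentially a one-line consequence of \eqref{eq:50a}, the positivity-based bound $\max_x f_x\le\sum_x f_x$, and the spectral gap estimate \eqref{lowerM}. The only point worth checking is the nonnegativity of $\lang p_x^2\rang_t$ (immediate from the definition \eqref{eq:17} and \eqref{pqpp}, or alternatively from the fact that $\lang p_x^2\rang_t=\int_0^t\bbE_{\mu_n}[p_x^2(n^2s)]\,\dd s$), which justifies replacing the $\ell^\infty$ norm by the $\ell^1$ norm on the right-hand side.
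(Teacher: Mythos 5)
Your proposal is correct and follows essentially the same route as the paper: the paper likewise multiplies \eqref{eq:50a} by $\lang p_x^2\rang_t$, sums over $x$, and invokes Proposition \ref{prop031012-21} together with the remainder estimate \eqref{010506-22}. Your additional observation that $\max_x\lang p_x^2\rang_t\le\sum_x\lang p_x^2\rang_t$ (valid by nonnegativity of $\lang p_x^2\rang_t=\int_0^t\bbE_{\mu_n}[p_x^2(n^2s)]\,\dd s$) is exactly what converts the bound into the stated form with $\sum_{x=0}^n\lang p_x^2\rang_t$ on the right-hand side.
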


   \begin{proposition}
\label{prop031012-21a}
For any $t>0$ there exists $C>0$ such that
\begin{equation}
\label{051012-21a}
\begin{split}
&\sum_{x=0}^{n-1}[\lang p_x^2\rang_t-\lang p_{x+1}^2\rang_t]^2 \le
\frac{C}{n+1},\quad n=1,2,\ldots,\\
&
\sup_{x=0,\ldots,n}\lang p_x^2\rang_t\le C.
\end{split}
\end{equation}
\end{proposition}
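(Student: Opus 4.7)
The strategy is to exploit \eqref{eq:50a} viewed as the equation $(I-M)f=\rho$ for $f_x:=\lang p_x^2\rang_t$, where the perturbation $\rho=\rho(t)$ satisfies $\sum_x|\rho_x(t)|\le C/(n+1)$ by \eqref{010506-22}, while the energy bound (Corollary \ref{th2}) yields that the spatial mean $\bar f:=(n+1)^{-1}\sum_x f_x$ is uniformly bounded. I will treat $f$ as mean-plus-fluctuation and combine the Dirichlet-form bound of Proposition \ref{prop031012-21} with a discrete Sobolev embedding to obtain a self-improving estimate on $\|\nabla f\|_{\ell^2}$. Note that Corollary \ref{prop031012-21aa} alone is insufficient: inserting the energy bound on $\sum_x \lang p_x^2\rang_t=O(n+1)$ yields only $\sum_{x}(\nabla f_x)^2\le C$, not $\le C/(n+1)$. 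The decisive gain comes from the $\ell^1$-smallness of $\rho$.

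Since $M$ is doubly stochastic by \eqref{030506-22}, $(I-M)\mathbf{1}=0$ and $\textrm{range}(I-M)$ equals the zero-mean subspace, so writing $f_x=\bar f+g_x$ with $g$ zero-mean we still have $(I-M)g=\rho$ (automatically $\sum_x\rho_x(t)=0$). Testing against $g$ and invoking Proposition \ref{prop031012-21} gives
\begin{equation*}
c_*\sum_{x=0}^{n-1}(\nabla g_x)^2\le\langle(I-M)g,g\rangle=\langle g,\rho\rangle\le\|g\|_\infty\sum_x|\rho_x(t)|\le\frac{C}{n+1}\|g\|_\infty.
\end{equation*}
The zero-mean property of $g$ yields the discrete Sobolev embedding
\begin{equation*}
|g_x|=\left|\frac{1}{n+1}\sum_{y=0}^n(g_x-g_y)\right|\le\frac{1}{n+1}\sum_{y=0}^n\sqrt{|x-y|}\;\|\nabla g\|_{\ell^2}\le\sqrt{n+1}\,\|\nabla g\|_{\ell^2},
\end{equation*}
where the pointwise step uses $g_x-g_y=\sum_{z}\nabla g_z$ together with the Cauchy-Schwarz inequality.

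Substituting the Sobolev bound into the previous display yields $\|\nabla g\|_{\ell^2}^2\le C'(n+1)^{-1/2}\|\nabla g\|_{\ell^2}$, hence $\|\nabla g\|_{\ell^2}\le C''/\sqrt{n+1}$, which is precisely the first estimate of the Proposition since $\nabla f_x=\nabla g_x$. Re-inserting into the Sobolev embedding gives $\|g\|_\infty\le C''$, and combining with the uniform bound on $\bar f$ yields $\sup_x f_x\le \bar f+\|g\|_\infty\le C'''$, which is the second estimate. The principal technical obstacle lies upstream, in verifying \eqref{010506-22} itself: one must track the $O(1/n)$ bound \eqref{eq:32c} on the boundary quantity $\lang p_0^2\rang_t-T_-t$, the estimate \eqref{013001-23} on the covariance remainder $r_{n,x}^{(p)}$, and the mean-dynamics decay \eqref{013001-23a} furnished by Proposition \ref{prop011012-21}. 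Once these ingredients are in place, the Poincar\'e-Sobolev bootstrap above is routine.
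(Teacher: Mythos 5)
Your argument is correct, and it differs from the paper's in how the key self-improvement is closed, though both rest on the same two ingredients: the fixed-point relation \eqref{eq:50a} with the $\ell^1$-small remainder \eqref{010506-22}, and the Dirichlet-form lower bound of Proposition \ref{prop031012-21}. The paper tests \eqref{eq:50a} against $\lang p_x^2\rang_t$ itself, obtaining $D_n\le \frac{C}{n+1}\sup_x\lang p_x^2\rang_t$, and then closes a square-root bootstrap $\sup_x\lang p_x^2\rang_t\le \lang p_0^2\rang_t+\sqrt{n+1}\,D_n^{1/2}\le C+C\big(\sup_x\lang p_x^2\rang_t\big)^{1/2}$, anchored at the boundary via \eqref{eq:32c}; it thus proves the sup bound first and deduces the gradient bound afterwards (passing through an intermediate $O(n^{1/2})$ estimate). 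You instead subtract the spatial mean (anchored by the energy bound of Corollary \ref{th2}), test against the zero-mean part $g$, and use the mean-zero discrete Poincar\'e--Sobolev inequality $\|g\|_\infty\le\sqrt{n+1}\,\|\nabla g\|_{\ell^2}$, which turns the self-improvement into a linear cancellation and yields $\|\nabla g\|_{\ell^2}\le C/\sqrt{n+1}$ directly, hence the gradient bound first and the sup bound second. Your step $(I-M)g=(I-M)f=\rho$ is justified by the row-stochasticity in \eqref{030506-22}; the side remarks that the range of $I-M$ is the zero-mean subspace and that $\sum_x\rho_x(t)=0$ are true but not needed. What each route buys: yours avoids the quadratic bootstrap and the intermediate $n^{1/2}$ bound and does not need \eqref{eq:32c} at this stage (it still enters upstream, inside \eqref{010506-22}), while the paper's version works directly with $\lang p_x^2\rang_t$ without introducing the mean--fluctuation splitting.
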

\proof
As a direct consequence of \eqref{eq:4a} and Corollary
\ref{prop031012-21aa} we have: for any $t>0$  there exists $C>0$ such that
\begin{equation}
\label{071512-21}
\sum_{x=0}^{n-1}[\lang p_x^2\rang_t-\lang p_{x+1}^2\rang_t]^2 \le C 
\end{equation}
and
 \begin{equation}
     \label{031312-21a}
 \sup_{x=0,\ldots,n}\lang p_x^2\rang_t \le Cn^{1/2},\quad n=1,2,\ldots
 \end{equation}
 Indeed, estimate \eqref{071512-21} is obvious in light of \eqref{051012-21}.
   To prove \eqref{031312-21a} note that by the Cauchy-Schwarz inequality
   \begin{align*}
 &\lang p_x^2\rang_t\le \sum_{y=1}^n|\lang p_y^2\rang_t-\lang
     p_{y-1}^2\rang_t|+\lang p_0^2\rang_t\\
     &
       \le \sqrt{n}\left\{\sum_{y=1}^n[\lang p_y^2\rang_t-\lang
      p_{y-1}^2\rang_t]^2\right\}^{1/2}+\lang p_0^2\rang_t
       \le C\sqrt{n} +\lang p_0^2\rang_t
   \end{align*}
   and \eqref{031312-21a} follows, thanks to \eqref{eq:32c}.

From \eqref{eq:50a} and  \eqref{010506-22} we conclude that
for any $t>0$ we can find $C>0$ such that
\begin{align}
  \label{051012-21bb}
  & \sum_{x=0}^{n-1}\Big(\lang p_x^2\rang_t-\lang p_{x+1}^2\rang_t\Big)^2   \le
 \sum_{x=0}^n |\rho_{x}(t)|\lang p_x^2\rang_t \notag\\
&  \le \sup_x\lang p_x^2\rang_t \sum_{x=0}^n |\rho_{x}(t)|
  \le \frac{C}{n+1}  \sup_x\lang p_x^2\rang_t\quad
\end{align}
Using the Cauchy-Schwarz inequality we conclude
\begin{align}
  \label{051012-21bd}
& \sup_{x}\lang
  p_x^2\rang_t\le \lang
  p_0^2\rang_t +\sum_{x=0}^{n-1}\Big|\lang p_x^2\rang_t-\lang p_{x+1}^2\rang_t\Big|\notag\\
&
\le  \lang
  p_0^2\rang_t +\sqrt{n}\left\{\sum_{x=0}^{n-1}\Big(\lang p_x^2\rang_t-\lang p_{x+1}^2\rang_t\Big)^2 \right\}^{1/2}  
\end{align}
Denote
$
D_n:=\sum_{x=0}^{n-1}\Big(\lang p_x^2\rang_t-\lang p_{x+1}^2\rang_t\Big)^2.
$
We can summarize the inequalities obtained as follows: for any $t>0$  there exists
$C>0$ such that
\begin{align}
  \label{051012-21bddd}
&
D_n\le  \frac{C}{n+1}  \sup_x\lang
  p_x^2\rang_t, \\
& \sup_{x}\lang 
  p_x^2\rang_t\le \lang 
  p_0^2\rang_t +\sqrt{n+1}D_n^{1/2} \le\lang
  p_0^2\rang_t +C+C \sup_x\lang
  p_x^2\rang_t^{1/2},\notag
\end{align}
for all $n=1,2,\ldots$.
Thus the second estimate of \eqref{051012-21a} follows, which in turn
implies the first estimate of
\eqref{051012-21a}  as well.

\subsection{Calculation of $\tilde R^{(p)}(t)$,  $\tilde R^{(q)}(t)$
  and  $\tilde R^{(q,p)}(t)$}

\label{sec-R}

{Equation \eqref{eq:23} leads to the following equations
(see \eqref{A} and \eqref{D2}):
\begin{equation}
  \begin{split}
  \label{163011-21}
  & \lang  \tilde S^{(q,p)}_{j,j'}\rang_t+\lang \tilde  S^{(p,q)}_{j,j'}\rang_t
    =\frac{1}{n^{2}}\big[\tilde S^{(q)}_{j,j'}(0)-\tilde
    S^{(q)}_{j,j'}(t)\big]\quad\mbox{and}\quad \Big(\lang \tilde S^{(p,q)}\Big)^T=\lang \tilde S^{(q,p)},\\
  &\lang \tilde S^{(q)}_{j,j'}\rang_t \mu_{j'}+ 2\gamma \lang \tilde  S^{(q,p)}_{j,j'}\rang_t
    -  \lang \tilde  S^{(p)}_{j,j'}\rang_t=\frac{1}{n^{2}}\big[\tilde S^{(q,p)}_{j,j'}(0)-\tilde
    S^{(q,p)}_{j,j'}(t)\big],
  \\
  &\mu_j  \lang \tilde  S^{(q)}_{j,j'}\rang_t+ 2\gamma  \lang \tilde  S^{(p,q)}_{j,j'}\rang_t
    - \lang \tilde  S^{(p)}_{j,j'}\rang_t=\frac{1}{n^{2}}\big[\tilde S^{(p,q)}_{j,j'}(0)-\tilde
    S^{(p,q)}_{j,j'}(t)\big],\\
  &\mu_j \lang \tilde  S^{(q,p)}_{j,j'}\rang_t-  \lang \tilde S^{(q,p)}_{j,j'}\rang_t
    \mu_{j'}=\tilde D_{j,j'}(  \lang {\frak {\bf p}^2}\rang_t) - 4\gamma \lang \tilde S^{(p)}_{j,j'}\rang_t+\frac{1}{n^{2}}\big[\tilde S^{(p)}_{j,j'}(0)-\tilde
    S^{(p)}_{j,j'}(t)\big].
  \end{split}
  \end{equation}
Adding  and subtracting the second and the third equations sideways
we can rewrite \eqref{163011-21} as follows
\begin{align}
  \label{163011-21a}
  &\lang  \tilde S^{(q,p)}_{j,j'}\rang_t
   =-\lang  \tilde S^{(p,q)}_{j,j'}\rang_t +\frac{1}{n^{2}}\big[\tilde S^{(q)}_{j,j'}(0)-\tilde
    S^{(q)}_{j,j'}(t)\big],\notag\\
 &\lang  \tilde S^{(p)}_{j,j'}\rang_t=\frac12\big(\mu_j+\mu_{j'}\big) \lang  \tilde S^{(q)}_{j,j'}\rang_t
 + \frac{1}{n^{2}}\big[\tilde B^{(p)}_{j,j'}(0)-\tilde
    B^{(p)}_{j,j'}(t)\big],
   \notag\\
  &4\ga \lang  \tilde S^{(q,p)}_{j,j'}\rang_t
    = \lang  \tilde S^{(q)}_{j,j'}\rang_t(\mu_j-\mu_{j'}) + \frac{1}{n^{2}}\big[\tilde B^{(q,p)}_{j,j'}(0)-\tilde
    B^{(q,p)}_{j,j'}(t)\big],\\
  &(\mu_j-\mu_{j'}) \lang  \tilde S^{(q,p)}_{j,j'}\rang_t=4 \ga \tilde F_{j,j'}(t)
    -4\ga \lang  \tilde S^{(p)}_{j,j'}\rang_t +\frac{1}{n^{2}}\big[\tilde S^{(p)}_{j,j'}(t)-\tilde
    S^{(p)}_{j,j'}(0)\big]\notag    , 
\end{align}
where $\tilde F_{j,j'}(t)$ is given by \eqref{020406-22} and 
\begin{align}
  \label{020406-22z}
& {\tilde B}^{(p)}_{j,j'}(t):=\frac12\Big(2\gamma {\tilde  S}^{(q)}_{j,j'} (t)
  +{\tilde S}^{(q,p)}_{j,j'} (t) +{\tilde S}^{(p,q)}_{j,j'} (t)\Big),\\
&
 {\tilde B}^{(q,p)}_{j,j'}(t):= 2\gamma  {\tilde  S}^{(q)}_{j,j'} (t) +
   {\tilde S}^{(q,p)}_{j,j'} (t) -   {\tilde S}^{(p,q)}_{j,j'} (t). \notag
                \end{align}               
Hence,
\begin{equation}
  \label{eq:69a}
  \begin{split}
  &  \lang  \tilde S^{(q)}_{j,j'}\rang_t= \frac{ 2  \Theta(\mu_j,\mu_{j'})}{\mu_{j} + \mu_{j'}} \tilde F_{j,j'}(t) +\frac{1}{n^{2}}\big[\tilde L^{(q)}_{j,j'}(t)-\tilde
    L^{(q)}_{j,j'}(0)\big],\\
   & \lang  \tilde S^{(p)}_{j,j'}\rang_t = \Theta(\mu_j,\mu_{j'}) \tilde F_{j,j'}(t) +\frac{1}{n^{2}}\big[\tilde L^{(p)}_{j,j'}(t)-\tilde
   L^{(p)}_{j,j'}(0)\big],\\
   & \lang  \tilde S^{(q,p)}_{j,j'}\rang_t 
    = \frac{\Theta(\mu_j,\mu_{j'})}{2\ga (\mu_{j} + \mu_{j'})}
  (\mu_j-\mu_{j'}) \tilde F_{j,j'}(t) +\frac{1}{n^{2}}\big[\tilde L^{(q,p)}_{j,j'}(t)-\tilde
   L^{(q,p)}_{j,j'}(0)\big],
\end{split}
  \end{equation}
with $\Theta(\cdot,\cdot)$ given by  \eqref{Theta} and 
\begin{align*}
&
 {\tilde L}^{(q)}_{j,j'}(t):=\frac{ 2  \Theta(\mu_j,\mu_{j'})}{\mu_{j}
  + \mu_{j'}}\Big( {\tilde
  B}^{(p)}_{j,j'}(t)+\frac{\mu_j-\mu_{j'}}{(4\ga)^2} {\tilde
  B}^{(q,p)}_{j,j'}(t) +\frac{1}{4\ga} {\tilde
                 S}^{(p)}_{j,j'}(t)\Big),
\\
 & {\tilde L}^{(p)}_{j,j'}(t):=\frac12\Big(\mu_{j} + \mu_{j'}\Big)
   {\tilde L}^{(q)}_{j,j'}(t)- {\tilde B}^{(p)}_{j,j'}(t),\\
    &
 {\tilde
  L}^{(q,p)}_{j,j'}(t):=-\frac{1}{4\ga} {\tilde
  B}^{(q,p)}_{j,j'}(t)+\frac{\mu_j-\mu_{j'}}{4\ga} {\tilde L}^{(q)}_{j,j'}(t).
\end{align*}}

\subsubsection{Proof of (\ref{013001-23})}

\label{sec-rx}

{Using \eqref{010602-23} and \eqref{010201-23}
we can write
\begin{align}
  \label{030602-23}
  &
    r_{n,x}^{(p)}(t):=  \sum_{\iota\in
    Z}\big[g_{x,\iota}^{(p)}(t)-g_{x,\iota}^{(p)}(0)\big], 
     \\
 & g_{x,\iota}^{(p)}(t):=\frac{1}{n^2}\sum_{j,j'=0}^n\Xi^{(p)}_\iota(\mu_j,\mu_{j'})\psi_j(x) \psi_{j'}(x) \psi_j(y) \psi_{j'}(y')S_n^{(\iota)}(t).\notag
\end{align}
Here $Z$  is a  set consisting of indices $p$, $q$ and $pq$
and $\Xi^{(p)}_\iota$ are some $C^\infty$ smooth functions.
In what follows we  show that for any $t>0$ there exists $C>0$ such that
\begin{equation}
  \label{020602-23}
 \sup_{s\in[0,t]} \sum_{x=0}^n|g_{x,\iota}^{(p)}(s)|\le\frac{C}{n+1},\quad n=1,2,\ldots.
  \end{equation}
  This, in light of \eqref{030602-23}, clearly implies \eqref{013001-23}.}

{Consider
only the case $\iota=p$, as the other cases can be argued in the same manner.
Then,
\begin{align*}
  g_{x,p}^{(p)}(t)=\frac{1}{n^2}\sum_{j,j'=0}^n \sum_{y,y'=0}^n\Xi^{(p)}_p(\mu_j,\mu_{j'})\psi_j(x) \psi_{j'}(x) \psi_j(y) \psi_{j'}(y') \bbE_{\mu_n}[p_y'(t)p_{y'}'(t)].
\end{align*}
Using \eqref{laps} and elementary trigonometric identities we obtain
\begin{align*}
&  g_{x,p}^{(p)}(t)=\frac{1}{n^2 }\sum_{y,y'=0}^n
                 \bbE_{\mu_n}[p_y'(t)p_{y'}'(t)]K_n(x,y,y'),\quad\mbox{where}\\
  &
    K_n(x,y,y'):=\frac{1}{(n+1)^2}\sum_{j,j'=-n}^n
  \Xi^{(p)}_p(\mu_j,\mu_{j'})\\
  &
    \times\Big[\cos\Big(\frac{\pi
  j(x-y)}{n+1}\Big)+\cos\Big(\frac{\pi
  j(x+y+1)}{n+1}\Big)\Big]\Big[\cos\Big(\frac{\pi
  j'(x-y')}{n+1}\Big)+\cos\Big(\frac{\pi
  j'(x+y'+1)}{n+1}\Big)\Big].
\end{align*}
Using \cite[Lemma B.1]{klo22} we conclude that there exists $C>0$ such that
\begin{align}
  \label{Kkn}
  &
    |K_n(x,y,y')|\le Ck_n(x,y)k_n(x,y'),\quad x,y=0,\ldots,n,,n=1,2,\ldots,\mbox{where}\notag\\
  &
   k_n(x,y):=  \frac{1}{1+(x-y)^2}+\frac{1}{1+(x+y-2n)^2}.
\end{align}
We conclude therefore that
\begin{equation}
  \label{010602-23}
  \sum_{x=0}^n|g_{x,p}^{(p)}(t)|\le \frac{C}{n^2 }\sum_{x=0}^n
    \bbE_{\mu_n}\left[\left(\sum_{y=0}^np_y'(t)k_n(x,y)\right)^2\right]=
    \frac{C}{n^2 }\bbE_{\mu_n}\left[\sup_{\|h\|_{\ell^2}=1}\sum_{x=0}^nh_x
                 \sum_{y=0}^np_y'(t)k_n(x,y)\right].
\end{equation}
The supremum extends over all real valued sequences
$h=(h_0,\ldots,h_n)$, with
$\|h\|_{\ell^2}^2=\sum_{x=0}^nh_x^2=1$. Using an elementary inequality
$h_xp_y'(t)\le h_x^2/2+[p_y'(t)]^2/2$ we can estimate the right hand
side of \eqref{010602-23} by
\begin{align*}
  & \frac{C}{2n^2 } \sup_{\|h\|_{\ell^2}=1}\sum_{x=0}^n h_x^2
  \sum_{y=0}^nk_n(x,y) +
  \frac{C}{2n^2 }\bbE_{\mu_n}\left[ 
                 \sum_{y=0}^n[p_y'(t)]^2 \sum_{x=0}^n
  k_n(x,y)\right]\\
  &
    \le \frac{CK}{2n^2 }\left(1+\sum_{y=0}^n\bbE_{\mu_n}
                 [p_y'(t)]^2 \right),
\end{align*}
where $K=\sup_{x,n} \sum_{y=0}^n\Big(k_n(x,y)
+k_n(x,y)\Big)$. Estimate \eqref{020602-23} for $\iota=p$ is then a
straightforward consequence of the energy bound \eqref{eq:4a}.}

\section{Proof of local equilibrium}
\label{sec-proofs-le}

We prove here Propositions \ref{prop010803-22} and  \ref{prop-boundaryeq}.

\subsection{Proof of Proposition \ref{prop010803-22}}
Suppose that $\rho\in(0,1/2)$ is such that ${\rm supp}\,\varphi\subset(\rho,1-\rho)$.
Let
\begin{equation}
  \label{Phi}
  \Phi\left(\mu_j,\mu_{j'}\right)
  = \frac{2\Theta(\mu_j,\mu_{j'})}{\mu_j+\mu_{j'}}.
  \end{equation}
For a fixed integer $\ell$ define
\begin{equation}
  \label{eq:70}
   \bar  K^{(n,\ell)}(x):=\frac{1}{4(n+1)^2}\sum_{j,j'=-n-1}^n \Phi\left(\mu_j,\mu_{j'} \right)
     \cos\left(\frac{\pi jx}{n+1}\right) \cos\left(\frac{\pi j'(x-\ell)}{n+1}\right).
\end{equation}
By \cite[Lemma B.1]{klo22}, for a given $\ell$ there exists $C>0$ such that
\begin{equation}
\label{023112-21b}
|\bar K_1 ^{(n,\ell)} (x)|\le
\frac{C}{1+x^2},\quad x=0,\ldots,n 
\end{equation}
for $n=1,2,\ldots.$
It has been shown in Section 8.1 of \cite{klo22} that for any {$\rho\in(0,1/2)$}
there exists  $C>0$ such that
\begin{equation}
\label{023112-21a}
|\sum_{y=0}^n\bar K^{(n,\ell)} (x-y)-G_{\om_0}(\ell)|\le
\frac{C}{n^2},\quad     \rho n\le x \le (1-\rho)n.
\end{equation}
 for $n=1,2,\ldots$.


  By virtue of \eqref{021012-21} we have
 \begin{equation}
    \label{eq:32ab}
    \lim_{n\to+\infty}\frac{1}{n+1} \sum_{x=0}^{n}
    \varphi\left(\frac{x}{n+1}\right) \int_0^{t}   \mathbb
    E\big[ q_x(n^2s) \big]   \bbE\big[  q_{x+\ell}(n^2s) \big] \dd s=0.
  \end{equation}
It suffices therefore to prove that 
\begin{equation}
    \label{eq:32ap}
    \lim_{n\to+\infty}\frac{1}{n+1} \sum_{x=0}^{n}
    \varphi\left(\frac{x}{n+1}\right)
      \Big\{ \lang S^{(q)}_{x,x+\ell}\rang_t -  G_{\omega_0}(\ell) \lang p_x^2\rang_t \Big\} = 0.
  \end{equation}
We  prove \eqref{eq:32ap} for $\ell=0$, the argument for other values
of $\ell$ are similar. 
By \eqref{eq:69} we have
  \begin{equation}
    \label{eq:62}
    \begin{split}
&      \mathbb
   \lang S^{(q)}_{x,x}\rang_t     
= \sum_{y=0}^n H^{(n)}_{x,y} \lang p_y^2\rang_t   
+B_{n}(t,x) + {r_{n,x}^{(q)}(t)},
    \end{split}
  \end{equation}
with {(cf \eqref{eq:69})}
\begin{equation}
  \label{eq:63}
 \begin{split}
& H^{(n)}_{x,y}:=
  \sum_{j,j'=0}^n \Phi\left(\mu_j,\mu_{j'} \right)
  \psi_j(y) \psi_{j'}(y) \psi_j(x) \psi_{j'}(x),\\
  &
  B_{n}(t,x):=\sum_{j,j'=0}^n\psi_j(x) \psi_{j'}(x)
  \Phi\left(\mu_j,\mu_{j'}\right) \Big( T_- t-
  \lang p_0^2\rang_t\Big)\psi_j(0)\psi_{j'}(0) \\
  &
  {r_{n,x}^{(q)}(t):=\frac{1}{n^2}\sum_{j,j'=0}^n\psi_j(x)\psi_{j'}(x)\tilde R^{(q)}_{j,j'}(t) }.
\end{split}
\end{equation}
Using \eqref{eq:4a} and 
\eqref{eq:32c} we conclude that
$\lim_{n\to+\infty}\sup_{x}\Big|B_{n}(t,x)\Big|=0$. {Likewise, by
\eqref{eq:69} and \eqref{eq:69a}, we conclude that
$\lim_{n\to+\infty}\sup_{x}\Big|r_{n,x}^{(q)}(t)\Big|=0$.
}

Furthermore, {by \eqref{023112-21a},} if $\rho n\le x \le (1-\rho)n$,
\begin{equation}
  \label{eq:26}
  \sum_{y=0}^n H^{(n)}_{x,y}= G_{\omega_0}(0) + o_{n,x}(t),
\end{equation}
{where, for any fixed $t>0$ we have $\lim_{n\to+\infty}\sup_{\rho n\le x \le (1-\rho)n}\Big|o_{n,x}(t)\Big|=0$}.
Then we have that
\begin{equation}
  \label{eq:27}
  \begin{split}
  \frac{1}{n+1} \sum_{x=0}^{n}
    \varphi\left(\frac{x}{n+1}\right)
    \Big\{ \lang S^{(q)}_{x,x }\rang_t -  G_{\omega_0}(0) \lang p_x^2\rang_t \Big\}\\
    = \frac{1}{n+1} \sum_{x=0}^{n}
    \varphi\left(\frac{x}{n+1}\right) \sum_{y=0}^n H^{(n)}_{x,y}
    \left[ \lang p_y^2\rang_t -\lang p_x^2\rang_t \right] + o_n(t).
    \end{split}
  \end{equation}
  {Here and below  $\lim_{n\to+\infty}o_n(t)=0$ for each $t>0$.}
  We have
  \begin{equation}
    \label{eq:29}
    \begin{split}
      &\left|\sum_{y=0}^n H^{(n)}_{x,y}
      \left[ \lang p_y^2\rang_t -\lang p_x^2\rang_t \right]\right| \le
      \sum_{y=0}^n | H^{(n)}_{x,y}| \sum_{z=x}^{y-1}
      \left| \lang p_{z+1}^2\rang_t -\lang p_z^2\rang_t \right| .
    \end{split}
  \end{equation}
 { 
It follows from \cite[Lemma B.1]{klo22} that there
  exists $C>0$ such that
  $$
  | H^{(n)}_{x,y}|\le \frac{C}{1+(x-y)^2}, \quad \rho n\le x\le (1-\rho)n, \,y=0,\ldots,n,\,\,n=1,2,\ldots.
  $$
Using Cauchy-Schwarz inequality and \eqref{051012-21a} we  conclude
  that the right hand side of  \eqref{eq:29} is estimated by
  \begin{align}
    \label{030602-23}
     \frac{C}{(n+1)^{1/2}}\sum_{y=0}^{n}  | H^{(n)}_{x,y}||y-x|^{1/2}
      \le 
      \frac{C'}{(n+1)^{1/2} },\quad n=1,2,\ldots
  \end{align}
for some constant $C'$ independent of $x=0,\ldots,n$ and $n=1,2,\ldots$.
  and Proposition \ref{prop010803-22} follows for $\ell=0$.}

$\qed$


\subsection{Proof of Proposition \ref{prop-boundaryeq}}

From  Proposition \ref{prop011012-21} we have
  $$
  \lim_{n\to+\infty}\int_0^{t}  \bar q_0^2(s)\dd s=0.
  $$
  It suffices therefore to calculate
  $ \int_0^{t}  \mathbb E\big(q_0'(s)^2\big) \dd s = \lang S^{(q)}_{0,0}\rang_t$.
  
    We have, see \eqref{eq:69} and \eqref{Phi},
\begin{align}
  \label{020612-21x}
   \lang S^{(q)}_{0,0}\rang_t = &\sum_{y=0}^n \sum_{j,j'=0}^n\Phi(\mu_j,\mu_{j'})
    \psi_j(0)\psi_{j'}(0) \psi_j(y)\psi_{j'}(y) \lang p_y^2\rang_t
  \\
  &
  +\sum_{j,j'=0}^n\Phi(\mu_j,\mu_{j'})
    \psi_j(0)^2\psi_{j'}(0)^2  \left(T_-t-\lang p_0^2\rang_t\right)
    +o_n(t)\notag\\
& =\sum_{y=0}^n H_y^{(n)}\lang p_y^2\rang_t+o_n(t).\notag
\end{align}
and
\begin{equation}
  \label{eq:24}
  H_y^{(n)}:= \sum_{j,j'=0}^n\Phi(\mu_j,\mu_{j'}) \psi_j(0)\psi_{j'}(0) \psi_j(y)\psi_{j'}(y) .
\end{equation}
The coefficients  $H_y^{(n)} $ have the property
\begin{equation}
  \label{eq:25}
  \begin{split}
  \sum_{y=0}^n H_y^{(n)} &= \sum_{j=0}^n\Phi(\mu_j,\mu_{j}) \psi_j(0)^2
  = \sum_{j=0}^n\frac{1}{\mu_j} \psi_j(0)^2 \\
  &= \frac{1}{n+1} \sum_{j=0}^n \frac{\cos^2\left(\frac{\pi j}{2(n+1)}\right)}
{\om_0^2 + 4 \sin^2\left(\frac{\pi j}{2(n+1)}\right)}
\mathop{\longrightarrow}_{n\to\infty} G_{\om_0}(0) + G_{\om_0}(1).
\end{split}
\end{equation}


Using
\cite[Lemma B.1]{klo22} we conclude that there exists $C>0$ such that
\begin{equation}
\label{032912-21}
|H_{y}^{(n)}|\le \frac{C}{1+y^2},\quad y=0,\ldots,n,\,n=1,2,\ldots.
\end{equation}
{Then, proceeding as in \eqref{eq:29}--\eqref{030602-23}, by using the Cauchy-Schwarz inequality, the first estimate of
\eqref{051012-21a} and \eqref{032912-21},} we conclude that 
 \begin{align*}
\sum_{y=0}^n |H_{y}^{(n)}|\Big|\lang
    p_y^2\rang_t -\lang
                  p_0^2\rang_t \Big|
                                    \le \frac{C}{\sqrt{n}}.
\end{align*}
Hence 
\begin{align}
  \label{020612-21y}
   \lang S^{(q)}_{0,0}\rang_t 
=\left(\sum_{y=0}^n H_{y,0}^{(n)}\right)\lang
  p_0^2\rang_t +o_n(t)
  = G_{\om_0}(0) + G_{\om_0}(1) +o_n(t).
\end{align}

\qed

 \section{Uniqueness of   solutions to (\ref{eq:5})}
\label{A:unique}


{\begin{theorem}
  \label{w-f}
  Suppose that $T_0\in {\cal M}_{\rm fin}\Big([0,1]\Big)$. Then,
    the initial-boundary value problem \eqref{eq:5} has a unique weak
    solution in the sense of Definition \ref{df012701-23}.
  \end{theorem}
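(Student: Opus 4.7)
The plan is to reduce the problem to a homogeneous one by subtracting an explicit steady profile, construct a solution via the spectral decomposition of the Dirichlet--Neumann Laplacian, and establish uniqueness by a duality argument with the backward heat equation. Specifically, I would introduce $V(u) := T_- - (4\ga J/D)u$, which satisfies $V(0) = T_-$, $V'(1) = -4\ga J/D$ and $V''\equiv 0$. Setting $\tilde T(t,\dd u) := T(t,\dd u) - V(u)\dd u$ and integrating $\int_0^1 V\varphi''\dd u$ by parts twice (using $\varphi(0) = \varphi'(1) = 0$), one checks that $T$ solves \eqref{eq:5w} with initial datum $T_0$ if and only if $\tilde T$ solves the homogeneous weak heat equation
\begin{equation*}
\int_0^1\varphi(u)\tilde T(t,\dd u) - \int_0^1\varphi(u)\tilde T_0(\dd u) = \frac{D}{4\ga}\int_0^t\dd s\int_0^1\varphi''(u)\tilde T(s,\dd u)
\end{equation*}
with initial datum $\tilde T_0 := T_0 - V(u)\dd u$, a finite signed Borel measure on $[0,1]$. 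It thus suffices to treat the reduced problem.

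For existence, I would use the orthonormal eigenbasis $\phi_n(u) = \sqrt{2}\sin((n+\tfrac12)\pi u)$, $n\ge 0$, of $-\partial_u^2$ on $[0,1]$ with Dirichlet boundary at $u=0$ and Neumann at $u=1$, whose eigenvalues are $\mu_n' = ((n+\tfrac12)\pi)^2$, and define
\begin{equation*}
\tilde T(t,\dd u) := \Bigl(\sum_{n\ge 0}e^{-(D/4\ga)\mu_n' t}\langle\phi_n,\tilde T_0\rangle\phi_n(u)\Bigr)\dd u,\quad t>0,
\end{equation*}
with $\tilde T(0) = \tilde T_0$. For $t>0$ the series converges in every $C^k$ norm since $|\langle\phi_n,\tilde T_0\rangle|\le\sqrt{2}\|\tilde T_0\|_{\rm TV}$. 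Any admissible test function $\varphi$ expands as $\varphi = \sum_n c_n\phi_n$ with $\mu_n' c_n = \langle\phi_n,-\varphi''\rangle$ (two integrations by parts, using the matching boundary conditions on both $\phi_n$ and $\varphi$); termwise differentiation in $t$ then yields the weak formulation. Continuity of $t\mapsto\tilde T(t)\in\mathcal M_{\rm fin}([0,1])$, including at $t=0$, follows by dominated convergence in the spectral expansion tested against the class $\{\varphi\in C[0,1]:\varphi(0)=0\}$, which is the natural class the weak formulation detects.

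For uniqueness, let $\mu(t) := T^1(t) - T^2(t)$ be the difference of two solutions, so that $\mu(0) = 0$ and the homogeneous weak relation holds. Fix $t_*>0$ and $\psi\in C^\infty[0,1]$ with $\psi(0) = \psi'(1) = 0$, and solve the backward problem $\partial_s\Psi + (D/4\ga)\partial_u^2\Psi = 0$ on $[0,t_*]$ with $\Psi(t_*,\cdot) = \psi$ and boundary conditions $\Psi(s,0) = \partial_u\Psi(s,1) = 0$; the same spectral construction produces $\Psi\in C^\infty([0,t_*]\times[0,1])$, with $\Psi(s,\cdot)$ admissible for every $s$. To insert this time-dependent test function into the weak formulation I would telescope over a partition $0 = s_0<\dots<s_N = t_*$, writing
\begin{equation*}
\langle\mu(t_*),\psi\rangle = \sum_k\langle\mu(s_{k+1})-\mu(s_k),\Psi(s_{k+1},\cdot)\rangle + \sum_k\langle\mu(s_k),\Psi(s_{k+1},\cdot)-\Psi(s_k,\cdot)\rangle,
\end{equation*}
substituting the weak formulation into the first sum and the backward equation into the second. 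In the limit $\max_k|s_{k+1}-s_k|\to 0$ the two sums converge to $\pm(D/4\ga)\int_0^{t_*}\langle\mu(s),\partial_u^2\Psi(s,\cdot)\rangle\dd s$ and cancel, giving $\langle\mu(t_*),\psi\rangle = 0$. Since such $\psi$ are dense in $\{f\in C[0,1]:f(0)=0\}$, this forces $\mu(t_*)=0$ in the sense detected by the weak formulation. The main obstacle is precisely the rigorous passage to the limit in this Riemann sum, which I would control by dominated convergence using the uniform $\mathcal M_{\rm fin}([0,1])$-bound on $\mu(s)$ inherited from $C([0,\infty);\mathcal M_{\rm fin}([0,1]))$ together with the uniform smoothness of $\Psi$ and $\partial_u^2\Psi$ on $[0,t_*]\times[0,1]$.
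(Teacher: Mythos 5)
Your uniqueness argument is, at its core, the same duality method the paper uses: test the difference of two solutions against the solution of the backward heat equation with the Dirichlet--Neumann boundary conditions. Where you differ is in the supporting details, mostly to your credit. The paper simply asserts that the time-independent weak formulation implies the identity for time-dependent test functions (its passage from \eqref{eq:5wu} to \eqref{eq:5wt}); your telescoping Riemann-sum argument is exactly the proof of that step, and the dominated-convergence control you describe (uniform total-variation bound on $\mu(s)$ from weak continuity, uniform continuity of $\partial_u^2\Psi$) is the right way to justify it. You also supply an existence proof via the spectral decomposition in the basis $\sqrt2\sin((n+\tfrac12)\pi u)$ after subtracting the affine steady profile $V$, whereas the paper proves only uniqueness (existence being furnished by the hydrodynamic limit itself); the paper instead invokes \cite[Corollary 5.3.2]{friedman} for the backward problem, which spares it the compatibility conditions your claim $\Psi\in C^\infty([0,t_*]\times[0,1])$ implicitly requires of $\psi$ (you should either restrict to finite linear combinations of the eigenfunctions, which are dense enough, or cite a parabolic regularity result as the paper does).

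There is, however, one genuine loose end in your final step. The class $\{\psi\in C^\infty[0,1]:\psi(0)=\psi'(1)=0\}$ is dense only in $\{f\in C[0,1]: f(0)=0\}$, so your conclusion $\langle\mu(t_*),\psi\rangle=0$ shows that $\mu(t_*)$ vanishes on $(0,1]$ but does not exclude $\mu(t_*)=c\,\delta_0$; your own phrase ``in the sense detected by the weak formulation'' concedes this, but the theorem asserts uniqueness in $\mathcal M_{\rm fin}([0,1])$, atoms at the Dirichlet endpoint included. The paper closes this by one further use of the weak formulation: having established $\int\varphi_0\,\bar T(t,\dd u)=0$ for all admissible $\varphi_0$ (its \eqref{011202-23}), it chooses $\varphi_0$ with $\varphi_0(0)=\varphi_0'(1)=0$ and $\varphi_0''=\psi$ for an \emph{arbitrary} $\psi\in C[0,1]$ --- note $\psi$ is unconstrained at $u=0$ --- and reads off $\int_0^t\dd s\int_0^1\psi\,\bar T(s,\dd u)=0$, which kills the whole measure. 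In your framework the same repair works: if $\mu(t)=c(t)\delta_0$, insert it into \eqref{eq:5w} with a test function satisfying $\varphi(0)=\varphi'(1)=0$ but $\varphi''(0)\neq0$ (e.g. $\varphi(u)=u^2-2u$) to force $\int_0^tc(s)\,\dd s=0$ for all $t$, hence $c\equiv0$. The same endpoint subtlety touches your existence claim: for an initial measure with an atom at $u=0$ the spectral solution drops that mass instantly, so continuity at $t=0$ in $\mathcal M_{\rm fin}([0,1])$ fails when tested against functions not vanishing at $0$; this should at least be flagged, though it does not affect the application, where $T_0$ arises as a limit of energy profiles.
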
}

\begin{proof}
  {
   Let $\bar T(s,du)$ be the signed measure given by
  the difference of two solutions with the same initial and boundary data.
  It satisfies the equation
   \begin{equation}
    \label{eq:5wu}
    \begin{split}
      \int_0^1\varphi(u)\bar T(t,\dd u) =
      \frac{D}{4\ga}\int_0^t\dd s\int_0^1  \varphi''(u) \bar T(s,\dd u)
    \end{split}
  \end{equation}
  for any $\varphi\in C^2[0,1]$ such that $\varphi(0)=\varphi'(1)=0$.}

 {The above implies that also 
  \begin{equation}
    \label{eq:5wt}
    \begin{split}
      \int_0^1\varphi(t,u) \bar T(t,\dd u) 
      =\int_0^t\dd s\int_0^1  \Big(\partial_s\varphi(s,u)
      +\frac{D}{4\ga} \partial_{uu}^2\varphi(s,u)\Big)\bar T(s,\dd u)
    \end{split}
  \end{equation}
   for any $\varphi\in C^{1,2}([0,+\infty)\times [0,1])$,
   such that $\varphi(t,0)=\partial_u\varphi(t,1)=0$, $t\ge0$.
  Suppose now that   $\varphi_0\in C^1[0,1]$ satifies
  \begin{equation}
  \label{021202-23}
  \varphi_0(0) = \varphi_0'
  (1) =0
\end{equation}
  and  $\varphi(t,u)$ is the strong solution of
  \begin{equation}
    \label{eq:5bs}
    \begin{split}
      &\partial_s \varphi(s,u)  + \frac{D}{4\gamma} \partial_u^2 \varphi(s,u)=0, \quad u\in(0,1),\,s<t,\\
      &\varphi (s,0) =\partial_u \varphi(s,1) = 0,\quad s<t,\\
 &     \varphi(t,u)  = \varphi_0(u).
    \end{split}
  \end{equation}
 Such a solution exists and is unique, thanks to e.g. \cite[Corollary 5.3.2,
  p.147]{friedman}. It belongs to   $C^{1,2}((-\infty,t]\times [0,1])$. 
  We conclude that
  \begin{equation}
    \label{011202-23}
\int_0^1\varphi_0(u) \bar T(t,\dd u) =0
\end{equation}
for any $\varphi_0\in C^1[0,1]$ satifying \eqref{021202-23}.

Consider now an arbitrary $\psi\in C[0,1]$. Let $\varphi_0(u):=-u\int_u^1\psi(u')\dd
u'-\int_0^u\psi(u')\dd u'$. It satisfies \eqref{021202-23}
and $\varphi''_0(u)=\psi(u)$, thus 
$$
\int_0^t\dd s\int_0^1 \psi(u)  \bar T(s,\dd u) = 0
$$
which ends the proof of uniqueness.}
\end{proof}


 \section{Proof of Proposition \ref{cor021211-19}}
 \label{secF}

\subsection*{Proof of Proposition \ref{cor021211-19} in the general case}

{Denote by ${\cal P}_{s,t}$, $s<t$, the evolution family corresponding
to the transition probabilities of the Markov family generated by the
dynamics \eqref{eq:flip} and \eqref{eq:pbdf}. Let $\mu{\cal
  P}_{s,t}$  be the probability distribution obtained by transporting the
distribution $\mu$ at time $s$ by the random flow ${\cal S}_{s,t}$.
Using the calculation
performed in \cite[pp. 1232]{bo1} we conclude that the relative entropy
satisfies the following inequality
\begin{equation}
  \label{022501-23}
  \mathbf{H}_{n}(f_n(n^2 t)) - \mathbf{H}_{n}(f_n(0)) 
   \le \inf_{\psi}\int_0^{n^2 t} \dd s \int_{\Om_n}\frac{\Big( {\cal
       G}_{s}+{\cal G}_{s}^*\Big)\psi}{\psi}\dd \Big(\mu_0{\cal P}_{0,s}\Big),
  \end{equation}
  where $\dd\mu_0=f_n(0)\dd \nu_{T_-}$, $\cal G_t^*$ is the adjoint with respect to $\nu_{T_-}$
  and the infimum is taken over all smooth densities $\psi$,
  w.r.t. the Gaussian measure $\nu_{T_-}$, that are bounded away from $0$.
Arguing as in the proof of Proposition \ref{cor021211-19} in the
smooth initial data case, we conclude that for any $\psi$ under the
infimum  the right hand side of \eqref{022501-23}
is less than, or equal to
\begin{align*}
  \frac{1}{T_-} \int_0^{n^2 t} \dd s  \cF_n( s) \int_{\Om_n} p_n\dd \Big( \mu_0{\cal P}_{0,s}\Big)
       =-\frac{n}{T_-} J_n(t,\mu_0).
\end{align*}
From this point on the proof follows from an application of  \eqref{eq:ass0} and Theorem \ref{th4}.}\qed

\bibliographystyle{amsalpha}

\begin{thebibliography}{A}

   \bibitem{bbjko} G. Basile, C. Bernardin, M. Jara, T. Komorowski, S. Olla,
    \emph{Thermal conductivity in harmonic lattices with random
      collisions}, in ``Thermal transport in low dimensions: from
    statistical physics to nanoscale heat transfer'', S. Lepri ed.,
    ``Lecture Notes in Physics'' \textbf{921}, chapter 5,
    Springer 2016. https://doi.org/10.10007/978-3-319-29261-8-5.

 

  \bibitem{bkll11} C. Bernardin, V. Kannan, J.L.Lebowitz, J. Lukkarinen,
    \emph{Non-equilibrium Stationary States of harmonic Chains with Bulk Noises.}
    Eur. J. Phys. B, 84, 685, 2011.

  \bibitem{bkll12} C. Bernardin, V. Kannan, J.L.Lebowitz, J. Lukkarinen,
    \emph{Harmonic Systems with Bulk Noises}.
    Journal of Statistical Physics,146, 800, 2012. 

   \bibitem{bo05} C. Bernardin, S. Olla,
  \emph{Fourier law and fluctuations for a microscopic model of heat
      conduction}, {J. Stat. Phys.}, vol.118, nos.3/4, 271-289, 2005.
    https://doi.org/10.1007/s10955-005-7578-9
    
\bibitem{bo1} C. Bernardin, S. Olla,  {\em Transport Properties of a Chain
  of Anharmonic Oscillators with Random Flip of Velocities},  J. Stat
Phys (2011) 145:1224-1255. https://doi.org/10.1007/s10955-011-0385-6

\bibitem{bll00}    F.Bonetto, J.L.Lebowitz, and L.Rey-Bellet,
  Fourier’s law: A challenge to theorists, in Mathematical Physics 2000,
  A. Fokas, A. Grigoryan, T. Kibble, and B. Zegarlinski,
  eds. (Imperial College Press, London, 2000), pp. 128–150.
  
  \bibitem{bll} F. Bonetto, J. L. Lebowitz and J. Lukkarinen
    {\em Fourier’s Law for a Harmonic Crystal with Self-Consistent Stochastic Reservoirs.}
J. of Stat. Physics, Vol. 116,  2004.





  \bibitem{friedman}  A. Friedman, {\em Partial differential equations of parabolic type}. Prentice-Hall, Inc., Englewood Cliffs, N.J. 





   
 \bibitem{KL} Kipnis, C. and Landim, C. (1999).
   \emph{Scaling Limits of Interacting Particle Systems},
   Grundlehren Math. Wiss. 320. Berlin: Springer. MR1707314
   





    
  



  \bibitem{JKLA09} M. Johansson, G. Kopidakis, S. Lepri, and S. Aubry,
   Transmission thresholds in time-periodically driven non-linear
   disordered systems, EPL (Europhysics Letters) 86, 10009 (2009).
   
 \bibitem{GL02}  F. Geniet and J. Leon, Energy Transmission in the Forbidden Band Gap of a Nonlinear Chain, Phys. Rev. Lett. 89, 134102,  2002,
   https://doi.org/10.1103/PhysRevLett.89.134102

 \bibitem{kelley} Kelley, J. L. (1991), {\em General topology},
   Springer-Verlag, ISBN 978-0-387-90125-1.
 
  
\bibitem{kos3} T. Komorowski, S. Olla, M. Simon,
  \emph{Hydrodynamic limit for a chain with thermal
    and mechanical boundary forces.}
  Electronic Journal of Probability, 26, 1-49, 2021.
  https://doi.org/10.1214/21-EJP581



\bibitem{klo22} T. Komorowski, J. L. Lebowitz, S. Olla, {\em Heat flow in
    a periodically forced, thermostatted chain},
  Comm. Math. Physics, (2023)  https://doi.org/10.1007/s00220-023-04654-4

   \bibitem{klo22a} T. Komorowski, J. L., Lebowitz, S. Olla, {\em Heat flow in
   a periodically forced, thermostatted chain - with internet
   supplement}, Available at http://arxiv.org/abs/2205.03839

\bibitem{klos22} T. Komorowski, J. L., Lebowitz, S. Olla, M. Simon,
  {\em On the Conversion of Work into Heat: Microscopic Models and Macroscopic Equations}, (2022), https://doi.org/10.48550/arXiv.2212.00093


 \bibitem{ray}  Koushik Ray, Green’s function on lattices, available at
   {\tt https://arxiv.org/abs/1409.7806}

   
\bibitem{lukk} J. Lukkarinen, \emph{Thermalization in harmonic
    particle chains with velocity flips}, J. Stat. Phys. {\bf 155}(6)
  (2014) 1143--1177.




\bibitem{RLL67}   Rieder, Z., Lebowitz, J.L., Lieb, E.: Properties of
  harmonic crystal in a stationary non-equilibrium state. J. Math. Phys. 8, 1073–1078 (1967)


\bibitem{PBS22} Abhinav Prem, Vir B. Bulchandani, S. L. Sondhi,
  Dynamics and transport in the boundary--driven dissipative Klein-Gordon chain,
 {\tt https://arxiv.org/abs/2209.03977v1}, 2022.

  \end{thebibliography}

\end{document}